\title{Learning Large Electrical Loads via Flexible Contracts with Commitment}
\date{}
\newtheorem{theorem}{Theorem}[section]
\newtheorem{lemma}[theorem]{Lemma}
\newtheorem{claim}[theorem]{Claim}
\newtheorem{definition}[theorem]{Definition}
\newtheorem{assumption}[theorem]{Assumption}
\newtheorem{proposition}[theorem]{Proposition}
\newtheorem{corollary}[theorem]{Corollary}
\DeclareMathOperator{\E}{\mathbb{E}} 
\DeclareMathOperator*{\argmin}{arg\,min}
\algnewcommand{\LineComment}[1]{\State \(\triangleright\) #1}
\begin{document}
\author[1]{Pan Lai, Lingjie Duan, Xiaojun Lin \footnote{Pan Lai and Lingjie Duan are with the Engineering Systems and Design Pillar, Singapore University of Technology and Design, 487372, Singapore (e-mail: llaipann@gmail.com, lingjie\_duan@sutd.edu.sg), Xiaojun Lin is with School of Electrical and Computer Engineering, Purdue University, 465 Northwestern Ave, West Lafayette, IN 47907-2035, U.S.A. (linx@ece.purdue.edu).}}

\maketitle
\begin{abstract}
Large electricity customers (e.g., large data centers) can exhibit huge and variable electricity demands, which poses significant challenges for the electricity suppliers to plan for sufficient capacity. Thus, it is desirable to design incentive and coordination mechanisms between the customers and the supplier to lower the capacity cost. This paper proposes a novel scheme based on flexible contracts. Unlike existing demand-side management schemes in the literature, a flexible contract leads to information revelation. That is, a customer committing to a flexible contract reveals valuable information about its future demand to the supplier. Such information revelation allows the customers and the supplier to share the risk of future demand uncertainty. On the other hand, the customer will still retain its autonomy in operation. We address two key challenges for the design of optimal flexible contracts: i) the contract design is a non-convex optimization problem and is intractable for a large number of customer types, and ii) the design should be robust to unexpected or adverse responses of the customers, i.e., a customer facing more than one contract yielding the same benefit may choose the contract less favorable to the supplier. We address these challenges by proposing sub-optimal contracts of low computational complexity that can achieve a provable fraction of the performance gain under the global optimum. 




\end{abstract}
\section{Introduction}

Large electricity customers can exhibit huge and uncertain electricity demand, which poses new challenges to the electricity supplier. Some commercial and industrial entities consume a huge amount of electricity. For example, in 2014, Google data centers consumed 4.4 billion KWh of electricity, which is enough to power 366,903 US households \cite{Google-Data-Center}. Moreover, these large customers' demand can change dramatically over time due to the internal scheduling of their own business operations and workload. For example, the power usage of an IBM data center in the same hour can change by $50\%$ over different days of a week \cite{IBM-Data-Center}. Their local generation (through fossil-fuel and renewable sources) further adds variability to the net-demand seen by the supplier. The electricity supplier has to provision enough capacity so that it has adequate generating resources to meet the demand at all times. However, capacity is costly \cite{Kuser}. For example, in areas where ISOs (Independent System Operator) run capacity markets (e.g., NYISO, PJM) \cite{Capacity-market}, the suppliers have to shoulder significant capacity costs based on their peak demand. When a large fraction of the customer's load is highly variable and uncertain, the supplier has to incur a much higher cost to over-provision capacity in order to ensure that demand and supply can be balanced at all time. 

Generally, the customers have better knowledge of their workload schedule and electricity demands than the supplier. Thus, one promising way for the supplier to reduce the capacity cost is to overcome this information asymmetry. That is, the supplier can better predict the future capacity needs if it can learn useful information about the future demand of the customers. However, there is no systematic study in the literature to investigate how to motivate the customers to reveal their future demand information to the supplier. Without proper incentives, customers are unwilling to reveal their private information as they may lose the freedom to adjust their demands in the future. Thus, proper design of incentive mechanisms is crucial. In this work, we propose a new approach using \emph{flexible contracts}. A flexible contract sets a lower price of electricity and a commitment range for future demand ahead of time. We use the term ``flexible contract" to contrast with ``strict contract": the latter sets a discounted price for a \emph{specific} amount of future demand. Indeed, ``flexible contract" is more flexible than ``strict contract". For a customer that commits to a flexible contract, it will enjoy the price discount as long as its future net-demand is \emph{within the committed demand range}. At the same time, the supplier can better estimate the future demand from the information revealed by the flexible contract and better prepare the capacity. In this way, the contract leads to a win-win situation to both the supplier and the customers. 

Although there exist many demand-side management schemes in the literature, the above information revelation capability is unique to flexible contracts and its has the distinctive advantage to enable the customers and the supplier to share the risk of future demand uncertainty, while still allows the customers to retain their autonomy. In contrast, existing schemes often expose the risks of uncertainty mainly towards either the supplier or the customers, or forcibly intervene in the customers' internal operations. For example, at one extreme, the supplier may pass the risks to the customers by charging them with highly-dynamic prices (e.g., real-time pricing or critical peak pricing \cite{Roozbehani, Wang}) or very high price on peak demand (e.g., the peak-based pricing \cite{Zhao}). While these pricing schemes could reshape demand, it causes significant financial uncertainty to the customers' normal operations, especially when their own demand is uncertain. In comparison, our proposed flexible contracts allow the customers to accept an controlled amount of future risk in exchange for a better price. As we will show in Section \ref{sec-numerical-result}, our flexible contract can significantly lower capacity cost compared to the peak-based pricing. At the other extreme, static pricing schemes (e.g., the time-of-use pricing) do not provide a feedback loop for the supplier to learn the private operation decisions of the customers, and thus expose significant risks to the supplier. In comparison, our flexible contracts enable the supplier to learn and exploit the private workload information revealed by the customers. Another line of work argues for direct-load control (i.e., the supplier directly controls the energy consumption of the customers \cite{Hertzog}) or enforcing quotas on the energy consumption of the customers \cite{Duan}. These schemes are more intrusive to the customers' privacy and limit operation flexibility. In contrast, our flexible contracts still provide the customers with the freedom to reveal different levels of their private information as well as choose different demand variation ranges. Our idea of flexible contracts is also related to swing contracts \cite{LiW}. However, \cite{LiW} studies the transmission level, and the focus is on how an ISO can clear the market with offers specified by swing contracts. In contrast, we study the distribution level, and our focus is on how to motivate customers to truthfully pick the specified contract that reveals their true capacity need.

We summarize the key novelty and main contributions as follows. First, we propose the flexible contract as a novel approach to learn and exploit customers' private demand information by providing them with electricity price discounts as incentives. Our model is comprehensive because it captures customers' demand diversity both in the mean and variation, and further incorporates their demand elasticity. 

Second, we formulate the flexible contract design problem as a bi-level optimization problem to maximize the supplier's profit by taking into account the customers' preferred contract choices. The optimization problem, however, is non-convex and intractable for a large number of customer types. To address this challenge, we propose an approximate contract design to achieve at least $\frac{1}{2}$ of the maximum performance gain achieved by the optimal contract. 

Third, we take into account the situation where a customer may face more than one contract yielding the same benefit. For such situation, existing literature often assumes that the customer will always choose, from those with equal benefit, the contract most favorable to the supplier. Such an \emph{optimistic} assumption may be unrealistic. In contrast, we further study the optimal contract design in the \emph{pessimistic} setting where the customer will choose the contract least favorable to the supplier. Even in this setting, we design a robust contract to still achieve at least $\frac{1}{3}$ of the maximum possible profit gain.


Finally, we evaluate the empirical performances of our proposed flexible contracts and compare it to a typical pricing scheme, i.e., the peak-based pricing. The flexible contracts show significant performance gains thanks to information revelation. Specifically, our contract 
scheme not only increases the supplier's profit, but also reduces each customer's cost, thus achieves a win-win situation. 

We note that our proposed flexible contract scheme is a way to design pricing mechanism so that customers have an incentive to reveal their expected consumption range. There may be other ways to design such pricing mechanisms for information revelation. However, for any such pricing mechanisms (including our proposed flexible contract), there is the risk that the consumers may untruthly report their information. Thus, we will face the same technical challenge of how to design the pricing mechanisms under such a risk. Our key message is that, while the optimal contract can be difficult to design, approximate solutions can be found with low complexity and can yield good results. Our study thus reveals important and useful insights for other pricing mechanisms in the future, which will face similar difficulty.

\section{System Model and Problem Formulation}\label{sec-model}
We next present the models of customers and the supplier as well as the problem formulation for the flexible contract.  
\subsection{Customers' Demands and Costs}\label{subsec-variability-flexibility}
We first present the model for the customers with the crucial feature of information uncertainty. There are $N$ customers connecting to the supplier. We are interested in large customers (e.g., Google data center, IBM data center), the number of which is small in practice \cite{Smolaks}. These customers purchase electricity from the supplier to meet their net-demands. For a future time period (e.g., peak hours) of interest, we assume that each customer's mean usage can take values $m_1, m_2,\ldots, m_n$ \footnote{Our flexible contract design can be extended to a continuous distribution for the mean usage and still achieve a reasonable performance (See Section \ref{sec-extension}).} with probability $h(m_1),\ldots, h(m_n)$, respectively, where $\sum_{i=1}^{n}h(m_i)=1$. We refer to a customer as type-$m_i$ if its mean usage is $m_i, i=1,\ldots, n$. Without loss of generality, we assume that $m_1<m_2<\ldots<m_n$. To model information asymmetry, we assume that a customer knows its own mean demand and thus its type according to its scheduled workload (e.g., due to a data center's received computing jobs from its regular subscribers). However, \emph{the supplier only knows the distribution $h(\cdot)$, but not the exact type of a customer}. The actual net-demand of a type-$m_i$ customer can still deviate from its mean $m_i$ (e.g., unexpected task arrivals or cancellation in a data center). We assume that each customer has a maximum variation degree $\Delta\in [0,1]$, which is random according to the distribution $f(\Delta)$. We assume that, given $\Delta$, the (realized) demand $x$ of a type-$m_i$ customer is a random variable with probability density function $\rho(x)$ in the range $[m_i(1-\Delta), m_i(1+\Delta)]$. Again to model information asymmetry, we assume that the customer knows $\Delta$ but the supplier only knows the probability density function $f(\Delta)$. 

Our customer model can also incorporate demand elasticity. Facing the realized demand $x$, a customer can adjust the demand to $x'$ (e.g., it can reschedule or reject some computing tasks at some cost). If $x'< x$, i.e., the actual demand is cut down to $x'$, the customer incurs an elasticity cost $k(x-x')$ at unit cost (penalty) $k$. On the other hand, if $x'> x$, the cost can be trivial since the customer can just turn on more machines or servers.  

A customer's goal is to minimize its overall cost, including both the demand elasticity cost (only if $x'<x$) and the cost to purchase energy from the supplier. The latter depends on the pricing schemes chosen by the customer, as elaborated below.

\subsection{Flexible Contracts for Information Revelation}\label{subsec-contract}
Before proposing the flexible contracts, we first introduce the existing baseline pricing employed by the supplier. In the baseline pricing scheme, note that the supplier represents the utility, or LSE (Load Serving Entity), which buys electricity from generators and resells electricity to its customers. The supplier announces a fixed electricity price $p_0>0$ to all customers and a customer's cost $p_0x'$ is proportional to its actual usage $x'$. To avoid the trivial case that a customer does not purchase any electricity, we assume $p_0<k$, which implies that the customer will always decide $x'=x$ under the baseline scheme because using demand elasticity only incurs a higher cost.

With only the baseline pricing scheme, the supplier will have to face significant risk in capacity needs if the future demand of the customers is high. To address the issue, in addition to the baseline pricing scheme, the supplier also introduces flexible contracts to encourage customers to indirectly reveal their private information through their contract choices. Given that there are $n$ types of customers, it is enough for the supplier to set $n$ contract options, that is one for each type. The contract option designed for type-$m_i$ customers is denoted as $(p_i,\delta_i, \bar{p}_i)$, and it is centered at $m_i$. If a customer's actual demand is $x'$, its cost for buying $x'$ amount of electricity is given by 
\begin{equation*}\small
\vspace{-0.2cm} 
\bar{c}(x')=
  \begin{cases}
    m_i(1-\delta_i)p_i,\: \quad\quad\quad\quad\quad\quad\quad\quad\quad\text{if}\: x'<m_i(1-\delta_i)\\
    x'p_i,\:\quad\quad\quad\quad\quad\quad\quad\text{if}\: x' \in [m_i(1-\delta_i), m_i(1+\delta_i)] \\
	  x'\bar{p}_i+m_i(1+\delta_i)\big(p_i-\bar{p}_i\big),\:\quad\quad\quad\text{if}\: x' >m_i(1+\delta_i),\\
  \end{cases}
\end{equation*}
which is illustrated in Figure \ref{price-function-contract}. Here, $p_i<p_0$ is the discounted energy price if the real demand of a customer is in the commitment range $[m_i(1-\delta_i),m_i(1+\delta_i)]$, where $\delta_i\in [0,1]$, and $\bar{p}_i$ is the unit penalty (price) for extra demand beyond $m_i(1+\delta_i)$. If the realized demand is below the lower bound $m_i(1-\delta_i)$, it also incurs a cost that is equivalent to the cost of consuming $m_i(1-\delta_i)$ amount of electricity. 
\begin{figure}[]
\vspace{-0.1in}
\centering
\includegraphics[height=6cm,width=8cm]{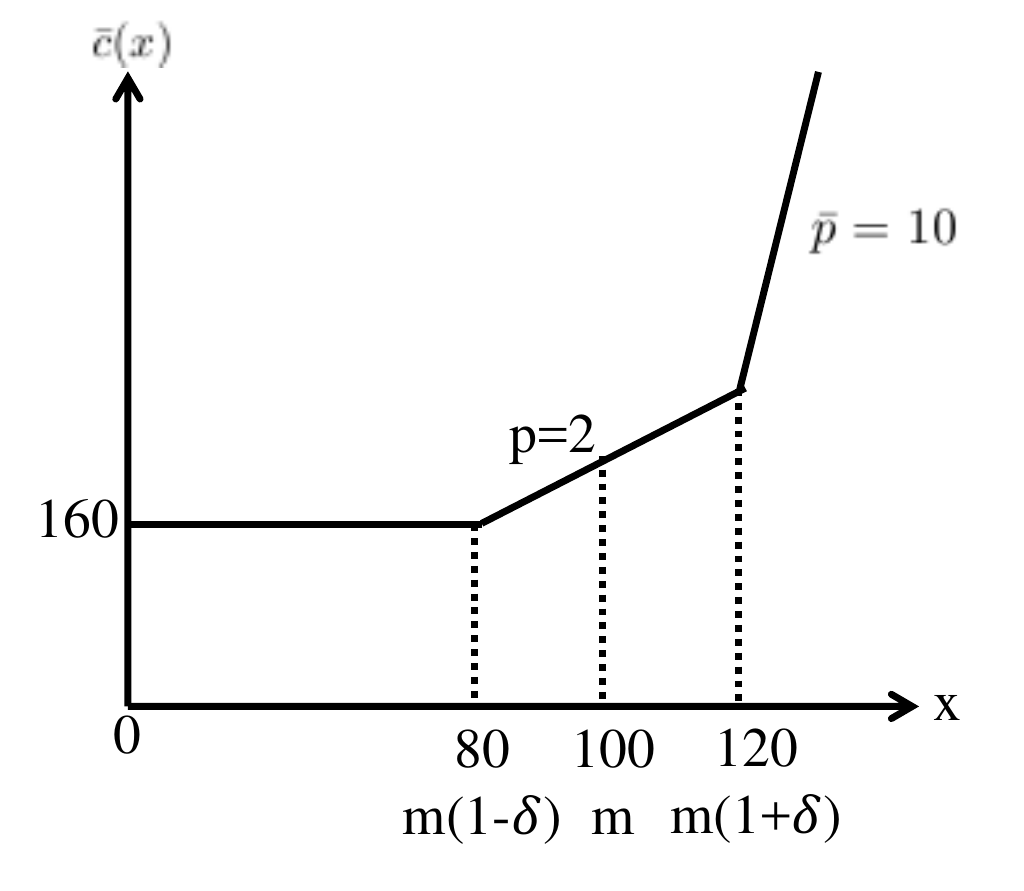}
\caption{Illustration of a customer's cost $\bar{c}(x')$ versus actual demand $x'$ under type-$m_i$ flexible contract $(p_i,\delta_i,\bar{p}_i)$.}
\label{price-function-contract}
\end{figure}

A customer has a lower cost if its actual demand is in $[m_i(1-\delta_i), m_i(1+\delta_i)]$, and otherwise the cost may be higher than the baseline cost. Thus, a type-$m_i$ customer will carefully choose the contract according to its mean usage $m_i$ and maximum variation $\Delta$. It should be noted that the customer may leverage its demand elasticity to cut down the overage demand from $x$ to $x'$, and this lowers its cost to $\bar{c}(x')+k\max(x-x',0)$. A customer's contract choice thus reveals its private information about internal demand, which is useful for the supplier to provision the capacity. 

\subsection{The Supplier's Profit}\label{subsec-supplier-cost}

The supplier's cost consists of two parts. First, there is an electricity generation cost to meet the actual total demand of the customers. We assume that the generation cost per unit electricity is $c_0$ as in \cite{Lu}, which should be smaller than $p_0$. (Otherwise, the supplier's profit is always negative.) Second, as the supplier must plan for enough capacity to meet the highest-possible future demand, there is a capacity cost reflecting the costs for capacity market payment. If the supplier needs $y$ units of capacity, in general the capacity cost $g(y)$ is increasing in $y$, since to provide more capacity resource, more investment on the generating stations is needed. 


In the baseline pricing scheme, the supplier does not know a customer's mean demand $m$ and variation $\Delta$. Thus, it has to prepare for the worst case when preparing capacity. As the maximum possible value of $m_i$ is $m_n$ and the maximum possible variation value of $\Delta$ is $100\%$, the capacity prepared for any type-$m_i$ customer is $2m_n$. In the proposed contract, not all type-$m_i$ customers will pick the contract option $i$. Indeed, a type-$m_i$ customer chooses the option $i$ only if the cost choosing the option is no larger than that choosing other options or the baseline pricing scheme (see Proposition \ref{pro-equilibrium-P1}). However, once a customer chooses the contract option $(p_i,\delta_i,\bar{p}_i)$ with a sufficiently high penalty price $\bar{p}_i>k$, the supplier only needs to provide capacity $m_i(1+\delta_i)$ instead of $2m_n$.


\subsubsection{A Motivating Example for Information Revelation}
We now use a motivating example to illustrate why the use of the above-proposed flexible contracts can facilitate information revelation and benefit both the supplier and customers. Assume that there are 10 customers. Each customer can fall into one of the two types: $m_1=b$ (low load), $m_2=3b$ (high load) for a future time period. Further, for simplicity assume that given a customer's type is $m_1$ or $m_2$, its additional variation $\Delta =0$. However, the supplier does not know the type of each customer nor its value of $\Delta$. Therefore, under the baseline scheme, the supplier has to provision the maximum capacity $2 m_2$ for every customer. The corresponding capacity cost is $g(20 m_2)$. At the same time, each type-$m_1$ customer pays $p_0 m_1$, and each type-$m_2$ customer pays $p_0 m_2$. Assume that each customer can be of type-$m_1$ (or type-$m_2$) with probability $h(m_1)$ (or $h(m_2)$). Thus, the total amount of energy consumed by the customers is $10h(m_1)\cdot m_1+10h(m_2)\cdot m_2$, and the total energy cost is $c_0\big(10h(m_1)\cdot m_1+10h(m_2)\cdot m_2\big)$. The profit of the supplier is 
\begin{equation}\label{eqn-profit-baseline}
(p_0-c_0)\big(10h(m_1)\cdot m_1+10h(m_2)\cdot m_2\big)-g(20m_2).
\end{equation}

Now, the supplier designs two contracts, one centered at $m_1$ and the other centered at $m_2$, both with parameters $p_i=0.9p_0$, $\delta_i = 0.1$, and $\bar{p}_i=1000 p_0$. Thus, each type-$m_1$ (or type-$m_2$) customer will choose the contract for $m_1$ (or $m_2$), and its payment is reduced to $0.9p_0m_1$ (or $0.9p_0 m_2$). Thus, all customers benefit from lower cost. Once the supplier sees the contract options that all customers pick, it only needs to provision a much lower capacity of $m_1(1+\delta_1)=1.1 m_1$ (or $1.1 m_2$) for each type-$m_1$ (or type-$m_2$) customer. In this way, its capacity cost is reduced to $g(10h(m_1)\cdot 1.1 m_1 + 10h(m_2)\cdot 1.1 m_2)$. In addition, the total energy cost remains at $c_0\big(10h(m_1)\cdot m_1+10h(m_2)\cdot m_2\big)$. Thus, the total profit of the supplier then becomes 
\begin{eqnarray}
&&(0.9p_0-c_0)\big(10h(m_1)\cdot m_1+10h(m_2)\cdot m_2\big)\nonumber\\
&&-g\big(10h(m_1)\cdot 1.1 m_1 + 10h(m_2)\cdot 1.1 m_2)\big).\label{eqn-profit-contract}
\end{eqnarray}

Figure \ref{profit} compares the supplier's profit under the flexible contracts in (\ref{eqn-profit-contract}) with that under the baseline pricing in (\ref{eqn-profit-baseline}) when $c_0 =0.2p_0, g(y)=\hat{c}y=0.1p_0$, as the fraction $h(m_1)$ of type-$m_1$ customer increases. Clearly, the profit of the supplier under the flexible contracts is higher than that under the baseline pricing. Indeed, as long as $\hat{c}>\frac{0.1\big(10h(m_1)m_1+10h(m_2)m_2\big)p_0}{20m_2-10h(m_1)\cdot 1.1m_1-10h(m_2)\cdot 1.1m_2}$, the supplier's profit under the flexible contracts is always higher than that under the baseline pricing (note that this is true independent of the value of unit energy-cost $c_0$). This condition is easy to satisfy, e.g., we only need $\hat{c}>0.0025p_0$ when $m_1=b$, $m_2=3b$, $h(m_1)=0.9$, $h(m_2)=0.1$. Overall, the above example shows that, thanks to information revelation, the flexible contracts scheme not only increases the supplier's profit, but also reduces the customer's cost compared to the baseline pricing.


\begin{figure}[]
\vspace{-0.1in}
\centering
\includegraphics[height=6cm,width=8cm]{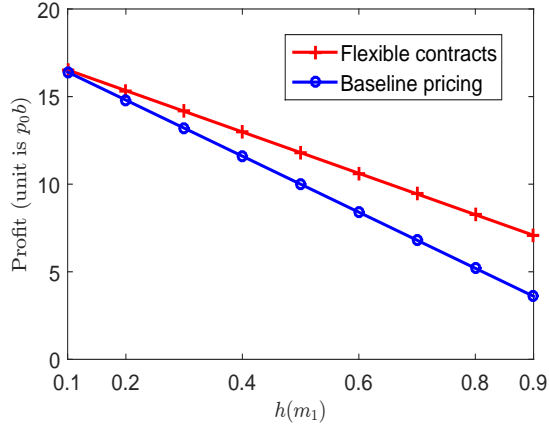}
\caption{The supplier's profit as a function of the probability $h(m_1)$ that a customer is type-$m_1$ customer under the flexible contracts and baseline pricing, respectively.}
\label{profit}
\end{figure}

\subsubsection{Mathematical Expression for the Supplier's Profit}
We now provide a mathematical expression for the supplier's profit (i.e., revenue minus cost) under the flexible contracts. For each customer $j=1,\ldots, N$, let $m(j)$ be its mean demand and $\Delta(j)$ be its variation level. Let $i(j)$ be the contract option picked by the \emph{j}-th customer (as in Section \ref{subsec-contract}), and we use $i(j)=0$ for the baseline pricing scheme. Let $e(j)$ be the customer's expected energy consumption depending on the chosen contract option, which is calculated by averaging all possible realized demand $x$. Let $\pi(j)$ be the capacity needed for customer $j$. With sufficiently high penalty price, i.e., $\bar{p}_{i(j)}>k$, we have $\pi(j)=m_{i(j)}(1+\delta_{i(j)})$ if the customer chooses contract option $i(j)\neq 0$; Otherwise, $\pi(j)=2m_n$ when choosing the baseline pricing scheme. Let $r(j)$ be the customer's expected energy payment to the supplier for buying energy, with the expectation taken over the real demand $x$. To sum up, the total profit to the supplier is given by
\begin{equation}\label{eqn-profit-complex}
P(\Phi)=\E\bigg(\sum_{j=1}^{N}r(j)-c_0\sum_{j=1}^{N}e(j)-g[\sum_{j=1}^{N}\pi(j)]\bigg)
\end{equation}
where the expectation is taken with respect to the distribution of each $m(j)$ and $\Delta(j)$. 

\subsection{Problem Formulation for Optimal Contract Design}\label{subsec-problem-formulation}
We now formulate the optimal contract design problem, where contract $(p_i,\delta_i,\bar{p}_i)$ is designed for type-$m_i$ customers. We use a Stackelberg game to study the interactions between the supplier and the $n$ types of customers. Let $\mathcal{I}=\{1,2,\ldots, n\}$ denote the set of all possible customer types. As the Stackelberg leader, the supplier first decides the contract options $\Phi=\{(p_i,\delta_i,\bar{p}_i), i\in\mathcal{I}\}$ in Stage I with the goal to maximize its expected profit. Note that the baseline price $p_0$ is not a decision variable for the supplier, as our focus is the contract design for any given $p_0$. After that, each customers chooses either the baseline pricing scheme or a particular flexible contract to minimize its cost in Stage II. 


We note that it is non-trivial to design the contract parameters $\delta_i$, $p_i$ and $\bar{p}_i$. If $\delta_i$ is too small, it imposes too much restriction on the customers' future demand, discouraging many customers to participate in the contracts and to reveal their demand information. If $\delta_i$ is too large, the supplier only learns coarse information on each customer's variation, and thus cannot benefit from capacity reduction. The choice of $p_i$ also has a tradeoff. While a low price stimulates high subscription rate from customers and helps information learning for saving capacity, it provides low energy revenue to the supplier. The choice of penalty price $\bar{p}_i$ has a similar tradeoff as $p_i$. Next, we introduce a contract formulation that captures the tradeoffs. According to the revelation principle \cite{Fudenberg}\cite{Mas-Colell}, it is sufficient to focus on incentive-compatible design of the contract options, which requires each customer to truthfully choose the contract option designed for its own type. Specifically, for a type-$m_i$ customer with variation $\Delta$, we denote its expected cost when choosing the baseline pricing as $\E[C_0(m_i,\Delta)]$ and denote its expected cost when choosing the type-$m_j$ contract as $\E[C_j(m_i,\Delta)]$. Then, the incentive compatibility requirement can be expressed as follows.
\begin{definition}
A contract $\Phi=\{(p_i,\delta_i,\bar{p_i}), i\in\mathcal{I}\}$ satisfies the incentive compatibility (IC) condition if 
\begin{equation}\label{eqn-mechanism-design}\small
\begin{gathered}
\text{for any two types}\: i\neq j,\: \text{and any}\: \Delta\in [0,1],\\
\min\{\E[C_i(m_i,\Delta)], m_ip_0\}\leq \min\{\E[C_j(m_i,\Delta)], m_ip_0\}.\\
\end{gathered}
\end{equation}
\end{definition}

In other words, if IC condition holds, a customer of type-$m_i$ will either pick its dedicated contract for type-$m_i$ or the baseline. Picking any other contract option $j\neq i$ will not yield lower cost for a type-$m_i$ customer.

Our goal is then 
\begin{equation}\label{problem-P1}
\max_{\Phi} P(\Phi)=\E\bigg(\sum_{j=1}^{N}r(j)-c_0\sum_{j=1}^{N}e(j)-g[\sum_{j=1}^{N}\pi(j)]\bigg),
\end{equation}
subject to (\ref{eqn-mechanism-design}) and
\begin{equation}\label{eqn-contract-constraint}
\vspace{-0.2cm} 
p_i\leq p_0, \text{and}\: \delta_i\in [0,1], \text{for any}\: i\in\mathcal{I}.
\end{equation}

We will refer to Problem (\ref{problem-P1}) as Problem $\mathbb{P}_1$ in the rest of the paper. To solve the optimal contract design problem, we face two key challenges. 
\begin{itemize}
\item \emph{Challenge I: Tractability.} The nonlinear IC condition in (\ref{eqn-mechanism-design}) is not a convex set and the objective is not concave (as will be shown in Section \ref{sec-detail}). Hence, the problem is intractable for a large number of customer types.  
\item \emph{Challenge II: Robust Contract Design.} Prior literature on contract design using a similar IC condition assumes the \emph{optimistic} setting: when a customer faces more than one contract options with the same cost, it will always pick its dedicated contract option, which is also the one most favorable to the supplier \cite{Mas-Colell}\cite{Gao}\cite{Navabi}. However, this assumption may not hold in practice considering customers' selfish or adverse behavior. 
\end{itemize}
We will present our solutions to deal with both challenges in Sections \ref{sec-challenge-I} and \ref{sec-challenge-II}.


For ease of reading, we first list the notations in Table I, which are used in this section as well as the next several sections.
\begin{table}\label{table-notation}
\caption{}
\centering
\begin{tabular}{|c|c|}
\hline
Notations & Definitions\\
\hline
$N$ & Number of customers \\
$n$ & Number of contract options\\
$m_i$ & Mean demand of a type-$m_i$ customer\\
$h(m_i)$&Probability that a customer is of type-$m_i$\\
$\Delta$ & Variation degree of a customer's demand\\
$f(\Delta)$ &Probability density function of $\Delta$\\
$\rho(x)$ &Probability density function of a customer's demand\\
$k$&Elasticity cost coefficient of customers\\
$p_0$&Baseline price\\
$p_i$ &Discounted price in the contract option $i$\\
$\delta_i$ &Contract option $i$'s variation range size\\
$\bar{p}_i$&Penalty in the contract option $i$\\
$c_0$&Unit electricity generation cost of supplier\\
$g(y)$ &Cost function of supplier for reserving capacity $y$\\
$\hat{c}$&Unit capacity cost of the supplier where $g(y)$ is linear\\
$\Phi$& A contract\\
$\Phi'$& Final contract in the optimistic scenario\\
$\Phi''$& Final contract in the pessimistic scenario\\
$\E[C_j(m_i,\Delta)]$&Type-$m_i$ customer's cost choosing contract option $j$\\
$P$& Supplier's profit under flexible contracts\\
$P_0$& Supplier's profit under baseline pricing scheme\\
$P^*$& Supplier's optimal profit under Problem $\mathbb{P}_1$\\
$P^H_i$& Supplier's profit under high penalty regime\\
$P^L_i$& Supplier's profit under low penalty regime\\
$\hat{P}^*$& Supplier's super-optimal profit\\
$\Delta_{th,i}$&Variation degree threshold \\ 
\hline
\end{tabular}
\end{table}%
\section{Detailed Formulation of Problem $\mathbb{P}_1$}\label{sec-detail}

In this section, we derive detailed expressions for Problem $\mathbb{P}_1$. Note that $\mathbb{P}_1$ is actually a bi-level optimization problem. We will use backward induction to first analyze a customer's option choice in Stage II, and then derive the supplier's profit in Stage I for optimal contract design.

For ease of exposition as well as analysis tractability, we further make the following Assumption \ref{assumption-variation} that each customer's demand and variation both follow a uniform distribution. Although this assumption may seem restrictive, it allows us to perform an in-depth study of the problem, and reveal new insights and non-trivial structure for the interaction between the customers and the supplier. Our results can be extended to other distributions for customer's demand and variation without major change of the revealed engineering insights (See Section \ref{sec-extension}). 
\begin{assumption}\label{assumption-variation}
Each customer's demand follows a uniform distribution, that is $\rho(x)=\frac{1}{2m_i\Delta}$ for $x\in [m_i(1-\Delta), m_i(1+\Delta)]$, and its variation $\Delta$ follows a uniform distribution in $[0,1]$.
\end{assumption}

For ease of analysis tractability, we also make the following assumption \ref{assumption-capacity-cost} that the supplier's capacity cost is linear with the amount of capacity prepared in advance, which enables us to reveal new insights. We believe that the insights under the restrictive assumption could potentially be generalized as well. 
\begin{assumption}\label{assumption-capacity-cost}
The supplier's capacity cost is linear with the amount of capacity prepared in advance, that is, $g(y)=\hat{c} y$.
\end{assumption}
Besides Assumption \ref{assumption-capacity-cost}, we assume $\hat{c}\leq\frac{1}{2}p_0$. Otherwise, the supplier's profit $P_0$ under the baseline pricing scheme is negative, which is impractical. To see this, note that 
\begin{equation}\label{eqn-P0}\small
\vspace{-0.1cm} 
P_0=\sum_{i=1}^{n}Nh(m_i)m_ip_0-2Nm_n\hat{c}-\sum_{i=1}^{n}Nh(m_i)c_0m_i,
\end{equation}
where we have used the assumption that each customer $j=1,\ldots, N$ will be of type $m_i$ with probability $h(m_i)$. If $m_i<m_n$ and $\hat{c}>\frac{1}{2}p_0$, we would have $P_0<Nm_np_0-2Nm_n\hat{c}-\sum_{i=1}^{n}Nh(m_i)c_0m_i<0$, which is impractical. Finally, since $k>p_0$ as discussed in Section \ref{subsec-contract}, we must have $k\geq 2\hat{c}$.

\subsection{Customers' Decisions in Stage II}\label{stage-II}

We first analyze a type-$m_i$ customer's expected cost $\E[C_i(m_i,\Delta)]$ when it chooses the dedicated contract option $i$. Its choice depends on its demand variation $\Delta$ and the contract's committed variation $\delta_i$. If $\Delta<\delta_i$, the customer's demand is always within the contract range and its expected cost is $\E[C_i(m_i,\Delta)]=m_ip_i$. On the other hand, if $\Delta>\delta_i$, its expected cost is larger because its random demand may exceed the commitment range. Recall that the future demand is $x\in [m_i(1-\Delta), m_i(1+\Delta))]$ with the probability density function $\rho(x)$. The customer's cost also depends on whether it employs the demand elasticity to change realized demand $x$ to actual demand $x'$. Specifically, the customer will decide $x'$ in the following way: 
\begin{itemize}
\item If $x\in [m_i(1+\delta_i), m_i(1+\Delta)]$, the customer will leverage elasticity to reduce demand to $m_i(1+\delta_i)$ only if $k<\bar{p}_i$. If $k\geq \bar{p}_i$, the customer will keep the original demand $x$ and undertake the contract penalty.
\item If $x\in [m_i(1-\delta_i), m_i(1+\delta_i)]$, the customer will not change realized demand but request $x$ from the supplier due to $p_i<k$.
\item When $x\in [m_i(1-\Delta), m_i(1-\delta_i)]$, the customer with insufficient demand will leverage elasticity to increase its demand to $x'=m_i(1-\delta_i)$ without incurring any additional cost\footnote{The customer (e.g., a data center) can easily activate more servers to keep demand at the promised lower bound $m_i(1-\delta_i)$. The incurred cost is trivial.}.
\end{itemize}
By considering all possible $x$, we can compute the expected cost for a type-$m_i$ customer with variability degree $\Delta>\delta_i$ if it chooses the contract option $i$. This expression depends on the relationship between $k$ and $\bar{p}_i$. If $\bar{p}_i>k$, we have 
\begin{equation}\label{eqn-cost-medium-multi-option}\small
\vspace{-0.1cm}
\E[C_i(m_i,\Delta)]=m_ip_i+\frac{m_ik}{4\Delta}(\Delta-\delta_i)^2.
\end{equation}
We can see that, the cost increases with demand variation $\Delta$ and elasticity cost coefficient $k$. As a special case, $E[C_i(m_i, \Delta)]=m_i p_i$ when $\Delta=\delta_i$. Otherwise, if $\bar{p}_i \leq k$,  we similarly have $\E[C_i(m_i,\Delta)]=m_ip_i+\frac{m_i\bar{p}_i}{4\Delta}(\Delta-\delta_i)^2$ by replacing $k$ in (\ref{eqn-cost-medium-multi-option}) by $\bar{p}_i$. 
Similarly, we can analyze the expected cost $\E[C_j(m_i,\Delta)]$ of a type-$m_i$ customer by choosing another contract option $j$, by considering the relationship between ranges $[m_i(1-\Delta), m_i(1+\Delta)]$ and $[m_j(1-\delta_j),m_j(1+\delta_j)]$. As there are many combination cases, here we skip the detailed analysis. 

Under the IC condition in (\ref{eqn-mechanism-design}), a customer will either choose its own contract type or the baseline pricing whichever's cost is lower. Assuming that the IC condition holds, we can derive the customers' optimal behavior as follows. As described earlier, we only focus on $k>p_0$.
\begin{proposition}\label{pro-equilibrium-P1}
At Stage II, observing the supplier's contract option $(p_i, \delta_i, \bar{p}_i)$ and the baseline price $p_0$, type-$m_i$ customers are partitioned into the following two groups, depending on their variation distribution of $\Delta\in [0,1]$:
\hspace{-2cm}
\begin{itemize}
\item Customers of low variation (i.e., $\Delta\in[0, \Delta_{th,i}]$) will subscribe to contract option $i$ to take advantage of price discount, where variation degree threshold $\Delta_{th,i}$ depends on the relationship between $k$ and $\bar{p}_i$ as follows.
\begin{itemize}
\item High penalty regime ($\bar{p}_i>k$): 
\begin{eqnarray}\small\label{eqn-threshold-high}
\!\!\!\!\!\!\!\!\!\Delta_{th,i}\!\!&=&\!\!\min\big(1,[\sqrt{(k\delta_i+2(p_0-p_i))^2-k^2\delta^2_i}\nonumber\\
             \!\!& &\!\!+k\delta_i+2(p_0-p_i)]/k\big),
\end{eqnarray}
which is independent of $\bar{p}_i$. 
\item Low penalty regime ($\bar{p}_i\leq k$): 
\begin{eqnarray}\small\label{eqn-threshold-low}
\!\!\!\!\!\!\!\!\!\Delta_{th,i}\!\!&=&\!\!\min\big(1,[\sqrt{(\bar{p}_i\delta_i+2(p_0-p_i))^2-\bar{p}^2_i\delta^2_i}\nonumber\\
                           \!\!& &\!\!+\bar{p}_i\delta_i+2(p_0-p_i)]/\bar{p}_i\big),
\end{eqnarray}
which is independent of $k$.
\end{itemize}
\item Customers of high variation (i.e., $\Delta\in (\Delta_{th,i},1]$) will subscribe to the baseline scheme to avoid the high over-usage penalty if $\bar{p}_i>k$) or high elasticity cost if $\bar{p}_i\leq k$. 
\end{itemize}
Moreover, $\Delta_{th,i}$ is nonincreasing in $k$ for $k>p_0$.
\end{proposition}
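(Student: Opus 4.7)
The plan is to use backward induction together with the IC condition to reduce each customer's decision to a binary comparison, and then solve a quadratic inequality in $\Delta$ whose larger root gives $\Delta_{th,i}$.

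First, I would invoke the IC condition (\ref{eqn-mechanism-design}). Under IC, for any type-$m_i$ customer and any $\Delta$, the minimum of $\E[C_j(m_i,\Delta)]$ over contract options $j \neq i$ costs at least as much as the minimum of $\E[C_i(m_i,\Delta)]$ and the baseline $m_ip_0$. Hence a type-$m_i$ customer's effective decision is binary: pick its own contract option $i$ whenever $\E[C_i(m_i,\Delta)] \leq m_ip_0$, otherwise pick the baseline. This reduces the problem to a one-dimensional threshold question in $\Delta$.

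Next, I would split into two subcases using the expressions for $\E[C_i(m_i,\Delta)]$ derived earlier. If $\Delta \leq \delta_i$, then $\E[C_i(m_i,\Delta)] = m_ip_i < m_ip_0$ (since $p_i < p_0$), so the contract strictly dominates the baseline and such customers always subscribe. If $\Delta > \delta_i$, then $\E[C_i(m_i,\Delta)] = m_ip_i + \frac{m_i\alpha}{4\Delta}(\Delta-\delta_i)^2$, where $\alpha = k$ in the high penalty regime ($\bar{p}_i > k$) and $\alpha = \bar{p}_i$ in the low penalty regime. Setting this $\leq m_ip_0$ and multiplying out yields the quadratic inequality
\begin{equation*}
\alpha\Delta^2 - \bigl[2\alpha\delta_i + 4(p_0-p_i)\bigr]\Delta + \alpha\delta_i^2 \leq 0.
\end{equation*}
Since the leading coefficient $\alpha > 0$ and the discriminant $(2\alpha\delta_i+4(p_0-p_i))^2 - 4\alpha^2\delta_i^2 = 16(p_0-p_i)(\alpha\delta_i + p_0 - p_i)$ is strictly positive, the inequality holds exactly on an interval between the two real roots. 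Evaluating the left-hand side at $\Delta = \delta_i$ yields $-4\delta_i(p_0-p_i) < 0$, confirming that $\delta_i$ lies between the roots and hence that the larger root is the sought upper endpoint. Solving by the quadratic formula and capping by $1$ gives exactly the formulas (\ref{eqn-threshold-high}) and (\ref{eqn-threshold-low}) for $\Delta_{th,i}$ in the two regimes, which I would present cleanly but without grinding every algebraic step.

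Finally, for the monotonicity claim, I would rewrite the high penalty formula by dividing numerator and denominator by $k$: setting $u \coloneqq 2(p_0-p_i)/k$, one gets $\Delta_{th,i} = \delta_i + u + \sqrt{u^2 + 2u\delta_i}$, which is manifestly increasing in $u$; since $u$ is decreasing in $k$ whenever $p_i < p_0$, $\Delta_{th,i}$ is decreasing in $k$ throughout the range $p_0 < k \leq \bar{p}_i$. In the low penalty regime $k > \bar{p}_i$, the formula is independent of $k$. A continuity check at $k = \bar{p}_i$ (where $\alpha$ switches from $k$ to $\bar{p}_i$ but both evaluate to the same value) then glues the two pieces together into a nonincreasing function of $k$ on $(p_0,\infty)$. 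I expect the main obstacle to be mostly bookkeeping—identifying the correct root of the quadratic and handling the cap at $\Delta = 1$ carefully—rather than any genuine analytic difficulty.
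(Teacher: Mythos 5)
Your argument follows the paper's route: for $\Delta\le\delta_i$ the contract weakly dominates the baseline, for $\Delta>\delta_i$ one compares $\E[C_i(m_i,\Delta)]$ to $m_ip_0$ and reduces to a quadratic in $\Delta$, and the larger root, capped at $1$, is $\Delta_{th,i}$. The only cosmetic difference is how the threshold behavior is verified: the paper checks that $q(\Delta)=m_ip_i+\frac{m_ik}{4\Delta}(\Delta-\delta_i)^2$ is monotone increasing for $\Delta\ge\delta_i$, while you argue via the sign of the quadratic at $\Delta=\delta_i$ and the positivity of its discriminant; the two are logically equivalent. Worth noting, though: the paper's appendix proof of this proposition stops after establishing the threshold formulas and never actually demonstrates the stated claim that $\Delta_{th,i}$ is nonincreasing in $k$. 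Your substitution $u=2(p_0-p_i)/k$, which rewrites the high-penalty expression as $\Delta_{th,i}=\min\bigl(1,\,\delta_i+u+\sqrt{u^2+2u\delta_i}\bigr)$ and makes the monotonicity transparent, together with the continuity glue at $k=\bar p_i$, supplies exactly the missing piece. In that sense your proposal is slightly more complete than the paper's own.
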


The intuition behind Proposition \ref{pro-equilibrium-P1} is as follows. If a type-$m_i$ customer has a small variation $\Delta$, a large fraction of its demand range is within the contract option $i$'s discounted price range so that it takes advantage of discounted price. Thus, its total expected cost when choosing option $i$ is smaller than that choosing baseline pricing. Otherwise, if it has a large variation, a large fraction of its demand range exceeds the upper bound of the contract option $i$'s discounted price range and incurs high penalty or elasticity cost. Thus, its total expected cost when choosing option $i$ is larger than that choosing baseline pricing. In addition, it is interesting to see that $\Delta_{th,i}$ is nonincreasing with $k$. The underlying reason is as follows. When $k$ is small, the contract option $i$ is in the high penalty regime (i.e., $k<\bar{p}_i$). In this penalty regime, $\Delta_{th,i}$ is decreasing in $k$. To see this, note that customers with medium variation (e.g., $\delta_i\leq\Delta\leq\Delta_{th,i}$) will choose option $i$. As $k$ increases, the expected cost of these customers increases. As a result, some of them will switch to the baseline pricing scheme, and $\Delta_{th,i}$ will decrease. After $k$ exceeds $\bar{p}_i$, the contract option $i$ enters the low penalty regime (i.e., $k\geq\bar{p}_i$). Thus, $\Delta_{th,i}$ remains constant as $k$ grows, since customers do not exercise costly elasticity and the expected cost of the customers with medium variation is independent of $k$.

\subsection{The supplier's Profit in Stage I}\label{stage-I}
Based on customers' decision in Section \ref{stage-II}, we now show that the IC condition can simplify the supplier's profit in (\ref{eqn-profit-complex}).
For each customer $j=1,\ldots, N$, it will be of type $m_i$ with probability $h(m_i)$. Recall Assumption \ref{assumption-variation} that $\Delta$ follows a uniform distribution in $[0,1]$. According to Proposition \ref{pro-equilibrium-P1}, under the IC condition, it will pick contract $i$ with probability $\Delta_{th,i}$ and the baseline scheme with probability $1-\Delta_{th,i}$. The supplier's (expected) profit from type-$m_i$ customers depends on whether contract option $i$ has high or low penalty. 

We first consider the supplier's (expected) profit $P^H_i$ from type-$m_i$ customers if contract option $i$ has high penalty. Note that, regardless of a type-$m_i$ customer's choice between contract option $i$ and baseline pricing scheme, its expected energy consumption is always $e(j)=m_i$. Thus, the energy cost to the supplier is always $c_0m_i$. Further, if it picks the option $i$, its maximum consumption is $\pi(j)=m_i(1+\delta_i)$ and its energy consumption payment is $r(j)=m_ip_i$. Otherwise, if it picks the baseline scheme, its maximum possible consumption is $\pi(j)=2m_n$ (because the supplier does not its type), and its energy payment is $r(j)=m_ip_0$. Thus, we have
\begin{eqnarray}\label{eqn-profit-high-penalty}
P^H_i&=&Nh(m_i)\bigg(\big(m_ip_i\Delta_{th,i}+m_ip_0(1-\Delta_{th,i})\big)-c_0m_i\nonumber\\
     \!\!\!\!& &\!\!\!\!\!-\hat{c}\big(m_i(1+\delta_i)\Delta_{th,i}+2m_n(1-\Delta_{th,i})\big)\bigg).
\end{eqnarray}

Similarly, we can model the supplier's (expected) profit $P^L_i$ from type-$m_i$ customers if contract option $i$ has low penalty. Compared to the above high penalty cost, if a type-$m_i$ customer $j$ picks the option $i$, its maximum consumption is $\pi(j)=m_i(1+\Delta_{th,i})$ instead of $m_i(1+\delta_i)$. Also, if its variation $\Delta\in (\delta_i,\Delta_{th,i}]$, its energy consumption payment is $m_ip_i+\frac{m_i\bar{p}_i}{4\Delta}(\Delta-\delta_i)^2$ instead of $m_ip_i$, and its energy consumption is $m_i\big(\frac{1}{4}\Delta+\frac{\delta^2_i}{4\Delta}+1-\frac{1}{2}\delta_i\big)$ instead of $m_i$. Thus, we have
\begin{eqnarray*}
P^L_i\!\!\!\!&=&\!\!\!\!Nh(m_i)\bigg(\int_{0}^{\delta_i}m_ip_id\Delta+\int_{\delta_i}^{\Delta_{th,i}}\big (m_ip_i\nonumber\\
  & &\!\!\!\!+\frac{m_i\bar{p}_i}{4\Delta}(\Delta-\delta_i)^2\big)d\Delta+(1-\Delta_{th,i})m_ip_0\nonumber\\
  & &\!\!\!\!-\hat{c} \big (m_i(1+\Delta_{th,i})\Delta_{th,i}+2m_n(1-\Delta_{th,i}))-C_e(m_i)\bigg),
\end{eqnarray*}
where 
\begin{eqnarray}
C_e(m_i) \!\!\!\!&=&\!\!\!\! m_ic_0\big (\delta_i+1-\Delta_{th,i}+\frac{1}{8}(\Delta^2_{th,i}-\delta^2_i)\nonumber\\
         & &\!\!\!\! +\frac{\delta^2_i}{4}\ln\frac{\Delta_{th,i}}{\delta_i}+(1-\frac{1}{2}\delta_i)(\Delta_{th,i}-\delta_i)\big).
\end{eqnarray}

Thus, $P(\Phi)$ in (\ref{eqn-profit-complex}) can be rewritten as  

\begin{equation}\label{eqn-profit-P1}\small
P(\Phi)=\sum_{i\in \mathcal{I}^H}P^H_i+\sum_{i\in \mathcal{I}^L}P^L_i,
\end{equation}
where $\mathcal{I}^H$ is the set of contract options with high penalty, and $\mathcal{I}^L$ is the set of contract options with low penalty.

Thus, Problem $\mathbb{P}_1$ is to maximize $P(\Phi)$ in (\ref{eqn-profit-P1}) subject to (\ref{eqn-mechanism-design}) and (\ref{eqn-contract-constraint}).


Proposition \ref{pro-equilibrium-P1} narrows the supplier's attentions to only those customers with small $\Delta\in [0,\Delta_{th,i}]$. By substituting (\ref{eqn-cost-medium-multi-option}) and the expressions of $\E[C_j(m_i,\Delta)]$ to the IC constraints, we can find that the constraints are non-convex. To see this, consider $\E[C_j(m_i,\Delta)]=m_j(1-\delta_j)p_j$ when $m_i(1+\Delta)<m_j(1-\delta_j)$ as an illustrative example. It involves the product term $(1-\delta_j) p_j$, which is neither convex nor concave in decision variables $\delta_j$ and $p_j$. In addition, the objective is not concave since it involves the product term $p_i\Delta_{th,i}$ in (\ref{eqn-profit-high-penalty}), which is not jointly concave in decision variables $p_i$ and $\Delta_{th,i}$. As a result, the optimal contract-design problem (\ref{problem-P1}) becomes intractable especially when the number of customer types $n$ is large. In the next section, we will show how to overcome this analysis difficulty and develop approximate solution with provable performance guarantees. 
   
 
\section{Approach for Solving $\mathbb{P}_1$'s Challenge I}\label{sec-challenge-I}

To address Challenge I for solving Problem $\mathbb{P}_1$, we define a new Problem $\mathbb{P}_2$. Problem $\mathbb{P}_2$ has the same objective as Problem $\mathbb{P}_1$. Its constraint on the optimization variables $(p_i,\delta_i,\bar{p}_i, \Delta_{th,i})$ in (\ref{eqn-mechanism-design}) is reduced to the set 

\begin{eqnarray}\label{eqn-R-hat}\small
\hat{R}&=&\!\!\{\forall i\in\mathcal{I}, (p_i,\delta_i, \bar{p}_i,\Delta_{th,i})|\forall i\in\mathcal{I}, p_i=p_0,\nonumber\\
       & & 0\leq\delta_i=\Delta_{th,i}\leq 1, \bar{p}_i>k\}.
\end{eqnarray}

The intuition behind choosing the smaller set $\hat{R}$ is as follows. First, it sets a high penalty, i.e., $\bar{p}_i>k$ to motivate customers to use their flexibility to reduce load, which saves both their own costs and the supplier's capacity cost. Further, we hypothesize that the contract price should not be significantly lower than the baseline price. Otherwise, the supplier will lose a significant amount of revenue in these flexible contracts. Assuming $p_i \approx p_0$, we have $\delta_i \approx \Delta_{th,i}$ by the equation (\ref{eqn-threshold-high}). The set $\hat{R}$ essentially looks at the case when the above two approximations exactly hold. This scenario of restricting to $\hat{R}$ is useful for the following reasons. To see this, first, under $\hat{R}$, each type-$m_i$ customer's payment is always $m_ip_0$, no matter it chooses the baseline or the flexible contract. As a result, the supplier's total revenue is also at the maximum. Second, under this restricted scenario, the supplier can still significantly save the capacity cost, because those customers with $\Delta\leq\delta_i$ will choose the contract option. Due to the above reasons, we expect that the solution to Problem $\mathbb{P}_2$ will produce a reasonable approximate solution to Problem $\mathbb{P}_1$. 

However, readers will immediately notice that, in set $\hat{R}$, for customer with small $\Delta\leq\delta_i$, choosing its dedicated contract option will produce exactly the same cost as choosing the baseline. Furthermore, it is also possible that the costs to a customer are the same across multiple contract options (i.e., all equal to $m_ip_0$), especially when the mean of these options are close to each other. An issue that immediately arises is why the customer would choose its dedicated contract option in the first place. This question will be the key issue for the next Section \ref{sec-challenge-II}. For this current section, we focus on the easier case that, as long as the IC condition holds, the customer with small $\Delta\leq\delta_i$ will choose its dedicated contract option even when its cost is the same as the baseline or under other contract options. We will then uncover some important structures of the solution that will also be useful later on. 

\begin{proposition}\label{pro-opt-P2}
In Problem $\mathbb{P}_2$, the supplier's optimal contract design $\Phi'=\{(p_i,\delta_i,\bar{p}_i), i\in\mathcal{I}|p_i=p_0, \delta_i=\Delta_{th,i},i\in\mathcal{I}\}$ is of a simple form as follows, depending on the diversity of customers' types and is incentive compatible:
\begin{itemize}
\item If type $m_i$ is close to $m_n$ (i.e., $\frac{m_n}{m_i}\leq\frac{3}{2}$), the optimal contract option $i$ for this type is $(p_0, \frac{m_n}{m_i}-\frac{1}{2}, \bar{p}_i>k)$;
\item If type $m_i$ is not close to type-$m_n$ (i.e., $\frac{m_n}{m_i}>\frac{3}{2}$), the optimal contract option $i$ is $(p_0, 1,\bar{p}_i>k)$. 
\end{itemize}
\end{proposition}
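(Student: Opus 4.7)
The plan is to exploit the simplifying structure of $\hat{R}$: with $p_i = p_0$, every unit of a customer's realized demand is billed at the baseline price whenever it falls in the commitment region, and the customer defaults to the baseline otherwise. I would first observe that under $\hat{R}$, each type-$m_i$ customer's expected payment to the supplier equals exactly $m_ip_0$ and its expected consumption equals exactly $m_i$, regardless of whether its variation $\Delta$ lies in $[0,\delta_i]$ (where it picks contract $i$) or in $(\delta_i,1]$ (where it picks baseline). In the contract region, the demand distribution lies entirely inside $[m_i(1-\delta_i),m_i(1+\delta_i)]$, so every unit is billed at $p_0$ with no elasticity exercised; in the baseline region, billing is $p_0$ per unit by definition. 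Substituting into (\ref{eqn-profit-high-penalty}) collapses the revenue and energy-cost contributions into the constant $Nh(m_i)m_i(p_0-c_0)$, leaving only the capacity term $-\hat{c}\bigl(\delta_i m_i(1+\delta_i)+2m_n(1-\delta_i)\bigr)$ as a function of $\delta_i$.

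Second, since only $\delta_i$ now appears in $P^H_i$, the maximization decouples across types and, for each $i$, reduces to minimizing the convex quadratic
\[
f_i(\delta_i) \;=\; m_i\delta_i(1+\delta_i)+2m_n(1-\delta_i) \;=\; m_i\delta_i^2+(m_i-2m_n)\delta_i+2m_n
\]
over $\delta_i\in[0,1]$. Setting $f_i'(\delta_i)=2m_i\delta_i+m_i-2m_n=0$ yields the unique stationary point $\delta_i^\ast = m_n/m_i-1/2$, which is automatically nonnegative because $m_n\geq m_i$. Strict convexity then produces the dichotomy in the statement: if $m_n/m_i\leq 3/2$ then $\delta_i^\ast\in[0,1]$ is the minimizer, whereas if $m_n/m_i>3/2$ the stationary point exceeds $1$ and the minimum on $[0,1]$ is attained at the right endpoint $\delta_i=1$.

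Third, I would verify incentive compatibility. Under $\hat{R}$, $\E[C_i(m_i,\Delta)]$ equals $m_ip_0$ for $\Delta\leq\delta_i$ and strictly exceeds $m_ip_0$ for $\Delta>\delta_i$ (by the positive elasticity term in (\ref{eqn-cost-medium-multi-option})), so $\min\{\E[C_i(m_i,\Delta)],m_ip_0\}=m_ip_0$. The IC condition (\ref{eqn-mechanism-design}) therefore reduces to showing $\E[C_j(m_i,\Delta)]\geq m_ip_0$ for every $j\neq i$ and every $\Delta$. I would establish this by a pointwise comparison on the realized demand $x$: on contract $j$'s commitment interval the bill equals $xp_0$ (since $p_j=p_0$); below that interval the floor $m_j(1-\delta_j)p_0$ strictly exceeds $xp_0$; above it, whether the customer exercises elasticity or absorbs the penalty, the excess over $xp_0$ is either $(x-m_j(1+\delta_j))(k-p_0)$ or $(x-m_j(1+\delta_j))(\bar{p}_j-p_0)$, both strictly positive by $\bar{p}_j>k>p_0$. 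Integrating against $\rho(\cdot)$ gives $\E[C_j(m_i,\Delta)]\geq \E[x]\,p_0 = m_ip_0$.

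The main obstacle is not conceptual — the decoupled quadratic minimization is elementary — but rather the bookkeeping in the IC step, where the demand interval $[m_i(1-\Delta),m_i(1+\Delta)]$ and contract $j$'s commitment interval can overlap in several configurations depending on whether $m_j<m_i$ or $m_j>m_i$ and on the relative sizes of $\Delta$ and $\delta_j$. The pointwise-in-$x$ domination $\bar{c}_j(x')\geq xp_0$ circumvents the need to enumerate these cases and is what keeps the argument short.
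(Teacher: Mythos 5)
Your proposal is correct, and the core step — restricting to $\hat{R}$, observing that revenue and energy cost become the constant $Nh(m_i)m_i(p_0-c_0)$ per type, and then minimizing the one-variable quadratic $m_i\delta_i^2+(m_i-2m_n)\delta_i+2m_n$ over $\delta_i\in[0,1]$ to obtain the stationary point $m_n/m_i - 1/2$ and the threshold at $3/2$ — is exactly the paper's argument. Where you go beyond the paper's stated proof is the third step: the paper's proof of Proposition~\ref{pro-opt-P2} solves the quadratic and stops, leaving the IC claim implicit (the fact that $\E[C_j(m_i,\Delta)]\geq m_ip_j$ does appear in the paper, but only as Claim~\ref{claim-monotonicity} in the appendix, established there by exhaustive enumeration of the six overlap configurations in Lemma~\ref{lem-cost-wrong}). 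Your pointwise-in-$x$ domination argument — comparing $\bar{c}_j(x')+k\max(x-x',0)$ against $xp_0$ on each of the three regions of contract $j$'s price function and integrating against $\rho$ — reaches the same conclusion $\E[C_j(m_i,\Delta)]\geq m_ip_0$ with far less case bookkeeping, because it never needs to distinguish how $[m_i(1-\Delta),m_i(1+\Delta)]$ straddles $[m_j(1-\delta_j),m_j(1+\delta_j)]$. It is a cleaner and more self-contained route to the IC conclusion; the paper's enumeration is doing extra work (monotonicity in $\Delta$, strictness above $\Delta_{i,j}$) that is needed later for the pessimistic analysis but is overkill for Proposition~\ref{pro-opt-P2} alone.
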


The intuition behind the optimal contract design in Proposition \ref{pro-opt-P2} is as follows. Recall that in the set $\hat{R}$, the payment of each customer to the supplier is always $m_i p_i = m_i p_0$, regardless of whether it chooses the contract or the baseline. Hence, the benefit to the supplier mainly comes from the reduced capacity cost, which can be studied separately for each customer. Under the baseline pricing scheme, the capacity that the supplier has to provision for each customer is always $2m_n$. Thus, if $m_i$ is much lower than $m_n$, enticing the customers to use the contract will lead to much lower capacity requirement, which is $2 m_i$ if $\delta_i = 1$. As a result, it is beneficial to set $\delta_i = 1$ so that all type-$m_i$ customers are willing to pick the contract. However, if $m_i$ is relatively close to $m_n$, the supplier faces the following tradeoff: if $\delta_i$ is high, the capacity reduction for each customer is small; if $\delta_i$ is low, very few customers will choose the contract option due to strict commitment. Neither extreme is good for the supplier to reduce the provisioned capacity. Hence, the value of $\delta_{i}$ needs to be optimized. This optimal $\delta_i$ turns out to be $\frac{m_n}{m_i} - \frac{1}{2}$, which balances the above tradeoff. 

\emph{Sketch of Proof:} To prove Proposition \ref{pro-opt-P2}, note that the profit from each type of customers can be calculated separately in Problem $\mathbb{P}_2$. To solve the optimal contract in Problem $\mathbb{P}_2$, it suffices to maximize $P^H_i$ in (\ref{eqn-profit-high-penalty}) subject to $p_i=p_0, 0\leq \Delta_{th,i}=\delta_i\leq 1$. To do this, we replace $p_i$ by $p_0$, and $\Delta_{th,i}$ by $\delta_i$ in (\ref{eqn-profit-high-penalty}). Through this simplification, $P^H_i$ becomes a quadratic function with only one variable $\delta_i$. We can then optimize the choice of $\delta_i$ easily. 

\emph{Remark:} We see that the approximate contract in Proposition \ref{pro-opt-P2} is independent of the distribution of $h(m_i)$. This independence is due to the restriction of the approximate contract to $\hat{R}$ in Problem $\mathbb{P}_2$. To see this, note that for any contract under $\hat{R}$, each contract option is independent with the other options. Thus, to solve the approximate contract, it suffices to solve each option separately in the optimal contract for the new Problem $\mathbb{P}_2$ where each option is independent of the distribution of $h(m_i)$. In contrast, note that for any contract satisfying the IC condition (\ref{eqn-mechanism-design}) of the optimal contract design problem $\mathbb{P}_1$, each contract option is restricted by other options. Thus, the optimal contract depends on the distribution of $h(m_i)$.
To evaluate the performance of approximate contract $\Phi'$ in Proposition \ref{pro-opt-P2}, we define the following gain ratio $\frac{P(X)-P_0}{P^*-P_0}$ for any contract $X$. Here, $P_0$ is the supplier's traditional profit under the baseline pricing scheme in (\ref{eqn-P0}), $P(X)$ is the supplier's profit under contract design $X$, and $P^*$ is the supplier's optimal profit in the original Problem $\mathbb{P}_1$. The gain ratio tells us how closely the approximate solution can approach the performance gain of the optimal contract over the baseline scheme. 

\begin{proposition}\label{pro-bound-P1}\label{pro-bound-P1}
The gain ratio of contract $\Phi'$ in Proposition \ref{pro-opt-P2} is at least $\frac{1}{2}$. 
\end{proposition}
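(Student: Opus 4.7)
My plan is to upper-bound $P^*$ by a ``full-information'' super-optimal profit $\hat{P}^*$ and show that the approximate contract $\Phi'$ already captures at least half of the gain $\hat{P}^* - P_0$. I take $\hat{P}^*$ to be the profit the supplier would earn if it knew each customer's type $m_i$ and variation $\Delta$ in advance, charged the baseline price $m_i p_0$, and provisioned exactly the peak capacity $m_i(1+\Delta)$ with no customer elasticity invoked. Averaging $\Delta \sim U[0,1]$ gives
\begin{equation*}
\hat{P}^* - P_0 \;=\; N\hat{c}\sum_{i}h(m_i)\bigl(2m_n - \tfrac{3}{2}m_i\bigr),
\end{equation*}
a sum of independent per-type contributions. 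The proof then splits into (i) $P^* \leq \hat{P}^*$ and (ii) $P(\Phi') - P_0 \geq \tfrac{1}{2}(\hat{P}^* - P_0)$, each checkable termwise in the types.

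Part (ii) is direct from Proposition \ref{pro-opt-P2}, which gives $P(\Phi') - P_0 = N\hat{c}\sum_i h(m_i) G_i$ with $G_i = m_i(m_n/m_i - 1/2)^2$ when $m_n/m_i \leq 3/2$ and $G_i = 2(m_n - m_i)$ otherwise. Setting $r := m_n/m_i \geq 1$, the required per-type inequality $G_i \geq \tfrac{1}{2}(2m_n - \tfrac{3}{2}m_i)$ rearranges into $(r-1)^2 \geq 0$ (tight at $r=1$) when $r \leq 3/2$, and into $r \geq 5/4$ when $r > 3/2$; both hold.

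Part (i) is the main obstacle, because in $\mathbb{P}_1$ the supplier can in principle use customer elasticity to cap the per-subscriber capacity below the no-elasticity benchmark $m_i(1+\Delta)$ used in $\hat{P}^*$. I would handle it by first bounding $P^*$ termwise by a per-type sub-problem obtained by dropping cross-type IC constraints, and then bounding that sub-problem by the full-information quantity. Inverting the threshold formula (\ref{eqn-threshold-high}) gives $p_i = p_0 - k(\Delta_{th,i} - \delta_i)^2/(4\Delta_{th,i})$, so (\ref{eqn-profit-high-penalty}) reduces to $\hat{c}\bigl[a(2m_n - m_i(1+b)) - (km_i/(4\hat{c}))(a-b)^2\bigr]$ in $(a,b) := (\Delta_{th,i}, \delta_i)$, which is concave in $b$ with interior maximizer $b^\star = a(1 - 2\hat{c}/k) \geq 0$ (using $k \geq 2\hat{c}$). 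The resulting concave quadratic in $a$ has either an interior optimum or the corner $a = 1$; each case is dominated by $\hat{c}(2m_n - \tfrac{3}{2}m_i)$ precisely when $\hat{c}/k \leq 1/2$: the boundary case reduces directly to this inequality, while the interior case reduces to a convex quadratic in $r$ that is nonpositive at both endpoints of the admissible range $[1,\, 3/2 - \hat{c}/k]$. A parallel accounting rules out the low-penalty regime $\bar{p}_i \leq k$: switching from elasticity to penalty payments raises the per-subscriber capacity from $m_i(1+\delta_i)$ to $m_i(1+\Delta_{th,i})$, and under $k \geq 2\hat{c}$ the extra capacity-cost outweighs the extra penalty revenue. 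Assembling the pieces yields $P(\Phi') - P_0 \geq \tfrac{1}{2}(\hat{P}^* - P_0) \geq \tfrac{1}{2}(P^* - P_0)$, i.e., the gain ratio is at least $1/2$, with equality approached when a single dominant type ($r \to 1$) and tight elasticity ($k \to 2\hat{c}$) coincide.
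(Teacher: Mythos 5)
Your proof is correct, and it takes a route that is organized genuinely differently from the paper's, although the underlying algebra overlaps. The paper's appendix proof proceeds by (a) invoking Lemma~\ref{lem-super} to obtain the super-optimal profit after dropping the IC constraint, (b) proving that this super-optimal profit is \emph{monotone decreasing in $k$} and therefore bounded by its value at the extreme $k=2\hat{c}$, and (c) passing from the ratio of two weighted sums to the minimum of pointwise ratios via Claim~\ref{claim-support}, followed by a case analysis in $r=m_n/m_i$. You instead (a$'$) posit a ``full-information, no-elasticity'' benchmark $\hat{P}^*$ directly --- which, as you rightly flag, is not obviously an upper bound on $P^*$ because the contract mechanism can exploit customer elasticity --- and then (b$'$) establish it \emph{is} an upper bound by optimizing the per-type dropped-IC subproblem in $(\Delta_{th,i},\delta_i)$ after eliminating $p_i$ through the threshold identity, and showing that for \emph{every} admissible $k\ge 2\hat{c}$ the per-type value is dominated termwise by $\hat{c}(2m_n-\tfrac{3}{2}m_i)$; and (c$'$) prove the lower bound on the gain of $\Phi'$ by a \emph{termwise} inequality $G_i\ge \tfrac{1}{2}(2m_n-\tfrac{3}{2}m_i)$ (which reduces to $(r-1)^2\ge 0$ and $r\ge 5/4$) and then sum over types. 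Step (c$'$) is cleaner than the paper's: it avoids the ratio-of-weighted-sums lemma (Claim~\ref{claim-support}) entirely, replacing it with a direct comparison that trivially sums. Step (b$'$) replaces the paper's monotonicity-in-$k$ argument by the ``concave function is below the chord'' observation on the interior branch; both are elementary, but yours makes the dependence on the constraint $k\ge 2\hat{c}$ more transparent (it appears both in the corner bound $\hat{c}/k\le\tfrac{1}{2}$ and in $g(1)\ge 0$). The net result is the same numerical benchmark $N\hat{c}\sum_i h(m_i)(2m_n-\tfrac{3}{2}m_i)$ --- which happens to coincide with the paper's super-optimum evaluated at $k=2\hat{c}$, an identity your framing hides but the paper's makes explicit.

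One spot where your writeup is thinner than the paper's: the dismissal of the low-penalty regime $\bar{p}_i\le k$ in one sentence. The claim that the extra capacity cost from $m_i(1+\Delta_{th,i})$ versus $m_i(1+\delta_i)$ outweighs the extra penalty revenue is plausible but not obviously a one-liner, because the low-penalty regime also changes the expected energy cost $C_e(m_i)$, and one must verify the comparison holds uniformly over the admissible parameter region. The paper's Claim~\ref{claim-high-penalty-opt} does this by first majorizing the low-penalty objective (replacing $C_e(m_i)$ by $c_0 m_i$, which is valid since $C_e$ is increasing in $\Delta_{th,i}$), and then exhibiting a common point $p_i=p_0$, $\Delta_{th,i}=\delta_i$ where the two regimes coincide and is feasible for the high-penalty problem. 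You would want to fill in a comparable argument rather than rely on the heuristic ``accounting.''
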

Proposition \ref{pro-bound-P1} illustrates the good performance of $\Phi'$. The intuition behind this proposition is that, at solution $\Phi'$, the supplier collects the maximum amount of revenue. Although the supplier's profit at solution $\Phi'$ is lower than the optimal, the difference can be bounded. The proof uses similar ideas as the proof of Theorem \ref{thm-bound-P1-pess} in Section \ref{sec-challenge-II}. Since these ideas will be presented in Section \ref{sec-challenge-II}, we omit the details here.
\section{Approach for solving $\mathbb{P}_1$'s Challenge II}\label{sec-challenge-II}
We now turn to Challenge II, which is against the IC condition itself. Note that the IC condition implicitly assumes an optimistic scenario. That is, when the costs of more than one contract option (including baseline pricing option) are equal, the customer will pick the dedicated option designed by the supplier, which is usually most favorable to the supplier. Although this optimistic scenario is widely assumed in the mechanism design literature \cite{Mas-Colell}\cite{Gao}\cite{Navabi}, it may fail in practice. Indeed, the customer may pick the option least favorable to the supplier, which we refer to as the \emph{pessimistic} setting.

To illustrate this issue, consider the approximate contract $\Phi'$ in Proposition IV.1 designed for the optimistic scenario. For type-$m_i$ customer with small $\Delta\leq\delta_i$, choosing its dedicated contract option $i$ will produce exactly the same cost $m_ip_0$ as choosing the baseline. Under the optimistic scenario, it is assumed that the customer chooses option $i$, and the resulting capacity need is $m_i(1+\delta_i)$. However, in practice, the customer may choose the baseline, which results in a larger capacity need $2m_n$. Furthermore, it is also possible that the costs to a customer are the same across multiple contract options (i.e., all equal to $m_ip_0$), especially when the mean of these options are close to each other. Under the pessimistic setting, a type-$m_i$ customer may choose option $j>i$, which is dedicated for type-$m_j$ customers with larger mean demand. As a result, the capacity need is larger than that if the customer chooses its dedicated option $i$ under the optimistic scenario. Hence, in practice, the approximate contract $\Phi'$ may result in poor performance due to customers' selfish behaviors that are unfavorable for the supplier.

In this section, we will relax this strict IC condition and quantify the pessimistic or worst-case performance when a customer (facing more than one contract yielding the same benefit) may not pick the option that is the most favorable to the supplier.

Now we will present our approach to quantify the worst-case performance. We start from the contract $\Phi'$ to Problem $\mathbb{P}_2$ as described in Proposition \ref{pro-opt-P2}, but reduce all the contract prices from baseline price $p_0$ by $\epsilon>0$. This is to ensure that, for customers with $\Delta<\Delta_{th,i}$ given in (\ref{eqn-threshold-high}), choosing contracts are strictly better off than choosing the baseline scheme. For those customers with $\Delta=\Delta_{th,i}$, there is still ambiguity whether they will choose the contracts. However, they are of a probability measure of $0$ and hence do not affect the supplier's cost.
\begin{definition}\label{def-near-Shat}
In the pessimistic scenario, we define contract $\Phi''=\{(p_i,\delta_i,\bar{p}_i),i\in\mathcal{I}\}$ below, which depends on the diversity of customers' types:
\begin{itemize}
\item If type $m_i$ is close to type-$m_n$ (i.e., $\frac{m_n}{m_i}\leq\frac{3}{2}$), contract option $i$ is $(p_0-\epsilon, \frac{m_n}{m_i}-\frac{1}{2}, \bar{p}_i>k)$.
\item If type $m_i$ is much smaller than type-$m_n$ (i.e., $\frac{m_n}{m_i}>\frac{3}{2}$), contract option $i$ is $(p_0-\epsilon, 1,\bar{p}_i>k)$.
\end{itemize}
\end{definition}
Still, among the contract options $\Phi''$ with the identical price $p_0-\epsilon$, a customer may face the same cost when choosing between two different options and it may not choose the one preferred by the supplier. Different from the mechanism design literature, our approach allows such untruthful option selection that allows a customer to choose a contract option different from its type.

To evaluate the performance of $\Phi''$, we analyze the profit of the supplier under $\Phi''$ in the pessimistic setting. Consider a type-$m_i$ customer with variation $\Delta$. If $\Delta>\Delta_{th,i}$ with $\Delta_{th,i}$ given in (\ref{eqn-threshold-high}), it will choose the baseline pricing and the supplier's profit from such a customer is $s_0(m_i, \Delta)=m_ip_0-2m_n\hat{c}-c_0m_i$. On the other hand, if $\Delta<\Delta_{th,i}$, it will choose one of the contract options. If it chooses contract option $j$, let $s_j(m_i, \Delta)$ be the supplier's profit from a type-$m_i$ customer with variation $\Delta$. For example, if a type-$m_i$ customer with $\Delta\leq\Delta_{th,i}$ subscribes to option $i$, its contribution to the supplier's profit is $s_i(m_i, \Delta)=m_ip_i-\hat{c}m_i(1+\delta_i)-c_0m_i$. However, it is still possible that $\E[C_j(m_i,\Delta)]=\E[C_i(m_i,\Delta)]$ if the customer chooses another option $j$. Under the pessimistic setting, the customer chooses the contract option $i^*(m_i,\Delta)$ that results in the lowest profit to the supplier in the worst case. Thus, the supplier's (minimum) expected profit from all customers is
\begin{eqnarray}\small\label{eqn-profit-pess}
\vspace{-0.1cm}
\sum_{i=1}^{n}Nh(m_i)\big(\int_{0}^{\Delta_{th,i}}f(\Delta)s_{i^*(m_i,\Delta)}(m_i, \Delta)d\Delta\nonumber\\
+\int_{\Delta_{th,i}}^{1}f(\Delta)s_0(m_i, \Delta)d\Delta\big).
\end{eqnarray}

\vspace{-0.1cm}
In the following, we will use the objective (\ref{eqn-profit-pess}) for the supplier in the pessimistic setting instead of (\ref{eqn-profit-P1}).

To quantify the performance guarantee, we divide our analysis into three steps.
\subsection{Step 1}
We first determine an upper bound of the optimal profit $P^*$. This upper bound is needed as the optimal profit of Problem $\mathbb{P}_1$ is difficult to solve directly. We derive this upper bound by removing constraint (\ref{eqn-mechanism-design}) of Problem $\mathbb{P}_1$. We refer to the resulting optimal contract design as the ``super-optimal".
\begin{lemma}\label{lem-super}
By removing constraint (\ref{eqn-mechanism-design}) of Problem $\mathbb{P}_1$, the super-optimal contract $\hat{\Phi}$ is given as follows.
\begin{itemize}
\item When the type $m_i$ does not differ greatly from type-$m_n$ (i.e., $\frac{m_n}{m_i}\leq \frac{k-\hat{c}}{k}+\frac{1}{2}$), the contract option $(p_i, \delta_i, \bar{p}_i)$ is $p_i=p_0-\frac{\hat{c}^2}{2(k-\hat{c})}(\frac{2m_n}{m_i}-1)$, $\delta_i=\frac{k-2\hat{c}}{2(k-\hat{c})}(\frac{2m_n}{m_i}-1)$,
and with arbitrarily high penalty $\bar{p}_i>k$. In this case, only type-$m_i$ customers of low variation ($\Delta\leq\Delta_{th,i}=\frac{k}{2(k-\hat{c})}(\frac{2m_n}{m_i}-1))<1$) will choose the contract. The optimal profit collected from type-$m_i$ customers is
\begin{equation*}\small
\vspace{-0.1cm}
\E(P_i)=Nh(m_i)(m_ip_0-m_ic_0-2m_n\hat{c}+\frac{k\hat{c}(2m_n-m_i)^2}{4m_i(k-\hat{c})}).
\end{equation*}
\item When the type $m_i$ differs greatly from type-$m_n$ (i.e., $\frac{m_n}{m_i}>\frac{k-\hat{c}}{k}+\frac{1}{2}$), the contract option is $(p_i, \delta_i, \bar{p}_i)$ with $p_i=p_0-\frac{\hat{c}^2}{k}$, $\delta_i=1-\frac{2\hat{c}}{k}$, $\bar{p}_i>k$. All type-$m_i$ customers will choose the contract ($\Delta_{th,i}=1$). The optimal profit collected from type-$m_i$ customers is
\begin{equation*}\small
\vspace{-0.1cm}
\E(P_i)=Nh(m_i)(m_ip_0-m_ic_0-2m_i\hat{c}+\frac{m_i\hat{c}^2}{k}).
\end{equation*}
\end{itemize}
Moreover, the supplier's super-optimal profit $\hat{P}^*$, which is the sum of the super-optimal profit from all customers, is decreasing in $k$.
\end{lemma}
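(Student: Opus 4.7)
Without constraint (\ref{eqn-mechanism-design}), the objective in (\ref{eqn-profit-P1}) becomes additively separable, so the $n$-option contract design decouples into $n$ independent per-type subproblems over $(p_i,\delta_i,\bar{p}_i)$. The plan is to solve each subproblem in closed form. I would first restrict to the high-penalty regime $\bar{p}_i>k$: although the low-penalty regime would extract extra over-consumption revenue $\tfrac{m_i\bar{p}_i}{4\Delta}(\Delta-\delta_i)^2$, it forces the supplier to reserve capacity $m_i(1+\Delta_{th,i})$ rather than $m_i(1+\delta_i)$, and the assumption $k\geq 2\hat{c}$ is just enough to tip the balance in favor of $\bar{p}_i>k$. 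Any $\bar{p}_i$ above $k$ then works, because in the high-penalty regime the penalty is never actually paid.

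The main technical step is a change of variables. Solving (\ref{eqn-threshold-high}) for $p_i$ yields the smooth identity
\begin{equation*}
p_0 - p_i \;=\; \frac{k(\Delta_{th,i}-\delta_i)^2}{4\Delta_{th,i}}, \qquad 0\leq\delta_i\leq\Delta_{th,i}\leq 1,
\end{equation*}
so I would trade $(p_i,\delta_i)$ for $(\delta_i,\Delta_{th,i})$. Substituting into (\ref{eqn-profit-high-penalty}) eliminates the square root and produces a quadratic-type objective $F_i(\delta_i,\Delta_{th,i})$ whose Hessian has diagonal entries $-m_ik/2$ and off-diagonal $m_i(k/2-\hat{c})$; the determinant $m_i^2\hat{c}(k-\hat{c})$ is positive since $k>\hat{c}$, so $F_i$ is strictly concave and its unique stationary point is the global maximum.

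Setting the two partial derivatives to zero gives the linear relation $\delta_i = \Delta_{th,i}(1-2\hat{c}/k)$ from the $\delta_i$-equation, and after elimination $\Delta_{th,i} = \tfrac{k}{2(k-\hat{c})}(\tfrac{2m_n}{m_i}-1)$ from the $\Delta_{th,i}$-equation. Reading off $p_i$ from the identity above and plugging back into the objective reproduces the formulas stated in case~1. The case split is driven by the feasibility bound $\Delta_{th,i}\leq 1$: the interior optimum satisfies it precisely when $\tfrac{m_n}{m_i}\leq\tfrac{k-\hat{c}}{k}+\tfrac12$, which is case~1. Otherwise I would activate the boundary $\Delta_{th,i}=1$ and re-maximize over $\delta_i$ alone (the one-dimensional problem is still strictly concave), which yields $\delta_i = 1-2\hat{c}/k$ and the case~2 formulas.

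For the monotonicity of $\hat{P}^*$ in $k$, the $k$-dependent piece of the case~1 per-type profit can be rewritten as $\tfrac{\hat{c}(2m_n-m_i)^2}{4m_i(1-\hat{c}/k)}$, which is strictly decreasing in $k$; the corresponding piece for case~2, namely $m_i\hat{c}^2/k$, is also strictly decreasing. The only subtlety is that the case boundary $\tfrac{m_n}{m_i}=\tfrac{k-\hat{c}}{k}+\tfrac12$ itself shifts with $k$, so as $k$ grows some types migrate from case~2 into case~1. A brief check that the two per-type profit formulas agree exactly on the shared boundary lets me glue the pieces into a continuous, strictly decreasing function of $k$ for each $i$; summing over $i$ gives the claim. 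I expect the most delicate step of the whole argument to be the preliminary reduction to the high-penalty regime, since it requires comparing two separately maximized objectives rather than their pointwise values.
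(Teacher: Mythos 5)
Your plan tracks the paper's own proof almost step for step: decouple the objective per type once the IC constraint is dropped, reduce to the high-penalty regime $\bar p_i>k$, eliminate $p_i$ via the threshold identity $p_0-p_i=\tfrac{k(\Delta_{th,i}-\delta_i)^2}{4\Delta_{th,i}}$, optimize over $(\delta_i,\Delta_{th,i})$, split cases on feasibility of the interior stationary point, and check monotonicity of $\hat P^*$ in $k$ with a continuity check at the migrating case boundary. Your Hessian observation is in fact a small improvement on the paper's exposition: the paper's sketch states the substituted objective is concave in each variable ``but not both jointly'' and resorts to a two-stage sequential optimization, while your computation (diagonal $-m_ik/2$, off-diagonal $m_i(k/2-\hat c)$, determinant $m_i^2\hat c(k-\hat c)>0$, negative trace) shows the quadratic form is negative definite, so the stationary point you derive is unambiguously the unique joint maximizer and the sequential argument is unnecessary.

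The one spot where your reasoning does not line up with the paper's, and which you correctly flag as the delicate step, is the high-penalty reduction. You suggest $k\geq 2\hat c$ is the inequality that ``tips the balance'' toward $\bar p_i>k$. That is not how the paper proves it, and on its own it would not close the gap: in Claim A.3 the paper upper-bounds the low-penalty objective by replacing the energy-cost term $C_e(m_i)$ with $c_0m_i$ (valid since $C_e(m_i)\geq c_0m_i$), shows the relaxed low-penalty problem is maximized at the corner $p_i=p_0$, $\Delta_{th,i}=\delta_i=\min(1,\tfrac{m_n}{m_i}-\tfrac12)$, and then observes this corner is also feasible in the high-penalty regime with the same value, giving $S^L_i\leq\hat S^L_i\leq S^H_i$. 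The hypothesis $k\geq 2\hat c$ enters only afterwards, to guarantee that the interior critical point $\delta_i=(1-2\hat c/k)\Delta_{th,i}$ lies in $[0,\Delta_{th,i}]$. So the conclusion of the regime reduction you plan to use is correct, but the heuristic you offer for it would need to be replaced by an argument along the paper's lines (or another rigorous comparison of the two maximized objectives).
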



Lemma \ref{lem-super} has a similar structure as Proposition \ref{pro-opt-P2} yet it has another dimension of freedom for deciding the contract price $p_i$. The underlying intuition is also similar: without the IC condition, the profit contributed by each customer can again be separately optimized. The results can be interpreted as follows. If $m_i$ is much lower than $m_n$, enticing the customers to use the contract will lead to much lower capacity requirement ($m_i(1+\delta_i)\leq 2m_i\leq 2m_n$). Thus, it is beneficial to set $\delta_i=1-\frac{2\hat{c}}{k}$ so that $\Delta_{th,i} = 1$. However, if $m_i$ is relatively close to $m_n$, the supplier again faces the following tradeoff: if $\delta_i$ is high, the capacity reduction for each customer is small; if $\delta_i$ is low, very few customers choose contract options. It turns out the best $\delta_i$ is $\frac{k-2\hat{c}}{2(k-\hat{c})}(\frac{2m_n}{m_i}-1)$, which balances the above tradeoff. The last part of the lemma states that the super-optimal profit is decreasing in $k$. This property is intuitive because as $k$ increases, the cost of a customer with medium variation $\Delta\in [\delta_i,\Delta_{th,i}]$ increases, which may push them out of the contract. As a result, fewer customers will choose contract options and the supplier saves less capacity cost. This increasing property turns out to be quite crucial later on.

\emph{Sketch of Proof:} To prove Lemma \ref{lem-super}, we will prove that at the super-optimality, the supplier only determines a high penalty , i.e., $\bar{p}_i>k$ for each contract option $i$. This allows us to only focus on the high penalty regime of each contract option to solve the super-optimum.
Note that the profit from each type of customers can be calculated separately. To solve the super-optimum, it suffices to maximize $P^H_i$ in (\ref{eqn-profit-high-penalty}) subject to (\ref{eqn-contract-constraint}) and the equation of high penalty regime in (\ref{eqn-threshold-high}). To do this, we replace $p_i$ in (\ref{eqn-profit-high-penalty}) by $\delta_i, \Delta_{th,i}$ according to the equation of high penalty regime in (\ref{eqn-threshold-high}). Through this simplification, $P^H_i$ is now concave with either $\Delta_{th,i}$ or $\delta_i$, but not both jointly. Fortunately, we can sequentially optimize the choice of $\delta_i$ for a fixed $\Delta_{th,i}$, then optimize the one-variable $\Delta_{th,i}$.

\subsection{Step 2}
Next, we focus on the limiting regime when $\epsilon\to 0^+$, and show that the approximation ratio of the solution $\Phi''$ in Definition \ref{def-near-Shat} is no smaller than $\frac{1}{3}$. Note that when $\epsilon\to 0^+$, the solution $\Phi''$ in Definition \ref{def-near-Shat} is virtually the same as solution $\Phi'$ in Proposition \ref{pro-opt-P2} for the optimistic setting. The only difference is that the supplier's profit $P(\Phi')$ is calculated by (\ref{eqn-profit-pess}) in the pessimistic setting instead of by (\ref{eqn-profit-P1}) in the optimistic setting. As in Proposition \ref{pro-bound-P1}, we are interested in the lower bound of the performance gain ratio between $\Phi'$ and the optimal. Towards this end, it suffices to compare $P(\Phi')$ with the super-optimal. The term $\frac{P(\Phi')-P_0}{\hat{P}^*-P_0}$ provides a lower bound on the gain ratio between the solution $\Phi'$ and the super-optimal, which will be shown in Lemma \ref{lem-Shat-bound-P1-pess-support-0}.

\begin{lemma}\label{lem-Shat-bound-P1-pess-support-0}
The lower bound of the gain ratio between $\Phi'$ under pessimistic setting and the super optimal is at least $\frac{1}{3}$.
\end{lemma}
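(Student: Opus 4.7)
The plan is to establish $P(\Phi')-P_0 \ge \tfrac{1}{3}(\hat{P}^*-P_0)$ via a type-by-type comparison in which the numerator is bounded below and the denominator above. The first step exploits the structure of $\Phi'$: because every contract option shares the same price $p_0$ (in the limit $\epsilon\to 0^+$) and each customer's expected consumption remains $m_i$ independently of the option chosen, the revenue and generation-cost terms in $P(\Phi')$ coincide exactly with those in $P_0$. The entire improvement therefore reduces to capacity savings,
\begin{equation*}
P(\Phi')-P_0 \;=\; \hat{c}\sum_{i=1}^{n} Nh(m_i)\bigl(2m_n - \mathbb{E}[\pi_i]\bigr),
\end{equation*}
where $\pi_i$ is the capacity the supplier must provision per type-$m_i$ customer under the pessimistic setting. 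By Proposition \ref{pro-equilibrium-P1} applied to $\Phi'$ (so $\Delta_{th,i}=\delta_i$), only customers with $\Delta \le \delta_i$ can contribute to the savings; those with $\Delta > \delta_i$ strictly prefer the baseline, forcing $\pi_i = 2m_n$.

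Next, I would characterize the worst-case capacity for a customer with $\Delta \le \delta_i$. Because the dedicated option gives cost $m_i(p_0-\epsilon) < m_i p_0$, strictly dominating the baseline, this customer must pick some contract option. In the adversarial worst case it picks the feasible option $j$ for which $[m_i(1-\Delta), m_i(1+\Delta)] \subseteq [m_j(1-\delta_j), m_j(1+\delta_j)]$ and which maximizes $m_j(1+\delta_j)$. From the explicit form of $\Phi'$ in Proposition \ref{pro-opt-P2}, the supremum of $m_j(1+\delta_j)$ across all options equals $\tfrac{3}{2}m_n$ and is attained by option $n$; however, the feasibility constraint $m_j(1-\delta_j) \le m_i(1-\Delta)$ rules out options with large mean whenever the customer has small $m_i$ or large $\Delta$, yielding a tighter, $(m_i,\Delta)$-dependent upper bound on $\mathbb{E}[\pi_i]$.

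The third step invokes Lemma \ref{lem-super} for the closed-form per-type super-optimal savings, namely $\tfrac{k\hat{c}(2m_n-m_i)^2}{4m_i(k-\hat{c})}$ for close-to-$m_n$ types and $2\hat{c}(m_n-m_i)+\tfrac{m_i\hat{c}^2}{k}$ for far-from-$m_n$ types, and combines them with the capacity bound above. Using $\hat{c}\le p_0/2 \le k/2$ (so $(k-\hat{c})/k \ge 1/2$), I would verify the per-type ratio is at least $1/3$ in each regime as well as at the boundary $m_n/m_i = 3/2$. Since both the numerator and denominator decompose as nonnegative sums of per-type contributions weighted by $Nh(m_i)$, the aggregate ratio inherits this uniform per-type lower bound.

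The main obstacle is the capacity analysis in the second step. A coarse bound of $\tfrac{3}{2}m_n$ on the worst-case capacity is not sharp enough to recover $1/3$ in the far-from-$m_n$ regime, where the super-optimum extracts a much larger reduction (from $2m_n$ down to roughly $2m_i$ plus a $\hat{c}/k$ correction). The refinement must integrate the feasibility constraint $m_j(1-\delta_j) \le m_i(1-\Delta)$ against the uniform distribution of $\Delta$ on $[0,\delta_i]$, and will likely hinge on showing that precisely those $\Delta$ values which render option $n$ infeasible force the customer into its dedicated (or a similarly low-capacity) option, producing more savings than the crude worst case suggests. Reconciling this refined estimate with the super-optimal formulas, and confirming the $1/3$ ratio uniformly across both regimes and at their boundary, is the crux of the argument.
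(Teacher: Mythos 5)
Your outline correctly identifies the right decomposition—that the gain of $\Phi'$ reduces purely to capacity savings, that the pessimistic adversary picks the containing option $j$ maximizing $m_j(1+\delta_j)$, and that a mediant-type inequality lets you pass from an aggregate ratio to a per-type ratio. These are all ingredients of the paper's argument. But the proposal has two genuine gaps.

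First, you miss the key simplification that the paper exploits on the denominator side. Rather than carrying the $k$-dependent super-optimal formulas $\frac{k\hat{c}(2m_n-m_i)^2}{4m_i(k-\hat{c})}$ and $2\hat{c}(m_n-m_i)+\frac{m_i\hat{c}^2}{k}$ into the ratio analysis, the paper uses Lemma \ref{lem-super}'s monotonicity assertion that $\hat{P}^*$ is decreasing in $k$. Since the numerator $P(\Phi')-P_0 = N\hat{c}\bigl(2m_n - \sum_i h(m_i) C_i\bigr)$ is $k$-free (contract prices equal $p_0$ and $\Delta_{th,i}=\delta_i$), the ratio is minimized by maximizing the denominator, i.e.\ at the extremal $k=2\hat{c}$. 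There the per-type super-optimal savings collapses to the simple, regime-independent expression $\hat{c}(2m_n-\frac{3}{2}m_i)$. Your plan to ``use $(k-\hat{c})/k\ge 1/2$'' and verify the ratio separately in each of the two $k$-dependent regimes is not wrong, but it forgoes a simplification that the rest of the argument effectively requires, since the close/far boundary $\frac{k-\hat{c}}{k}+\frac{1}{2}$ itself moves with $k$ and would make your per-type case split $k$-dependent.

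Second, and more fundamentally, you yourself flag the crux — computing a sharp enough bound on the pessimistic capacity $\mathbb{E}[\pi_i]$ by integrating the containment constraint $m_j(1-\delta_j)\le m_i(1-\Delta)$ over $\Delta\in[0,\delta_i]$ — and leave it unresolved. That integration is where almost all of the proof's work lies. The paper reduces (via Claim \ref{claim-contract-increase} and a WLOG relabeling) to the smallest type $m_1$, partitions $[0,\delta_1]$ by the thresholds $\Delta_{1,n}<\Delta_{1,n-1}<\cdots<\Delta_{1,1}=\delta_1$ at which successive options $n, n-1, \ldots$ become infeasible, writes $C_1$ as a sum over these subintervals, and then splits into the three cases $\frac{m_n}{m_1}\le\frac{3}{2}$, $\frac{m_n}{m_1}\ge 2$, and $\frac{3}{2}<\frac{m_n}{m_1}<2$. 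In each case it maximizes $C_1$ over the free parameters $m_2,\ldots,m_{n-1}$ using $\sum_i(m_{i+1}-m_i)^2 \ge \frac{(m_n-m_1)^2}{n-1}$ (a Cauchy–Schwarz consequence) and then a one-variable optimization in $\frac{m_n}{m_1}$ to obtain the bounds $0.476$, $\frac{1}{3}$, and $0.466$ respectively. Without this computation, the $\frac{1}{3}$ bound is not established; as written, your proposal is an outline that correctly identifies where the hard work is but does not perform it.
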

The intuition behind Lemma \ref{lem-Shat-bound-P1-pess-support-0} is as follows. On the one hand, if the customers' mean demands are close to each other, the profit loss due to pessimistic contract selection is not significant. On the other hand, if customers' mean demands differ a lot, it is more unlikely for customers to choose other contract options. As a result, the total profit loss is also limited. By Lemma \ref{lem-Shat-bound-P1-pess-support-0}, the contract solution of Definition \ref{def-near-Shat} when $\epsilon\to 0^+$ achieves an approximation ratio at least $\frac{1}{3}$.

\emph{Sketch of Proof:} To prove Lemma \ref{lem-Shat-bound-P1-pess-support-0}, it suffices to consider the case when $k$ is minimum at $k=2\hat{c}$, since the super-optimal profit decreases in $k$ as shown in Lemma \ref{lem-super}. In this extreme, the super-optimal solution simplifies to $\delta_i=0, \Delta_{th,i}=1, p_i=p_0-\frac{1}{2}\hat{c}, \forall i\in \mathcal I$ by Lemma \ref{lem-super}, and the expression for the gain of the super-optimal solution compared with the baseline can be calculated as a function of $h(m_i)$ and $m_i$. Also, we can calculate the expression for the gain of solution $\Phi'$ as a function of $h(m_i)$ and $m_i$ under pessimistic setting. Putting the above expressions together, we can show that the gain ratio between solution $\Phi'$ and the super-optimal solution is no smaller than $\frac{1}{3}$.



\subsection{Step 3}
Finally, we show that by increasing $\epsilon$ in a controllable manner, there exist feasible solutions $\Phi''$ near $\Phi'$ that only increases the supplier's profit in Lemma \ref{lem-better-solution}.
\begin{lemma}\label{lem-better-solution}
There exists an $\epsilon_0>0$ such that for all $0<\epsilon<\epsilon_0$, the contract $\Phi''$ (which is near $\Phi'$) is feasible in satisfying (\ref{eqn-mechanism-design}) and (\ref{eqn-contract-constraint}), and the supplier's profit under these solutions is no smaller than that at $\epsilon\to 0^+$.
\end{lemma}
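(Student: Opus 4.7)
The plan is to split the lemma into (i) feasibility of $\Phi''$ and (ii) monotonicity of the pessimistic profit (\ref{eqn-profit-pess}) in $\epsilon$, both handled by local analysis on a right-neighborhood of $\epsilon=0$. The constraint (\ref{eqn-contract-constraint}) is immediate because $\Phi''$ inherits $\delta_i \in [0,1]$ from Proposition \ref{pro-opt-P2} and has $p_i = p_0 - \epsilon < p_0$, so the substantive tasks are verifying (\ref{eqn-mechanism-design}) and computing the right-derivative $dP/d\epsilon|_{0^+}$.

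For the IC condition I would argue by cases on $\Delta$. When $\Delta \leq \delta_i$, the LHS $\min\{\E[C_i(m_i,\Delta)], m_ip_0\}$ equals $m_i(p_0-\epsilon)$; for each foreign option $j$, either option $j$'s band fails to fully contain $[m_i(1-\Delta), m_i(1+\Delta)]$, in which case the lower-bound cap or the $\bar p_j > k$ overage strictly inflates $\E[C_j]$ above $m_ip_0$ and the RHS stays at $m_ip_0$ independently of $\epsilon$ (giving strict IC), or containment holds so that $\E[C_j] = m_ip_j$ with $\partial\E[C_j]/\partial p_j = m_i$, and both sides equal $m_i(p_0-\epsilon)$ (giving tight IC). When $\Delta > \delta_i$ I would show that no foreign option $j$ can contain the enlarged demand range: plugging the explicit $\delta_j$ values of Proposition \ref{pro-opt-P2} into the containment bounds $m_j(1+\delta_j)\geq m_i(1+\Delta)$ and $m_j(1-\delta_j) \leq m_i(1-\Delta)$ yields contradictory constraints $m_j > m_i$ and $m_j < m_i$ under each regime of $\delta_j$. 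Hence RHS $= m_ip_0$ and IC reduces to LHS $\leq m_ip_0$, which holds by definition.

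For monotonicity I would differentiate (\ref{eqn-profit-pess}) in $\epsilon$ at $0^+$. The two contributions are a bounded revenue-loss term $-\sum_i Nh(m_i) m_i \delta_i$ and a boundary-flow term $\sum_i Nh(m_i)[s_{i^*(m_i,\delta_i)}(m_i,\delta_i) - s_0(m_i,\delta_i)] \cdot d\Delta_{th,i}/d\epsilon$. Substituting $p_i = p_0 - \epsilon$ into (\ref{eqn-threshold-high}) and Taylor-expanding gives $\Delta_{th,i}(\epsilon) - \delta_i \sim 2\sqrt{\epsilon\delta_i/k}$, so $d\Delta_{th,i}/d\epsilon \to +\infty$ as $\epsilon \to 0^+$ whenever $0 < \delta_i < 1$. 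The containment analysis above forces $i^*(m_i,\delta_i) = i$ at the boundary, yielding the strictly positive jump $s_i(m_i,\delta_i) - s_0(m_i,\delta_i) \to \hat c[2m_n - m_i(1+\delta_i)] > 0$ at $\epsilon = 0^+$. Since type $m_n$ has $\delta_n = 1/2 \in (0,1)$ and $h(m_n) > 0$, the overall right-derivative is $+\infty$, so $P(\Phi''(\epsilon)) > P(\Phi''(0^+))$ on a right-neighborhood of zero; choosing $\epsilon_0$ within this neighborhood yields the claim.

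The main technical obstacle is that the adversarial choice $i^*(m_i,\Delta,\epsilon)$ in (\ref{eqn-profit-pess}) is an argmax over tied options and can in principle jump discontinuously with $\epsilon$, breaking the naive differentiation above. The resolution is the same containment/slope-matching observation used in the IC step: two contract options tied in cost at $\epsilon = 0$ must both fully contain the demand range, so both have $\partial\E[C_j]/\partial p_j = m_i$ and the tie is preserved under uniform price reduction by $\epsilon$; the indifference class and $i^*$ are therefore locally constant in $\epsilon$, legitimizing the boundary-flow derivative. A secondary bookkeeping point is that types with $\delta_i = 1$ (those with $m_i/m_n \leq 2/3$) have $\Delta_{th,i}$ clamped at $1$ and contribute only the finite revenue loss, but this is absorbed by the unbounded positive boundary flow from types with $\delta_i \in (0,1)$, of which $m_n$ is always one.
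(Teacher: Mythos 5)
Your proposal follows the same two-part decomposition as the paper: establish feasibility of $\Phi''$ (the paper's Lemma~\ref{lem-feasible-solution}) and then show that the right-derivative of the pessimistic profit in $\epsilon$ is $+\infty$ at $0^+$. The monotonicity half matches the paper's argument: the expansion $\Delta_{th,i}(\epsilon)-\delta_i\sim 2\sqrt{\epsilon\delta_i/k}$, the observation that type $m_n$ always has $\delta_n=\tfrac{1}{2}\in(0,1)$ so the unbounded boundary-flow term is always present, and the concern about discontinuous jumps in the adversarial argmax. Your slope-matching resolution of that last point is correct and mirrors the way the paper isolates the $\epsilon$-dependence of $\bar{C}_i$ via the $\epsilon$-free pieces $C'_i$ and $\hat{\Delta}_{th,i}$.

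The gap is in the IC feasibility argument for $\Delta>\delta_i$. You correctly observe that for any foreign option $j$ the containment threshold $\Delta_{i,j}$ satisfies $\Delta_{i,j}<\delta_i$ (this is the paper's Claim~\ref{claim-feasible}). But ``containment fails'' for $\Delta>\delta_i$ only guarantees $\E[C_j(m_i,\Delta)]>m_i p_j = m_i(p_0-\epsilon)$; it does not directly give the stronger bound $\E[C_j(m_i,\Delta)]\geq m_ip_0$ that you need in order to conclude ``RHS $=m_ip_0$''. The excess $\E[C_j]-m_ip_j$ tends to zero as $\Delta\downarrow\Delta_{i,j}$, so it is not automatic that it dominates the discount $m_i\epsilon$ uniformly. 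The paper closes this by first proving, by combining Claims~\ref{claim-monotonicity} and~\ref{claim-feasible}, the strict inequality $\E[C_j(m_i,\delta_i)]\big|_{\epsilon=0}>m_ip_0$ --- which holds precisely because $\Delta_{i,j}<\delta_i$ is strict and $\E[C_j]$ is strictly increasing past $\Delta_{i,j}$ --- then invoking continuity of $\E[C_j]$ in $\epsilon$ to get a per-pair threshold $\epsilon_{th,i,j}$ below which the strict inequality persists at $\Delta=\delta_i$, and finally using monotonicity of $\E[C_j]$ in $\Delta$ to extend it to every $\Delta\in(\delta_i,\Delta_{th,i}]$. Your proposal identifies the right containment computation but skips this quantitative bridge, so as written the reduction ``IC reduces to LHS $\leq m_ip_0$, which holds by definition'' is not justified; adding these three steps would complete the argument along the paper's own lines.
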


\emph{Sketch of Proof:} To prove Lemma \ref{lem-better-solution}, we first show that there exist feasible solutions $\Phi''$ (as defined in Definition \ref{def-near-Shat}), that satisfy the IC condition. Then, to prove that the supplier's profit under these feasible solutions is no smaller than that at $\epsilon\to 0^+$, we only need to prove that the derivative of the supplier's profit with respect to $\epsilon$ is positive at $\epsilon\to 0^+$. The intuition behind this property is that lower contract price attracts more customers to subscribe to contract options. As a result, the supplier can estimate customers' total capacity more accurately and reduce its total capacity cost.

Combining the above three steps, we conclude the result in Theorem \ref{thm-bound-P1-pess}.
\begin{theorem}\label{thm-bound-P1-pess}
There exists an $\epsilon_0>0$ such that for all $0<\epsilon<\epsilon_0$, the contract $\Phi''$ is feasible in satisfying (\ref{eqn-mechanism-design}) and (\ref{eqn-contract-constraint}) and attains an gain ratio at least $\frac{1}{3}$ under the pessimistic setting.
\end{theorem}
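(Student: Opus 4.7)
The plan is to chain together the three lemmas from Steps 1--3 into a single gain-ratio estimate. Feasibility of $\Phi''$ is handed to us directly by Lemma \ref{lem-better-solution}: it supplies an $\epsilon_0>0$ such that for every $\epsilon\in(0,\epsilon_0)$ the contract $\Phi''$ in Definition \ref{def-near-Shat} satisfies both the IC condition (\ref{eqn-mechanism-design}) and the contract-range constraint (\ref{eqn-contract-constraint}). The same lemma yields the monotonicity estimate $P(\Phi'')\geq \lim_{\eta\to 0^+} P(\Phi'')$ under the pessimistic objective (\ref{eqn-profit-pess}), and since $\Phi''$ reduces to the optimistic solution $\Phi'$ of Proposition \ref{pro-opt-P2} as $\epsilon\to 0^+$, that limiting value equals $P(\Phi')$ evaluated under the pessimistic tie-breaking rule.

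Next I would invoke Lemma \ref{lem-Shat-bound-P1-pess-support-0} to obtain $P(\Phi')-P_0\geq\tfrac{1}{3}(\hat{P}^*-P_0)$ and then Lemma \ref{lem-super} for the relaxation bound $\hat{P}^*\geq P^*$, since $\hat{P}^*$ is the optimum of the problem obtained by dropping the IC constraint from $\mathbb{P}_1$. Because the supplier can always default to the baseline scheme, $P^*\geq P_0$, so the three estimates telescope into
\begin{equation*}
\frac{P(\Phi'')-P_0}{P^*-P_0}\;\geq\;\frac{P(\Phi'')-P_0}{\hat{P}^*-P_0}\;\geq\;\frac{P(\Phi')-P_0}{\hat{P}^*-P_0}\;\geq\;\frac{1}{3},
\end{equation*}
which is exactly the approximation ratio claimed by the theorem.

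I expect the main obstacle to be the tie-breaking bookkeeping at the boundary $\epsilon\to 0^+$. At $\epsilon=0$ all contract prices equal $p_0$, so multiple options may yield the same expected cost to a given customer, and the pessimistic functional (\ref{eqn-profit-pess}) is not continuous at that point. The delicate step is to verify that the monotonicity conclusion of Lemma \ref{lem-better-solution} indeed identifies $\lim_{\eta\to 0^+}P(\Phi'')$ with the value $P(\Phi')$ computed under the \emph{same} pessimistic tie-breaking convention used in Lemma \ref{lem-Shat-bound-P1-pess-support-0}, rather than with the optimistic value from Proposition \ref{pro-opt-P2}. Once the two notions of limit are aligned, the remainder of the proof is a routine algebraic combination of the three lemmas in the displayed chain above.
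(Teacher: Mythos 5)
Your proof is correct and takes the same route the paper implicitly follows: the paper states the theorem as an immediate corollary of Lemmas \ref{lem-super}, \ref{lem-Shat-bound-P1-pess-support-0}, and \ref{lem-better-solution} ("combining the above three steps"), and your displayed chain of inequalities is precisely the explicit form of that combination. Your cautionary remark about tie-breaking at $\epsilon\to 0^+$ is well-placed but already resolved inside the paper: the text preceding Lemma \ref{lem-Shat-bound-P1-pess-support-0} explicitly defines $P(\Phi')$ as the $\epsilon\to 0^+$ limit evaluated under the pessimistic objective (\ref{eqn-profit-pess}), and the proof of Lemma \ref{lem-better-solution} computes $\bar{C}_i$ under the same worst-case option-selection rule, so the two lemmas are already stated with a common pessimistic convention and no additional alignment argument is needed.
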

\section{Numerical Results}\label{sec-numerical-result}
\begin{figure}[]
\vspace{-0.1in}
\centering
\includegraphics[height=6cm,width=8cm]{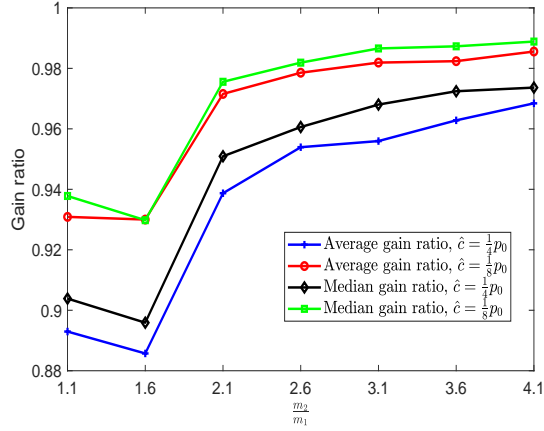}
\caption{Average gain ratio of the flexible contract as a function of $\frac{m_2}{m_1}$ and the unit capacity cost $\hat{c}$.}
\label{rate-gain-m2-hatc-average-median}
\end{figure}

In this section, we first evaluate the empirical performance of our proposed contract design $\Phi''$, and then compare the flexible contract to a typical pricing scheme: the peak-based pricing scheme.
\subsection{Empirical Gain Ratio of the Flexible Contract $\Phi''$}
As shown in Section \ref{sec-challenge-II}, the lower bound of the gain ratio of our proposed flexible contract $\Phi''$ is $\frac{1}{3}$, which represents its worst-case performance. However, in most settings, the gain ratio is higher than this lower bound. It is thus interesting to examine the average-case and median-case gain ratio of the flexible contract $\Phi''$, since they reflect the gain ratio in a common setting.

Consider an example where there are only two types of customers. Figure \ref{rate-gain-m2-hatc-average-median} shows how the average-case and median-case gain ratio changes as $\frac{m_2}{m_1}$ and the unit capacity cost $\hat{c}$ vary. To vary the ratio $\frac{m_2}{m_1}$, we fix $m_1=1$MWh and vary $m_2$. Given the values of $\frac{m_2}{m_1}$ and $\hat{c}$, we generate other parameters (i.e., $k, h(m_1), h(m_2)$) randomly and compute the average and median gain ratio. It is shown that the average and median gain ratio is much larger than the lower bound $\frac{1}{3}$. This is because the lower bound only occurs at some extreme case (e.g., $k=2\hat{c}$), which seldom occurs under the randomly generated parameters. In addition, for a given $\frac{m_2}{m_1}$, the average and median gain ratio decrease as $\hat{c}$ increases. The reason is that as $\hat{c}$ increases, the saving due to reduced capacity needs is greater. Thus, the flexible contract has a larger gain compared to the baseline pricing scheme.

\subsection{Comparison to the Peak-based Pricing Scheme}\label{sec-simulation-peak-based-pricing}
\begin{figure}[]
\vspace{-0.1in}
\centering
\includegraphics[height=6cm,width=8cm]{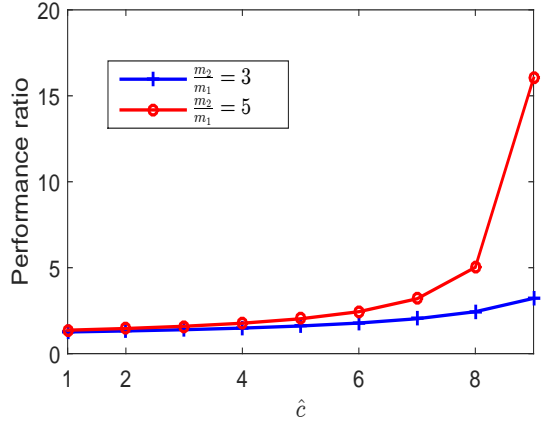}
\caption{The ratio between the supplier's profit under the flexible contract and that under the peak-based pricing as a function of unit capacity cost $\hat{c}$, and the ratio (i.e., $\frac{m_2}{m_1}$) of two mean demands at each time slot.}
\label{ratio-profit-slot4-medium-p0}
\end{figure}

Next, we compare the flexible contract with the peak-based pricing scheme in the literature. We consider one month since the supplier often pays capacity cost for an entire month each time. For simplicity, we consider four time-slots $t_1, t_2, t_3, t_4$, each of which can be viewed as representing a typical week of a particular type (e.g., busy period vs. idle period). Each time-slot thus contains $L=24\times 7=168$ hours. There are two types of customer at each time-slot. We assume that each customer's mean demand at time-slot $t_i$ can take values $m_1(t_i), m_2(t_i)$ with probability $h(t_i, m_1), h(t_i, m_2)$, respectively. In the peak-based pricing scheme, the supplier sets a base energy price $p^E$ and a peak demand price $p^D$ ($p^D>p^E$). Note that $p^E$ is the amount of price per energy consumption (e.g., in MWh) while $p^D$ is the amount of price per peak power (e.g., in MW). If a customer's real energy consumption at time-slot $t_i$ is $x_i$, then the total payment of the customer to the supplier is $p^E\sum_{i=1}^{4}x_i+p^D\frac{\max_i(x_i)}{L}$.

We first explain why we expect our flexible contract to outperform the peak-based pricing scheme. At a high level, although a high demand charge $p^D$ will force the customers to reduce its peak, the extent to which the peak can be reduced depends on the value of $k, p^E, p^D$. For example, if $k-p^E<\frac{p^D}{L}<2k-2p^E$, then each customer will only reduce its peak to the second highest demand. Further, the peak-based pricing does not provide a way for the supplier to learn the demand of the customers. In contrast, in our flexible contract, the supplier sets a baseline price $p_0\leq k$ for all four time-slots $t_1, t_2, t_3, t_4$, and additionally sets two contract options $(p_1(t_i),\delta_1(t_i), \bar{p}_1(t_i))$, $(p_2(t_i),\delta_2(t_i), \bar{p}_2(t_i))$ at each slot $t_i$ according to $\Phi''$ in Theorem \ref{thm-bound-P1-pess}. Thus, each customer can choose whether to reduce demand in \emph{every} time-slot, and the supplier will be able to learn the capacity needs by the contract selected by the customers. As a result, we expect cost-savings to both the supplier and the customers, provided that the flexible contracts are properly designed.


Consider an example where $m_1(t_1)=1$MWh, $m_1(t_2)=2$MWh, $m_1(t_3)=3$MWh, $m_1(t_4)=4$MWh, $h(t_1, m_1)=0.5$, $h(t_1, m_2)=0.5$, $h(t_2, m_1)=0.6$, $h(t_2, m_2)=0.4$, $h(t_3, m_1)=0.55$, $h(t_3, m_2)=0.45$, $h(t_4, m_1)=0.5$, $h(t_4, m_2)=0.5$, $k=75\$/$MWh, $N=10$. We assume that the base energy price $p^E$ and the peak demand price $p^D$ are $p^E=49\$/$MWh, $p^D=5258\$/$MW \footnote{Note that it is common for the demand charge of one hour of peak demand to be comparable with the energy charge of over $100$ hours at the same power level \cite{Peak-price}.}. We assume that the unit energy cost is $c_0=11\$/$MWh as in \cite{Energy-cost}. In the flexible contracts, we set a baseline price as $p_0=1.4p^E$ to balance the benefits of the supplier and each customer (Otherwise, if $p_0$ is too small, the supplier's profit is too low, and if $p_0$ is too large, each customer's cost is too high.) , and $\epsilon=0.1p_0$ in the flexible contract $\Phi''$. Figure \ref{ratio-profit-slot4-medium-p0} shows how the ratio between the supplier's profit under the flexible contract and that under the peak-based pricing changes with $\hat{c}$ and $\frac{m_2}{m_1}$. We can observe that the ratio is in general larger than $1$, especially when $\hat{c}$ and $\frac{m_2}{m_1}$ are both very large. The results thus suggest that the flexible contract enables the supplier to learn valuable demand information of customers and significantly save its capacity cost by slightly sacrificing its revenue. In addition, we see that the ratio increases with $\hat{c}$ and $\frac{m_2}{m_1}$. The reason is that, as $\hat{c}$ or $\frac{m_2}{m_1}$ increases, the capacity cost and the need to identify customer's type play a more important role in the supplier's profit. Hence, the relative advantage of the flexible contract over peak-based pricing becomes more significant. We find that the customer's mean cost reduces under the flexible contracts, achieving a win-win situation.

\section{Extension to More General Assumptions}\label{sec-extension}
In this section, we will relax the previous assumptions of our model to more general cases and show that our results are still useful under more general assumptions.
\subsection{Truncated Normal Distribution for Each Customer's Variation}

Now we show that our analysis and results also apply to other distribution for each customer's demand variation. Consider the case when each customer's variation $\Delta\in [0, 1]$ follows a normal distribution with mean $\mu$ and variance $\sigma^2$, respectively, truncated to $[0, 1]$. We choose truncated normal distribution since it can capture a wide range of distributions under different mean and variance values. Similar to the analysis in Section \ref{sec-challenge-I}, type-$m_i$ customers with low variation (i.e., $\Delta\leq \Delta_{th,i}$) will subscribe to contract option $i$, and type-$m_i$ customers of high variation will subscribe to the baseline scheme, where $\Delta_{th,i}$ is the same as (7) and (8). The supplier's expected profit $P^H_i$ from type-$m_i$ customers if option $i$ has high penalty (i.e., $\bar{p}_i>k$) is
\begin{equation}\label{eqn-profit-high-penalty-normal-distribution}
P^H_i=Nh(m_i)\bigg(\big(m_ip_iF+m_ip_0(1-F)\big)-c_0m_i-\hat{c}\big(m_i(1+\delta_i)F+2m_n(1-F)\big)\bigg),
\end{equation}
where $F=\frac{\text{erf}(\frac{\Delta_{th,i}-\mu}{\sqrt{2}\sigma})-\text{erf}(\frac{-\mu}{\sqrt{2}\sigma})}{\text{erf}(\frac{1-\mu}{\sqrt{2}\sigma})-\text{erf}(\frac{-\mu}{\sqrt{2}\sigma})}$ is the cumulative distribution function of the truncated normal distribution, representing the probability of a customer to have $\Delta\leq \Delta_{th,i}$. Note that the function $\text{erf}(x)=\frac{1}{\sqrt{2\pi}}\int_{0}^{x}e^{-t^2}dt$ is the error function of the standard normal distribution. We see that $P^H_i$ above in (\ref{eqn-profit-high-penalty-normal-distribution}) has a similar structure with that in (9). The difference is that the term $\Delta_{th,i}$ in (9) is replaced by $F$ in (\ref{eqn-profit-high-penalty-normal-distribution}).

Since it is challenging to find the optimal contract, similar to the analysis in Section \ref{sec-challenge-I}, we will find an approximate contract in the set $\hat{R}$ defined in (12). To find the approximate contract with good performance, we will maximize $P^H_i$ in (\ref{eqn-profit-high-penalty-normal-distribution}) subject to $p_i=p_0, 0\leq \Delta_{th,i}=\delta_i\leq 1$. To do this, we replace $p_i$ by $p_0$, and $\Delta_{th,i}$ by $\delta_i$ in (\ref{eqn-profit-high-penalty-normal-distribution}). Through this simplification, $P^H_i$ becomes a function with only one variable $\delta_i$ as follows.
\begin{eqnarray*}
P^H_i&=&Nh(m_i)\bigg(m_ip_0-c_0m_i-\hat{c}\big(m_i(1+\delta_i)F+2m_n(1-F)\big)\bigg)\\
     &=&Nh(m_i)\bigg(m_ip_0-c_0m_i-\hat{c}\big(2m_n+m_i(1+\delta_i-\frac{2m_n}{m_i})\frac{\text{erf}(\frac{\delta_i-\mu}{\sqrt{2}\sigma})-\text{erf}(\frac{-\mu}{\sqrt{2}\sigma})}{\text{erf}(\frac{1-\mu}{\sqrt{2}\sigma})-\text{erf}(\frac{-\mu}{\sqrt{2}\sigma})}\big)\bigg).
\end{eqnarray*}
To maximize $P^H_i$ over $\delta_i$, it suffices to minimize $(1+\delta-\frac{2m_n}{m_i})\frac{\text{erf}(\frac{\delta_i-\mu}{\sqrt{2}\sigma})-\text{erf}(\frac{-\mu}{\sqrt{2}\sigma})}{\text{erf}(\frac{1-\mu}{\sqrt{2}\sigma})-\text{erf}(\frac{-\mu}{\sqrt{2}\sigma})}$ over $\delta_i$. Thus, the final approximate contract is as follows. Recall that Problem $\mathbb{P}_2$ is defined in Section \ref{sec-challenge-I}.
\begin{proposition}\label{pro-opt-P2-normal-variation}
In Problem $\mathbb{P}_2$, the supplier's optimal contract design is $\Phi'=\{(p_i,\delta_i,\bar{p}_i), i\in\mathcal{I}|p_i=p_0, \delta_i=\argmin_{\delta\in [0,1]} (1+\delta-\frac{2m_n}{m_i})\frac{\text{erf}(\frac{\delta-\mu}{\sqrt{2}\sigma})-\text{erf}(\frac{-\mu}{\sqrt{2}\sigma})}{\text{erf}(\frac{1-\mu}{\sqrt{2}\sigma})-\text{erf}(\frac{-\mu}{\sqrt{2}\sigma})}, \bar{p}_i>k, i\in\mathcal{I}\}$, which is incentive compatible.
\end{proposition}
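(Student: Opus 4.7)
The plan is to follow the same decoupling strategy as in the uniform case of Proposition \ref{pro-opt-P2}, since the restricted set $\hat{R}$ in (\ref{eqn-R-hat}) forces $p_i = p_0$, $\delta_i = \Delta_{th,i}$, and $\bar{p}_i > k$ for every type, so that the objective $P(\Phi)$ in (\ref{eqn-profit-P1}) decomposes into independent per-type contributions $P_i^H$. The design problem therefore reduces to $n$ separate scalar optimizations, each over $\delta_i \in [0,1]$, and the global structure ``close to $m_n$ vs.\ far from $m_n$'' observed in Proposition \ref{pro-opt-P2} is replaced here by a pointwise $\argmin$ because the truncated-normal cdf no longer admits the closed-form quadratic reduction that the uniform case enjoyed.

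I would then carry out the reduction explicitly. Substituting $p_i = p_0$ and $\Delta_{th,i} = \delta_i$ into (\ref{eqn-profit-high-penalty-normal-distribution}), the revenue and generation-cost terms collapse to the constants $N h(m_i) m_i p_0$ and $N h(m_i) c_0 m_i$, and all of the $\delta_i$-dependence lands in the capacity term, which after rearrangement equals
\begin{equation*}
N h(m_i)\,\hat{c}\,\left[2 m_n - m_i\left(\tfrac{2 m_n}{m_i} - 1 - \delta_i\right) F(\delta_i)\right],
\end{equation*}
where $F(\delta)$ denotes the truncated-normal cdf appearing in the statement. Because $m_n \geq m_i$ and $\hat{c}>0$, maximizing $P_i^H$ over $\delta_i \in [0,1]$ is equivalent to minimizing $\bigl(1+\delta - \tfrac{2 m_n}{m_i}\bigr) F(\delta)$ on $[0,1]$, which is exactly the $\argmin$ expression in the proposition. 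Existence of a minimizer follows from continuity of $F$ and compactness of $[0,1]$.

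What remains is to verify the incentive-compatibility condition (\ref{eqn-mechanism-design}); this is the step I expect to be the main technical point, though the structure of $\hat{R}$ makes it cleaner than it first appears. The key observation is that every contract option shares the common in-range price $p_0$ together with a penalty $\bar{p}_j > k > p_0$. Given this, I would show that for any type-$m_i$ customer at any realized demand $x$, the cost achievable under any option $j$ is pointwise at least $p_0 x$: inside the commitment range the per-unit price is exactly $p_0$; above the commitment range, whether or not the customer curtails via elasticity, the cost equals $m_j(1+\delta_j) p_0 + \min(k, \bar{p}_j)\bigl(x - m_j(1+\delta_j)\bigr) \geq p_0 x$ because $\min(k,\bar{p}_j) = k > p_0$; and below the commitment range the floor $m_j(1-\delta_j) p_0 \geq p_0 x$ since $x < m_j(1-\delta_j)$. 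Taking expectations over $x$ (whose conditional mean is $m_i$) yields $\E[C_j(m_i,\Delta)] \geq p_0 \E[x] = m_i p_0$ for every $j$ and every $\Delta$, so $\min\{\E[C_j(m_i,\Delta)],\, m_i p_0\} = m_i p_0$ on both sides of (\ref{eqn-mechanism-design}) and IC holds with equality. The truncated-normal distribution plays no role in this last argument, which is why the same machinery extends seamlessly from the uniform case.
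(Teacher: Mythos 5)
Your reduction is exactly the paper's: restrict to $\hat{R}$, substitute $p_i=p_0$ and $\Delta_{th,i}=\delta_i$ into the truncated-normal expression for $P_i^H$, observe that the revenue and generation-cost terms become constants, and recognize that maximizing the remaining capacity term is equivalent to minimizing $\bigl(1+\delta-\tfrac{2m_n}{m_i}\bigr)F(\delta)$ over $[0,1]$. The one place you go beyond the paper is the incentive-compatibility check: the paper simply asserts IC, whereas you give a clean pointwise argument that under any option $j$ with common price $p_0$ and penalty $\bar p_j>k>p_0$ the cost is at least $p_0 x$ for every realization $x$, hence $\E[C_j(m_i,\Delta)]\ge m_ip_0$ and both sides of the IC inequality collapse to $m_ip_0$; this is correct and makes explicit what the paper leaves implicit by invoking the structure of $\hat R$.
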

Since the function $(1+\delta-\frac{2m_n}{m_i})\frac{\text{erf}(\frac{\delta-\mu}{\sqrt{2}\sigma})-\text{erf}(\frac{-\mu}{\sqrt{2}\sigma})}{\text{erf}(\frac{1-\mu}{\sqrt{2}\sigma})-\text{erf}(\frac{-\mu}{\sqrt{2}\sigma})}$ has a complex form with respect to $\delta$, it is difficult to solve the closed-form expression for $\delta_i$. Instead, we optimize the choice of $\delta_i$ by one dimensional search.

We see that the value of $\delta_i$ in the approximate contract is not the same as that in Proposition \ref{pro-opt-P2}. We can study how the value of $\delta_i$ in the approximate contract is impacted by the diversity of customers' types, and the type of the truncated normal distribution in terms of the mean $\mu$ and the standard deviation $\sigma$. First, we fix $\sigma=0.5$, and vary the ratio $\frac{m_n}{m_i}$ and $\mu$, which reflects the diversity of customer's types. Figure \ref{appro-contract-RatioM-mu} shows how optimal $\delta_i$ in the approximate contract changes with $\frac{m_n}{m_i}$ and $\mu$. Afterwards, we fix $\mu=0.5$, and vary the ratio $\frac{m_n}{m_i}$ and $\sigma$.  Figure \ref{appro-contract-RatioM-sigma} shows how $\delta_i$ in the approximate contract changes with $\frac{m_n}{m_i}$ and $\sigma$. In both figures, we see that $\delta_i$ in the approximate contract is non-decreasing with $\frac{m_n}{m_i}$. Specifically, if $\frac{m_n}{m_i}$ is smaller than a threshold value, $\delta_i$ is smaller than 1. If $\frac{m_n}{m_i}$ is larger than the threshold value, $\delta_i$ is equal to $1$. This threshold-based insight is the same as that revealed in Proposition \ref{pro-opt-P2}. The difference is that the threshold value varies at different $\mu$ and $\sigma$ in the setting of normal distribution of $\Delta$ while the threshold value is a constant of $\frac{3}{2}$ in the setting of uniform distribution of $\Delta$.

\begin{figure*}[!htb]
\centering
\minipage{0.45\textwidth}
  \includegraphics[width=\linewidth]{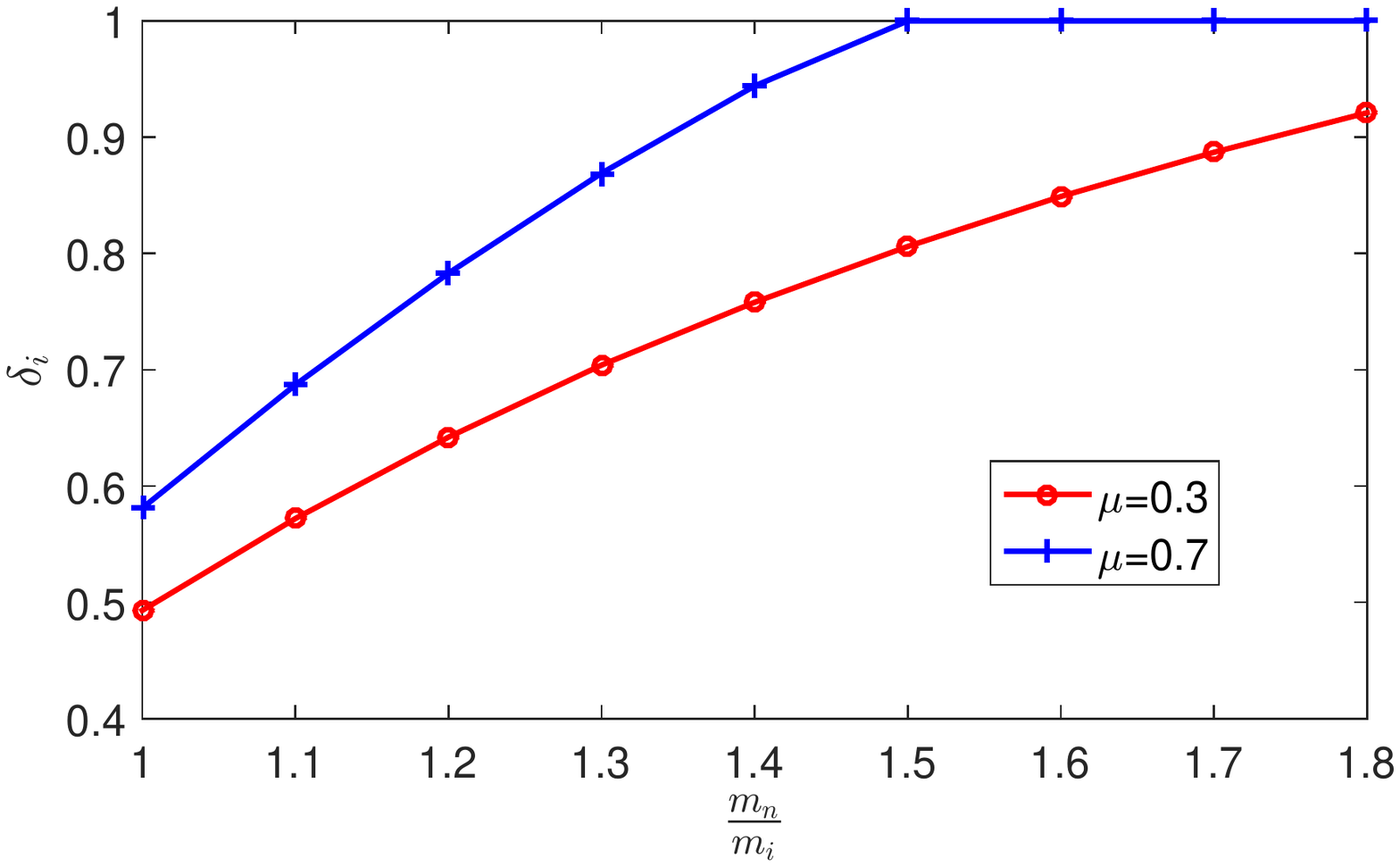}
	\vspace{-0.8cm}
  \caption{Optimal $\delta_i$ as a function of $\frac{m_n}{m_i}$ and $\mu$. The x-axis is $\frac{m_n}{m_i}$, and the y-axis is the $\delta_i$ in the approximate contract $\Phi'$}\label{appro-contract-RatioM-mu}
\endminipage\hfill
\minipage{0.45\textwidth}%
  \includegraphics[width=\linewidth]{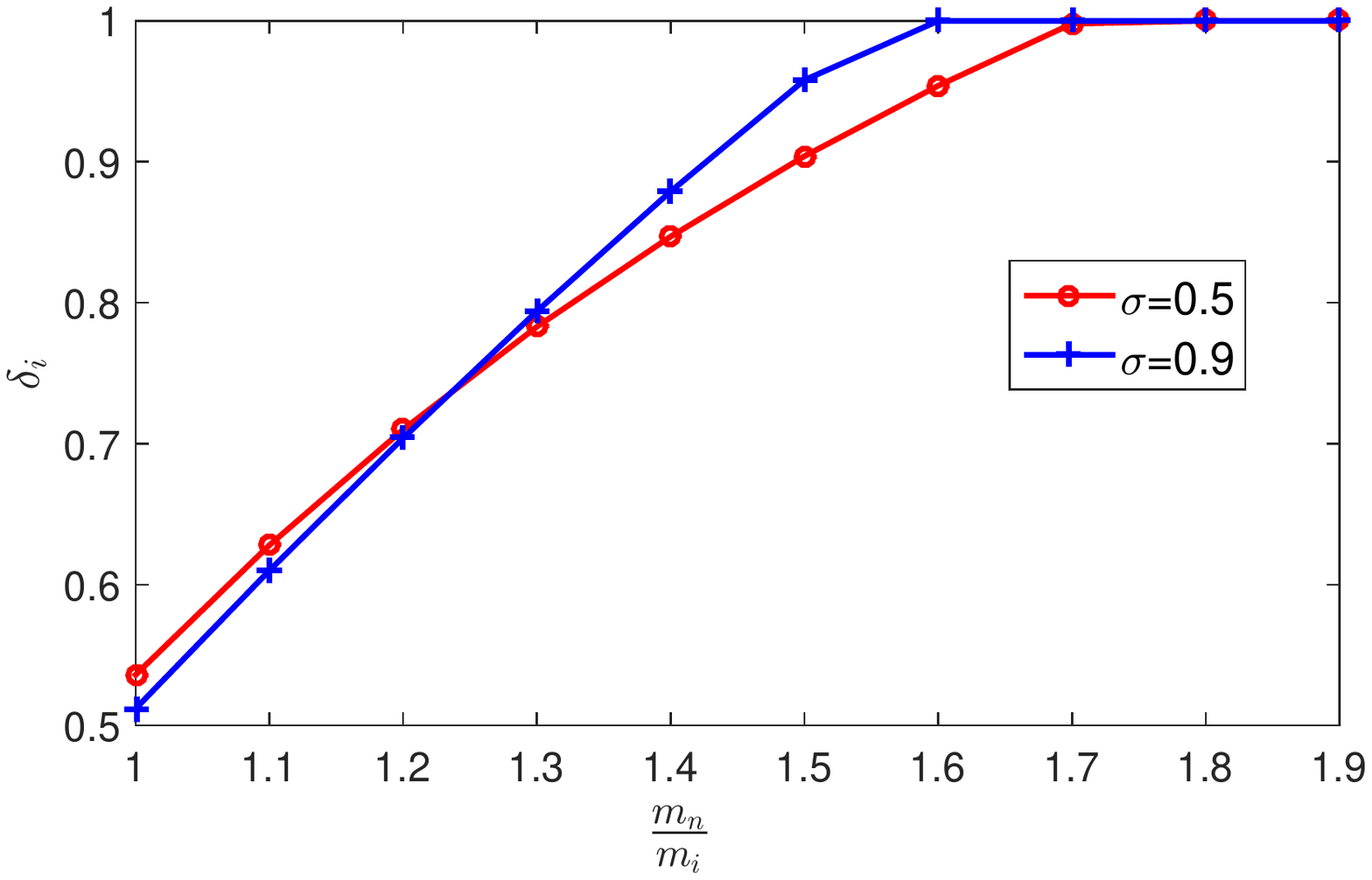}
	\vspace{-0.8cm}
  \caption{Optimal $\delta_i$ as a function of $\frac{m_n}{m_i}$ and $\sigma$}\label{appro-contract-RatioM-sigma}
\endminipage
\end{figure*}

To evaluate the performance of the approximate contract, we compare the gains of the approximate contract and the upper bound of the optimal contract. Similar to Section \ref{sec-challenge-II}, we obtain the upper bound by purposely removing constraint (3) of Problem $\mathbb{P}_1$. We refer to the resulting optimal contract design as the ``super-optimal". For ease of analysis, we first consider the optimistic setting similar to Section \ref{sec-challenge-I}, where a customer always chooses its dedicated option if it faces more than one contract yielding the same benefit. The ratio between the gain of the approximate contract under the optimistic setting and that of the super-optimal contract is difficult to analyze theoretically because the approximate contract and the super-optimal contract do not have closed-form solutions. Thus, we use extensive simulations to examine the ratio. To do this, we first consider a typical example of $n=3$ types of customers, and generate parameters $m_i, h(m_i), p_0, \hat{c}, c_0, \mu ,\sigma$ randomly. In particular, we vary $m_1\in [1,10]$, $m_2\in (m_1, 10m_1]$, $m_3\in (m_2, 10m_2]$, $p_0\in [1,100]$, $k\in (p_0, 10p_0]$, $\hat{c}=[0,\frac{1}{2}p_0], c_0\in [0, p_0], \mu\in [0,1], \sigma\in (0, 10]$, and we run $10000$ trials. We find that the average ratio is $0.9922$, and the median ratio is $0.9975$, which indicates that the approximate contract achieves a reasonable performance under the optimistic setting. We also examine the cases of $n>3$, and find the gain ratio similar (not becoming worse as $n$ increases).


Note that the above results assume an optimistic setting on the IC (Incentive Compatibility) condition. That is, when the costs of more than one contract option (including baseline pricing option) are equal, the customer will pick the dedicated option designed by the supplier, which is usually most favorable to the supplier. Although this optimistic scenario is widely assumed in the mechanism design literature, it may fail in practice. Similar to Section \ref{sec-challenge-II}, we will relax this strict IC condition and quantify the pessimistic or worst-case performance when a customer (facing more than one contract yielding the same benefit) always picks the option that is the least favorable to the supplier.

Now we will present our approach to quantify the worst-case performance similar to Section \ref{sec-challenge-II}. We start from the contract $\Phi'$ as described in Proposition \ref{pro-opt-P2-normal-variation}, but reduce all the contract prices from baseline price $p_0$ by $\epsilon>0$. This is to ensure that, for customers with small $\Delta$, choosing contracts are strictly better off than choosing the baseline scheme.
\begin{corollary}\label{corol-opt-P2-normal-variation-pessimistic}
In the pessimistic scenario, we define contract $\Phi''=\{(p_i,\delta_i,\bar{p}_i),i\in\mathcal{I}\}$ as follows. $\Phi''=\{(p_i,\delta_i,\bar{p}_i), i\in\mathcal{I}|p_i=p_0-\epsilon, \delta_i=\argmin_{\delta\in [0,1]} (1+\delta-\frac{2m_n}{m_i})\frac{\text{erf}(\frac{\delta-\mu}{\sqrt{2}\sigma})-\text{erf}(\frac{-\mu}{\sqrt{2}\sigma})}{\text{erf}(\frac{1-\mu}{\sqrt{2}\sigma})-\text{erf}(\frac{-\mu}{\sqrt{2}\sigma})}, \bar{p}_i>k, i\in\mathcal{I}\}$.
\end{corollary}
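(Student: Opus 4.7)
The plan is to establish this corollary by adapting the three-step framework of Section \ref{sec-challenge-II} (Lemma \ref{lem-super} through Theorem \ref{thm-bound-P1-pess}) to the truncated-normal setting. The statement parallels Definition \ref{def-near-Shat}: it perturbs the contract $\Phi'$ of Proposition \ref{pro-opt-P2-normal-variation} by lowering each contract price by $\epsilon$ in order to break ties between the baseline and the contract options. The two claims to verify are (a) that $\Phi''$ satisfies the IC condition (\ref{eqn-mechanism-design}) and the price constraints (\ref{eqn-contract-constraint}) for sufficiently small $\epsilon > 0$, and (b) that $\Phi''$ achieves a non-trivial lower bound on the gain ratio under the pessimistic customer response.

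First I would verify feasibility by a continuity argument in the spirit of Lemma \ref{lem-better-solution}. At $\epsilon = 0$, $\Phi''$ coincides with $\Phi'$, which is incentive compatible by Proposition \ref{pro-opt-P2-normal-variation}. Both sides of the IC inequalities in (\ref{eqn-mechanism-design}) and the bound $p_i \leq p_0$ depend continuously on $\epsilon$, so there exists $\epsilon_0 > 0$ such that for every $0 < \epsilon < \epsilon_0$, $\Phi''$ remains feasible. The same monotonicity argument as in Lemma \ref{lem-better-solution} then shows that the supplier's profit under $\Phi''$ is non-decreasing in $\epsilon$ near zero, because the small price drop converts a positive mass of customers with $\Delta$ just below $\Delta_{th,i}$ from tie-breaking toward the baseline (capacity $2 m_n$ per customer) into strict contract subscribers (capacity $m_i(1+\delta_i) < 2 m_n$), and the capacity savings dominate the $\epsilon$-loss in revenue.

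Second, for the performance guarantee under the pessimistic response, I would adapt Lemma \ref{lem-Shat-bound-P1-pess-support-0} by comparing $P(\Phi'')$ to a super-optimal benchmark $\hat{P}^*$ obtained by dropping the IC constraint in Problem $\mathbb{P}_1$. In the truncated-normal setting this benchmark decomposes over customer types exactly as in Lemma \ref{lem-super}: for each $i$, one maximizes $P^H_i$ in (\ref{eqn-profit-high-penalty-normal-distribution}) over $(\delta_i, \Delta_{th,i})$ tied by the high-penalty relation in (\ref{eqn-threshold-high}). The pessimistic profit under $\Phi''$ is then the value of (\ref{eqn-profit-pess}) when each type-$m_i$ customer with $\Delta \leq \Delta_{th,i}$ picks the contract $j$ that is cheapest for it but most costly for the supplier, namely the option with the largest $m_j(1+\delta_j)$ still compatible with payment $m_i(p_0-\epsilon)$. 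Bounding each term of (\ref{eqn-profit-pess}) monotonically against the corresponding super-optimal term — with the explicit $\Delta_{th,i}$ of the uniform case replaced by the error-function CDF $F(\cdot)$ — I expect the same reasoning to yield a constant gain ratio of the form $\frac{1}{3}$ as in Theorem \ref{thm-bound-P1-pess}, possibly scaled by a factor depending on $(\mu,\sigma)$.

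The main obstacle will be the absence of closed-form expressions for the optimizers of both the super-optimal contract and the approximate $\delta_i$: unlike Lemma \ref{lem-super}, neither $\delta_i$ nor $\Delta_{th,i}$ admits a clean algebraic form, so the case split analogous to $m_n/m_i \leq 3/2$ versus $m_n/m_i > 3/2$ becomes an implicit $(\mu,\sigma)$-dependent threshold (visible in Figures \ref{appro-contract-RatioM-mu} and \ref{appro-contract-RatioM-sigma}). I expect the cleanest route is to exploit the log-concavity of the truncated-normal density — which gives monotonicity of the hazard function of $F$ — to replace the quadratic arithmetic of the uniform proof by inequalities of the form $F(\delta)/F(\Delta_{th,i}) \geq (\delta/\Delta_{th,i})^{\alpha(\mu,\sigma)}$ on the relevant subintervals. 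If this structural route proves too delicate for a clean constant bound, a fallback is a numerical worst-case sweep over $(\mu,\sigma)$ analogous to the empirical study in Section \ref{sec-numerical-result}, either confirming that the $\frac{1}{3}$ bound still holds or reporting the tight $(\mu,\sigma)$-dependent bound that replaces it.
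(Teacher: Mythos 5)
The ``corollary'' you are trying to prove is not a claim at all --- it is a construction. Read the text of the statement literally: ``we \emph{define} contract $\Phi''$\ldots as follows.'' It simply records the perturbed contract obtained from Proposition \ref{pro-opt-P2-normal-variation} by replacing $p_0$ with $p_0-\epsilon$, in direct parallel to Definition \ref{def-near-Shat} in the uniform case. There is no assertion of feasibility, no assertion of a gain ratio, and hence nothing to prove. The paper itself treats it this way: immediately after the statement, the paper says that the ratio between the gain of the approximate contract and the super-optimal ``is difficult to analyze theoretically because the approximate contract and the super-optimal contract do not have closed-form solutions. Thus, we use extensive simulations to examine the ratio,'' and then reports an empirical average ratio of $0.974$ and median $0.989$ over $10000$ random instances. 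No analogue of Theorem \ref{thm-bound-P1-pess} is stated or proven for the truncated-normal case.

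Your plan therefore addresses a stronger claim than the one on the page. You propose to establish both feasibility and a $\frac{1}{3}$-type guarantee by transplanting the three-step argument of Section \ref{sec-challenge-II}, but the paper deliberately declines to do this --- the obstacle you yourself flag (no closed-form optimizers, no clean $m_n/m_i$ threshold, the error-function CDF breaking the quadratic arithmetic that drives Lemma \ref{lem-super} and Lemma \ref{lem-Shat-bound-P1-pess-support-0}) is exactly why the paper falls back to simulation. Your ``fallback'' of a numerical worst-case sweep is, in fact, the paper's primary approach for this setting, not a secondary one. If you intended to prove only the definitional content of the corollary, the proof is vacuous; if you intended to prove a performance bound, you are proving a theorem the paper never states, and your own proposal concedes that the structural route via log-concavity is not carried through. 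Either way, the attempt does not match the paper's treatment: the ``Corollary'' label is a misnomer for a definition, and the performance evaluation following it is purely empirical.
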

Now we evaluate the performance of the approximate contract under the pessimistic setting. To do this, we compare the gain under the approximate contract to the gain under the super-optimal contract, which is the upper bound of the optimal contract as described above. We use extensive simulations to examine the ratio. To do this, we consider a typical example of $n=2$ types of customers, and generate parameters $m_i, h(m_i), p_0, \hat{c}, c_0, \mu ,\sigma$ randomly. In particular, we vary $m_1\in [1,10]$, $m_2\in [1.1m_1, 10m_1]$, $p_0\in [1,100]$, $k\in (p_0, 10p_0]$, $\hat{c}=[0.001p_0,\frac{1}{2}p_0], c_0\in [0, p_0], \mu\in [0,1], \sigma\in (0, 10]$, $\epsilon=0.001p_0$, and we run $10000$ trials. We find that the average ratio is $0.974$, and the median ratio is $0.989$, which indicates that the approximate contract achieves a reasonable performance under the pessimistic setting as well.

\subsection{Truncated Normal Distribution for Each Customer's Demand}

Next we show that our results also apply to other distribution for each customer's demand. Consider the case when each customer's demand follows a normal distribution with mean $\mu$ and variance $\sigma^2$, respectively. This is
a truncated normal distribution version, as a type-$m_i$ customer's demand is bounded in $[m_i(1-\Delta),m_i(1+\Delta)]$ in practice. We assume $\mu=m_i$ for ease of exposition.

Similar to the analysis in Section \ref{sec-detail}, we first analyze the expected cost $\E[C_i(m_i,\Delta)]$ of a type-$m_i$ customer with maximum possible variation $\Delta$ if it chooses option $i$. If $\Delta<\delta_i$, the customer's demand is always within the contract range and its expected cost is $\E[C_i(m_i,\Delta)]=m_ip_i$. If $\Delta>\delta_i$,  its expected cost depends on the relationship between $k$ and $\bar{p}_i$. If $\bar{p}_i>k$, we have
\begin{eqnarray}\label{eqn-cost-medium-multi-option-normal-demand}\small
\vspace{-0.1cm}
\E[C_i(m_i,\Delta)]\!\!&=&\!\!(2m_ip_i-km_i\delta_i)\frac{\text{erf}(\frac{m_i\Delta}{\sqrt{2}\sigma})-\text{erf}(\frac{m_i\delta_i}{\sqrt{2}\sigma})}{2\text{erf}(\frac{m_i\Delta}{\sqrt{2}\sigma})}+\frac{k\sigma}{\sqrt{2\pi}\text{erf}(\frac{m_i\Delta}{\sqrt{2}\sigma})}(e^{-\frac{m_i\delta^2_i}{2\sigma^2}}-e^{-\frac{m_i\Delta^2}{2\sigma^2}})\nonumber\\
									& &+m_ip_i\frac{\text{erf}(\frac{m_i\delta_i}{\sqrt{2}\sigma})}{\text{erf}(\frac{m_i\Delta}{\sqrt{2}\sigma})}.
\end{eqnarray}

Similar to the analysis in Proposition \ref{pro-equilibrium-P1}, under Incentive Compatibility (IC) condition, type-$m_i$ customers with low variation (i.e., $\Delta\leq \Delta_{th,i}$) will subscribe to contract option $i$, and type-$m_i$ customers of high variation $i$ will subscribe to the baseline scheme. The only difference is that $\Delta_{th,i}$ is not the same as that in (7) and (8). Still, $\Delta_{th,i}$ ensures that $\E[C_i(m_i,\Delta)]$ in (\ref{eqn-cost-medium-multi-option-normal-demand}) is equal to the customer's expected cost $m_ip_0$ under the baseline pricing. Thus, $\Delta_{th,i}$ is the unique solution to the following equation.
\begin{eqnarray}\label{eqn-threshold-normal-demand}
\vspace{-0.1cm}
m_ip_0\!\!&=&\!\!(2m_ip_i-km_i\delta_i)\frac{\text{erf}(\frac{m_i\Delta_{th,i}}{\sqrt{2}\sigma})-\text{erf}(\frac{m_i\delta_i}{\sqrt{2}\sigma})}{2\text{erf}(\frac{m_i\Delta_{th,i}}{\sqrt{2}\sigma})}+\frac{k\sigma}{\sqrt{2\pi}\text{erf}(\frac{m_i\Delta_{th,i}}{\sqrt{2}\sigma})}(e^{-\frac{m_i\delta^2_i}{2\sigma^2}}-e^{-\frac{m_i\Delta^2_{th,i}}{2\sigma^2}})\nonumber\\
									& &+m_ip_i\frac{\text{erf}(\frac{m_i\delta_i}{\sqrt{2}\sigma})}{\text{erf}(\frac{m_i\Delta_{th,i}}{\sqrt{2}\sigma})}.
\end{eqnarray}
The supplier's expected profit $P^H_i$ from type-$m_i$ customers if option $i$ has high penalty (i.e., $\bar{p}_i>k$) is
\begin{equation}\label{eqn-profit-high-penalty-normal-demand}
P^H_i=Nh(m_i)\bigg(\big(m_ip_i\Delta_{th,i}+m_ip_0(1-\Delta_{th,i})\big)-c_0m_i-\hat{c}\big(m_i(1+\delta_i)\Delta_{th,i}+2m_n(1-\Delta_{th,i})\big)\bigg),
\end{equation}
which has the same structure as (9). Note that the difference between (\ref{eqn-profit-high-penalty-normal-demand}) and (9) is that $\Delta_{th,i}$ is now given indirectly through the equation in (\ref{eqn-threshold-normal-demand}) above.

Since it is still challenging to find the optimal contract, similar to the analysis in Section \ref{sec-challenge-I}, we will find an approximate contract in the set $\hat{R}$ defined in (12). By following the similar analysis, we prove that the final approximate contract given in Proposition \ref{pro-opt-P2-normal-demand} is of the same structure as Proposition \ref{pro-opt-P2}. The reason is that the supplier's expected profit $P^H_i$ has the same structure in (9) under the uniform distribution. Further, in the final approximate contract, $\frac{m_n}{m_i}$ has the same threshold value of $\frac{3}{2}$. 
\begin{proposition}\label{pro-opt-P2-normal-demand}
In Problem $\mathbb{P}_2$, the supplier's optimal contract design $\Phi'=\{(p_i,\delta_i,\bar{p}_i), i\in\mathcal{I}|p_i=p_0, \delta_i=\Delta_{th,i},i\in\mathcal{I}\}$ is of a simple form as follows, depending on the diversity of customers' types and is incentive compatible:
\begin{itemize}
\item If type $m_i$ is close to $m_n$ (i.e., $\frac{m_n}{m_i}\leq\frac{3}{2}$), the optimal contract option $i$ for this type is $(p_0, \frac{m_n}{m_i}-\frac{1}{2}, \bar{p}_i>k)$;
\item If type $m_i$ is not close to type-$m_n$ (i.e., $\frac{m_n}{m_i}>\frac{3}{2}$), the optimal contract option $i$ is $(p_0, 1,\bar{p}_i>k)$.
\end{itemize}
\end{proposition}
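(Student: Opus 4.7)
The plan is to exploit the fact that once we restrict to the set $\hat{R}$ defined in (\ref{eqn-R-hat}), the distributional assumption on a customer's demand drops out of the supplier's profit, and the problem reduces to the same one-dimensional concave quadratic already solved in Proposition \ref{pro-opt-P2}. I would first verify that the constraint $\Delta_{th,i} = \delta_i$ is self-consistent with the threshold equation (\ref{eqn-threshold-normal-demand}) when $p_i = p_0$: substituting $\Delta_{th,i} = \delta_i$ and $p_i = p_0$ into the right-hand side makes the erf differences and the exponential differences collapse to zero, so the whole RHS reduces to exactly $m_i p_0$ and matches the LHS. Hence $\hat{R}$ is a legitimate feasible set under truncated normal demand as well.

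Next, plugging $p_i = p_0$ and $\Delta_{th,i} = \delta_i$ into (\ref{eqn-profit-high-penalty-normal-demand}) yields
\begin{equation*}
P_i^H = N h(m_i)\Big(m_i p_0 - c_0 m_i - \hat{c}\big[m_i(1+\delta_i)\delta_i + 2 m_n (1-\delta_i)\big]\Big),
\end{equation*}
which is identical in both form and coefficients to what arises under the uniform distribution in the proof of Proposition \ref{pro-opt-P2}. Since the total profit decomposes additively across types and each $P_i^H$ depends only on its own $\delta_i$, the optimization separates into $n$ independent one-variable problems. The bracketed term is a convex quadratic in $\delta_i$ with unconstrained minimizer $\delta_i^\star = \frac{m_n}{m_i} - \frac{1}{2}$; since $m_n \geq m_i$ the lower bound $\delta_i \geq 0$ is never active, while the upper bound $\delta_i \leq 1$ clips exactly when $\frac{m_n}{m_i} > \frac{3}{2}$, giving the two-case structure claimed in the proposition.

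The only nontrivial step remaining is to verify incentive compatibility. Because $p_j = p_0$ for every $j$, a type-$m_i$ customer with $\Delta \leq \delta_i$ pays exactly $m_i p_0$ under its dedicated option, so it suffices to show $\E[C_j(m_i,\Delta)] \geq m_i p_0$ for every $j \neq i$. I would argue this by a case split on $m_j$ relative to $m_i$: when $m_j > m_i$, enough truncated-normal mass of the customer's demand lies below $m_j(1-\delta_j)$ that the floor payment $m_j(1-\delta_j)p_0$ drives the expected cost above $m_i p_0$; when $m_j < m_i$, the tail mass beyond $m_j(1+\delta_j)$ incurs overage at unit rate $\min(\bar{p}_j, k) > p_0$ on the excess, which dominates since the interior discount $p_0 - p_j$ is zero. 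The main obstacle is that under the truncated normal these expectations do not admit the clean closed-form expressions available in the uniform case; I would discharge them by a stochastic-dominance argument that depends only on the sign of $m_j - m_i$ relative to the commitment range $[m_j(1-\delta_j), m_j(1+\delta_j)]$, not on the precise shape of $\rho(x)$ inside that range, so the qualitative argument from the uniform analysis transfers verbatim.
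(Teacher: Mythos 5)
Your proposal is correct and follows essentially the same route as the paper: restrict to $\hat{R}$, observe that with $p_i=p_0$ and $\Delta_{th,i}=\delta_i$ the per-type profit $P^H_i$ in (\ref{eqn-profit-high-penalty-normal-demand}) collapses to the same convex quadratic in $\delta_i$ as in the uniform-demand case, and optimize each type separately to obtain the $\frac{3}{2}$ threshold. The added checks you supply (that $\Delta_{th,i}=\delta_i$ solves (\ref{eqn-threshold-normal-demand}) when $p_i=p_0$, and that IC holds because under $p_j=p_0$ and $\bar p_j>k$ the realized cost satisfies $\bar c(x)\ge xp_0$ pointwise so $\E[C_j(m_i,\Delta)]\ge m_ip_0$ for any mean-$m_i$ distribution) are sound, though note the IC inequalities need only be weak: ties with $\E[C_j(m_i,\Delta)]=m_ip_0$ do occur and are permitted by (\ref{eqn-mechanism-design}).
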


To evaluate the performance of the approximate contract, we compare the gains of the approximate contract and the super-optimal optimal contract as in Section \ref{sec-challenge-II}. For ease of analysis, we consider the optimistic setting where a customer always chooses its dedicated option if it faces more than one contract yielding the same benefit similar to Section \ref{sec-challenge-I}. The ratio between the gain of the approximate contract under the optimistic setting and that of the super-optimal contract is difficult to analyze theoretically. Thus, we use extensive simulations to examine the ratio. To do this, we consider a typical example of $n=2$ types of customers, and generate parameters $m_i, h(m_i), p_0, \hat{c}, c_0, \sigma$ randomly. In particular, we vary $m_1\in [1,10]$, $m_2\in (m_1, 10m_1]$, $p_0\in [1,10]$, $k\in (p_0, 10p_0]$, $\hat{c}=[0,\frac{1}{2}p_0], c_0\in [0, p_0], \sigma\in (0, 10]$, and we run $1000$ trials. We find that the average ratio is $0.81$, and the median ratio is $0.87$, which indicates that the approximate contract achieves a reasonable performance under the optimistic setting. We also examine the cases of $n>2$, and find the gain ratio to be similar (not becoming worse as $n$ increases).


\subsection{Continuous Distribution for Each Customer's Mean Usage}

\begin{figure}[]
\vspace{-0.1in}
\centering
\includegraphics[height=3cm,width=6cm]{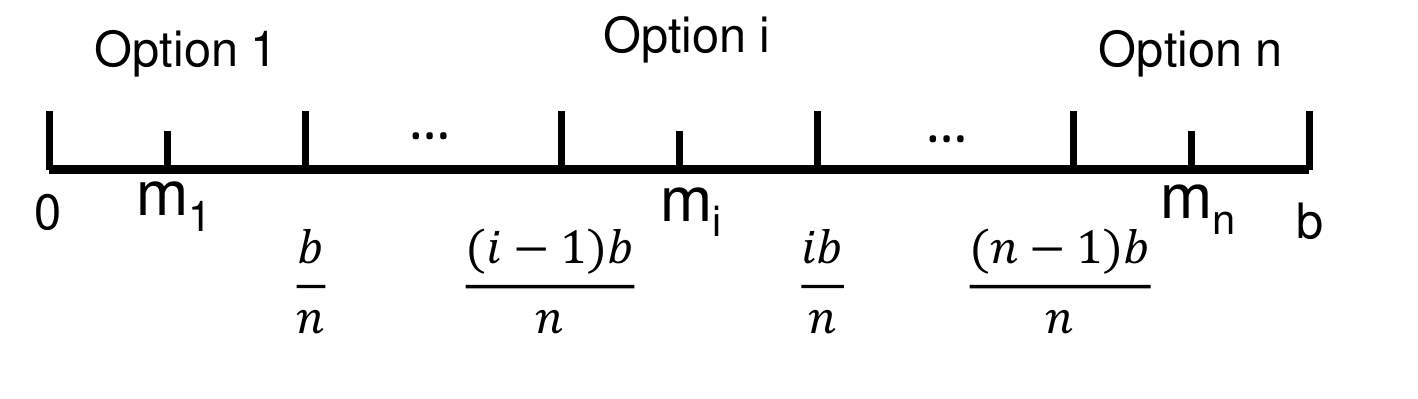}
\caption{Illustration of the relationship between the contract options and the subranges of mean usage. $m_i$ is the midpoint of the subrange $i$  with $[\frac{(i-1)b}{n},\frac{ib}{n}]$, and is set as the midpoint of the contract range of option $i$.}
\label{multiple-option-uniform-mean-usage}
\end{figure}
Our flexible contract design can be extended to a continuous distribution for the mean usage and still achieve a reasonable performance. To be concrete, we assume that each customer's mean usage $m$ is continuous random variable by following a uniform distribution in the range $[0, b]$. We believe that the insights revealed under the assumption can be generalized to other continuous distribution. We can equally divide this range into $n$ buckets \big(i.e., $[0,\frac{b}{n}],[\frac{b}{n},\frac{2b}{n}],\ldots, [\frac{(n-1)b}{n}, b]$\big), and approximate the setting as having $n$ classes of mean usage, where $n$ can be tuned to trade off efficiency and complexity of the flexible contract mechanism. We can accordingly design $n$ contract options, each of which is dedicated for one class of mean usage by the proposed approach in our paper. Figure \ref{multiple-option-uniform-mean-usage} illustrates the relationship between the $n$ contract options and the $n$ subranges of mean usage. To evaluate the performance of the approximate contract, we consider a case of ``perfect information" as an upper bound benchmark. In the case, the supplier charges the customers with the baseline price, and is assumed to know each customer's full information of its mean usage and variation. We compare the gain of the approximate contract to the gain under the case of perfect information. Figure \ref{ratio-gain-number-options} shows how the ratio between the gain of the approximate contract and the gain under the case of perfect information changes with the number of contract options. We see that the ratio increases concavely with the number of contract options, and it quickly reaches over $70\%$ as the number of options grows to $10$. Finally, it converges to around $80\%$ as the number of options reaches $30$. The reason behind the result is that as the number of contract options increases, the supplier can learn the mean usage information of each customer at better granularity, and thus predict the future capacity more accurately. We provide the details in the following.

\begin{figure}[]
\vspace{-0.1in}
\centering
\includegraphics[height=6cm,width=8cm]{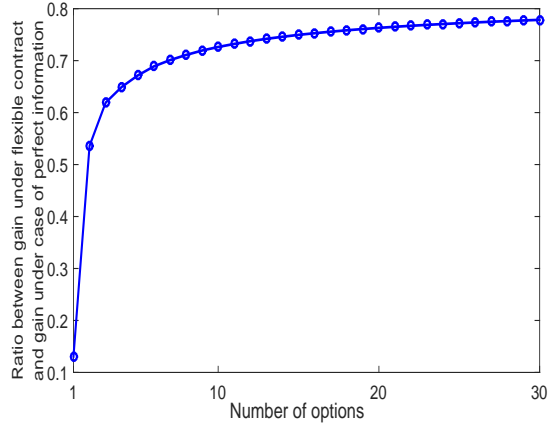}
\caption{The ratio between the gain under the flexible contract and that under the case of perfect information versus the number of contract options.}
\label{ratio-gain-number-options}
\end{figure} 


For ease of analysis, we consider an optimistic scenario similar to Section \ref{sec-challenge-I}. That is, when the costs of more than one contract option (including baseline pricing option) are equal, the customer will pick the dedicated option designed by the supplier, which is usually most favorable to the supplier. We equally divide the mean usage range $[0, b]$ into $n$ subranges: $[0,\frac{b}{n}], [\frac{b}{n},\frac{2b}{n}],\ldots, [\frac{(n-1)b}{n}, b]$ . We design $n$ contract options for the customers, where each option is dedicated for a subrange of mean usage. Specifically, a contract option $i$ is dedicated for the subrange $i$ with $[\frac{(i-1)b}{n},\frac{ib}{n}]$. Thus, we assume that the midpoint of the discounted range of each option $i$ is the midpoint $m_i$ of the subrange $i$, where 
\begin{equation}\label{eqn-mi}
m_i=\frac{(2i-1)b}{2n}.
\end{equation}
We refer to a customer as type-$m_i$ customer if its mean demand is in the subrange $i$. (Note that the definition is different from that under discrete distribution of the mean usage.) Intuitively, a type-$m_i$ customer with a small variation $\Delta$ is very likely to choose its dedicated option $i$. The question is how to design the contract parameters $p_i,\delta_i, \bar{p}_i$. To do this, we determine an approximate contract in Definition \ref{def-Phi-hat} below using the results in Proposition \ref{pro-opt-P2}. Note that in Definition \ref{def-Phi-hat}, the term $m_i$ represents the midpoint of the subrange $i$ and is defined in (\ref{eqn-mi}). This is somewhat different from Proposition \ref{pro-opt-P2} which is based on a discrete distribution of mean usage, and which uses $m_i$ to denote the particular mean usage of a type-$m_i$ customer. Under the continuous distribution, each option $i$ corresponds to an infinite number of mean usages in the subrange $i$. To apply the Proposition \ref{pro-opt-P2} to the contract design for the continuous distribution, we simply use the midpoint of the mean usage subrange $i$ to approximately represent the mean usage of all type-$m_i$ customers. 
\begin{definition}\label{def-Phi-hat}
We define contract $\hat{\Phi}=\{(p_i,\delta_i,\bar{p}_i),i\in\mathcal{I}\}$ below, which depends on the diversity of customers' types:  
\begin{itemize}
\item If $m_i$ is close to $m_n$ (i.e., $\frac{m_n}{m_i}\leq\frac{3}{2}$), contract option $i$ is $(p_0, \frac{m_n}{m_i}-\frac{1}{2}, \bar{p}_i>k)$.
\item If $m_i$ is much smaller than $m_n$ (i.e., $\frac{m_n}{m_i}>\frac{3}{2}$), contract option $i$ is $(p_0, 1,\bar{p}_i>k)$. 
\end{itemize}
\end{definition}

Now we evaluate the performance of the approximate contract for the uniform distribution of mean usage. To do this, we compare the gain under the approximate contract to the gain under the case of perfect information. To achieve this, we need to compute the supplier's (expected) profit under the baseline pricing, the approximate contract, and the case of perfect information. The derivation of these terms is similar with Section III. We first analyze the profit of a customer with mean usage $m$, and variation $\Delta$. To derive the expected profit, we average the profit over all possible $\Delta\in [0,1]$, and all possible $m\in [0,b]$ sequentially. We omit the details of the derivation and only report the results. The supplier's (expected) profit under the baseline pricing is 
\begin{equation}
P_0=\frac{1}{2}(p_0-c_0)b-2\hat{c}b.
\end{equation} 
Under the approximate contract, a type-$m_i$ customer with small variation $\Delta$ chooses the option $i$, and a type-$m_i$ customer with large variation $\Delta$ chooses the baseline pricing. The supplier's (expected) profit under the approximate contract is
\begin{eqnarray*}\small
\!\!\!P(\hat{\Phi})&=&\frac{1}{2}(p_0-c_0)b-2\hat{c}b+\frac{\hat{c}}{b}[\sum_{i=1}^{\lfloor\frac{4n+1}{6}\rfloor}2m_i(2b-2m_i)\ln\frac{2i}{2i-1}-\\
             & &\sum_{i=\lfloor\frac{4n+1}{6}\rfloor+1}^{n}\bigg((\frac{3}{2}m_i-m_n)(2b-\frac{1}{2}m_i-m_n)\ln\frac{2i-1}{2i-2}\\
						 & &+(\frac{1}{2}m_i+m_n)(\frac{1}{2}m_i+m_n-2b)\ln\frac{2i}{2i-1}\bigg)], n\geq 2.
\end{eqnarray*}
The supplier's profit under the case of perfect information is 
\begin{equation}
P^*=\frac{1}{2}(p_0-c_0)b-\frac{3}{4}\hat{c}b.
\end{equation}
Replacing $m_i=\frac{(2i-1)b}{2n}$, we can derive the ratio between the gain of the approximate contract and the gain under the case of perfect information as follows.
\begin{proposition}
The ratio of between the gain of approximate contract and the gain under the case of perfect information only depends on $n$ as follows.
\begin{itemize}
\item If $n\geq 2$, then the gain ratio is
\begin{eqnarray*}\small
             \!\!\!\frac{P(\hat{\Phi})-P_0}{P^*-P_0}\!\!\!&=&\frac{4}{5}[\sum_{i=1}^{\lfloor\frac{4n+1}{6}\rfloor}\frac{(2i-1)(2n-2i+1)}{n^2}\ln\frac{2i}{2i-1}-\\
						                                  & &\sum_{i=\lfloor\frac{4n+1}{6}\rfloor+1}^{n}\bigg(\frac{(6i-4n-1)(4n-2i-1)}{16n^2}\ln\frac{2i-1}{2i-2}+\frac{(2i+4n-3)(2i-4n-3)}{16n^2}\ln\frac{2i}{2i-1}\bigg)];
\end{eqnarray*}
\item If $n=1$, then the gain ratio is $\frac{P(\hat{\Phi})-P_0}{P^*-P_0}=\frac{4}{5}(-\frac{5}{16}\ln 2+\frac{15}{16}\ln\frac{3}{2})$.
\end{itemize}
\end{proposition}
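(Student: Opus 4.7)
My plan is a direct algebraic reduction: substitute the explicit midpoint formula $m_i = (2i-1)b/(2n)$ (and hence $m_n = (2n-1)b/(2n)$) into the closed-form expressions for $P_0$, $P(\hat{\Phi})$, and $P^*$ stated just before the proposition, and then simplify the resulting ratio $\frac{P(\hat{\Phi})-P_0}{P^*-P_0}$.

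First I would handle the denominator, which is immediate: both $P_0$ and $P^*$ share the revenue-minus-energy-cost term $\frac{1}{2}(p_0-c_0)b$, so $P^* - P_0 = -\frac{3}{4}\hat{c}b - (-2\hat{c}b) = \frac{5}{4}\hat{c}b$. The numerator can be written as $P(\hat{\Phi}) - P_0 = (\hat{c}/b)\,S$, where $S$ denotes the bracketed double sum in the expression for $P(\hat{\Phi})$. The ratio therefore equals $\frac{4S}{5b^2}$, which already accounts for the $\frac{4}{5}$ prefactor and for the $1/n^2$ that appears inside each summand once a factor of $b^2/n^2$ is pulled out of each substituted term.

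For $n \geq 2$, I would treat the two sums inside $S$ separately. The first sum indexes those $i$ for which $m_n/m_i > 3/2$, equivalently $i \leq \lfloor(4n+1)/6\rfloor$, so $\delta_i = 1$; here the coefficient $2m_i(2b-2m_i) = 4m_i(b-m_i)$ becomes $(2i-1)(2n-2i+1)b^2/n^2$ under the midpoint substitution, recovering the first term of the claimed ratio. For the second sum, corresponding to $m_n/m_i \leq 3/2$ and $\delta_i = m_n/m_i - 1/2$, I would expand the two products $(\frac{3}{2}m_i - m_n)(2b - \frac{1}{2}m_i - m_n)$ and $(\frac{1}{2}m_i + m_n)(\frac{1}{2}m_i + m_n - 2b)$ after substitution; each factor is linear in $(i,n)$ with denominator $4n$, so each product reduces to $b^2/(16n^2)$ times a quadratic in $(i,n)$, matching the stated numerators. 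A sanity check at $n = 2$ and $n = 3$ would confirm both the expansion and the fact that $\lfloor(4n+1)/6\rfloor$ coincides exactly with the threshold $m_n/m_i \leq 3/2$ for the midpoint $m_i$.

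For the $n=1$ boundary case the general formula cannot be applied because $\ln\frac{2i-1}{2i-2}$ is singular at $i=1$, so I would rederive $P(\hat{\Phi})$ directly. With a single contract option $(p_0, \frac{1}{2}, \bar{p}_1 > k)$ centered at $m_1 = b/2$, I would integrate the per-customer supplier profit against the joint uniform density of $m \in [0,b]$ and $\Delta \in [0,1]$, splitting the $(m,\Delta)$ plane into the region where a customer's realized demand range $[m(1-\Delta), m(1+\Delta)]$ lies inside the commitment band $[b/4, 3b/4]$ (the customer picks the contract, so the reserved capacity drops from $2b$ to $3b/4$) versus the region where it does not (the customer defaults to baseline). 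The main obstacle is this careful region splitting, because the indicator that a customer joins the contract depends nonlinearly on both $m$ and $\Delta$; once the two-dimensional integral is set up correctly, the antiderivatives produce $\ln 2$ and $\ln(3/2)$, and dividing by $\frac{5}{4}\hat{c}b$ with prefactor $\frac{4}{5}$ yields exactly $\frac{4}{5}(-\frac{5}{16}\ln 2 + \frac{15}{16}\ln\frac{3}{2})$.
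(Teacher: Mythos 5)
Your high-level plan (substitute $m_i=\frac{(2i-1)b}{2n}$, $m_n=\frac{(2n-1)b}{2n}$ into the displayed formulas for $P_0$, $P(\hat\Phi)$, $P^*$, and simplify) is exactly what the paper does --- it explicitly states ``Replacing $m_i=\frac{(2i-1)b}{2n}$, we can derive the ratio\ldots'' before the proposition, and the denominator computation $P^*-P_0=\frac{5}{4}\hat c b$ and the $n=1$ direct integration (with the commitment band $[\frac{b}{4},\frac{3b}{4}]$, reserved capacity $\frac{3b}{4}$, and antiderivatives giving $\ln 2$ and $\ln\frac{3}{2}$) are all right. However, your assertion that the expansion of the second sum ``matches the stated numerators'' does not hold, and this is precisely the step you would catch with the $n=2$ sanity check you propose. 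Substituting into $2b-\frac{1}{2}m_i-m_n$ gives
\[
2b-\frac{1}{2}m_i-m_n=\frac{8n-(2i-1)-(4n-2)}{4n}\,b=\frac{(4n-2i+3)b}{4n},
\]
not $\frac{(4n-2i-1)b}{4n}$ as the proposition prints. Concretely, at $n=2$, $i=2$ the formula for $P(\hat\Phi)$ has $(\frac{3}{2}m_2-m_n)(2b-\frac{1}{2}m_2-m_n)=\frac{3b}{8}\cdot\frac{7b}{8}=\frac{21b^2}{64}$, whereas the proposition's coefficient $\frac{(6i-4n-1)(4n-2i-1)}{16n^2}$ evaluates to $\frac{3\cdot 3}{64}=\frac{9}{64}$. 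The displayed $P(\hat\Phi)$ formula is the one that is consistent with a first-principles derivation --- the coefficient of $\ln\frac{2i-1}{2i-2}$ is $L_i(2b-U_i)$ with $L_i=m_i(1-\delta_i)=\frac{3}{2}m_i-m_n$ and $U_i=m_i(1+\delta_i)=\frac{1}{2}m_i+m_n$, and this is exactly $(\frac{3}{2}m_i-m_n)(2b-\frac{1}{2}m_i-m_n)$. So the statement's factor $(4n-2i-1)$ appears to be a misprint for $(4n-2i+3)$ (note the corrected factor is consistent with the second $\ln$ coefficient $(2i+4n-3)(2i-4n-3)$, since $-(2i-4n-3)=4n-2i+3$, and with the $n=1$ coefficients $\frac{5}{16}$ and $\frac{15}{16}$). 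Your proposal should perform the substitution and report the discrepancy rather than assert a match; apart from that the method is sound.
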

As described earlier, Figure \ref{ratio-gain-number-options} shows how the ratio between the gains changes with the number of contract options $n$. We see that the ratio increases concavely with the number of contract options, and it quickly reaches over $70\%$ as the number of options grows to $10$. The reason behind the result is that, as the number of contract options increases, the supplier can learn the mean usage information of each customer at better granularity, and thus predict the future capacity more accurately. 
\section{Conclusion}
This paper studies how to incentivize the customers to reveal their private demand information to the supplier. We propose a novel scheme of flexible contracts to motivate different types of customers to truthfully reveal and even control their demand variation with commitment. We address two key challenges in designing the optimal contract: i) the contract optimization problem is non-convex and intractable for a large number of various customer types, and ii) the design should be robust against customers' uncertain responses. We propose provably effective solutions to address these challenges. 

For future work, we would like to investigate how the supplier should design the optimal contract in multiple time periods. 

\section*{Appendix}
We first define $\Delta_{i,j}$ as the maximum variability degree $\Delta$ of type-$m_i$ customer whose whole usage is included within the contract range of option $j$. We have
\begin{equation}
\Delta_{i,j}=
\begin{cases}
\frac{m_j(1+\delta_j)}{m_i}-1, j<i\\
1-\frac{m_j(1-\delta_j)}{m_i}, j>i
\end{cases}
\end{equation}
Note that the notation will be also used in the proof of Lemma \ref{lem-Shat-bound-P1-pess-support-0}, Claim \ref{claim-feasible}, Lemma \ref{lem-better-solution}.


\begin{figure}[]
\centering
\includegraphics[height=8cm, width=10cm]{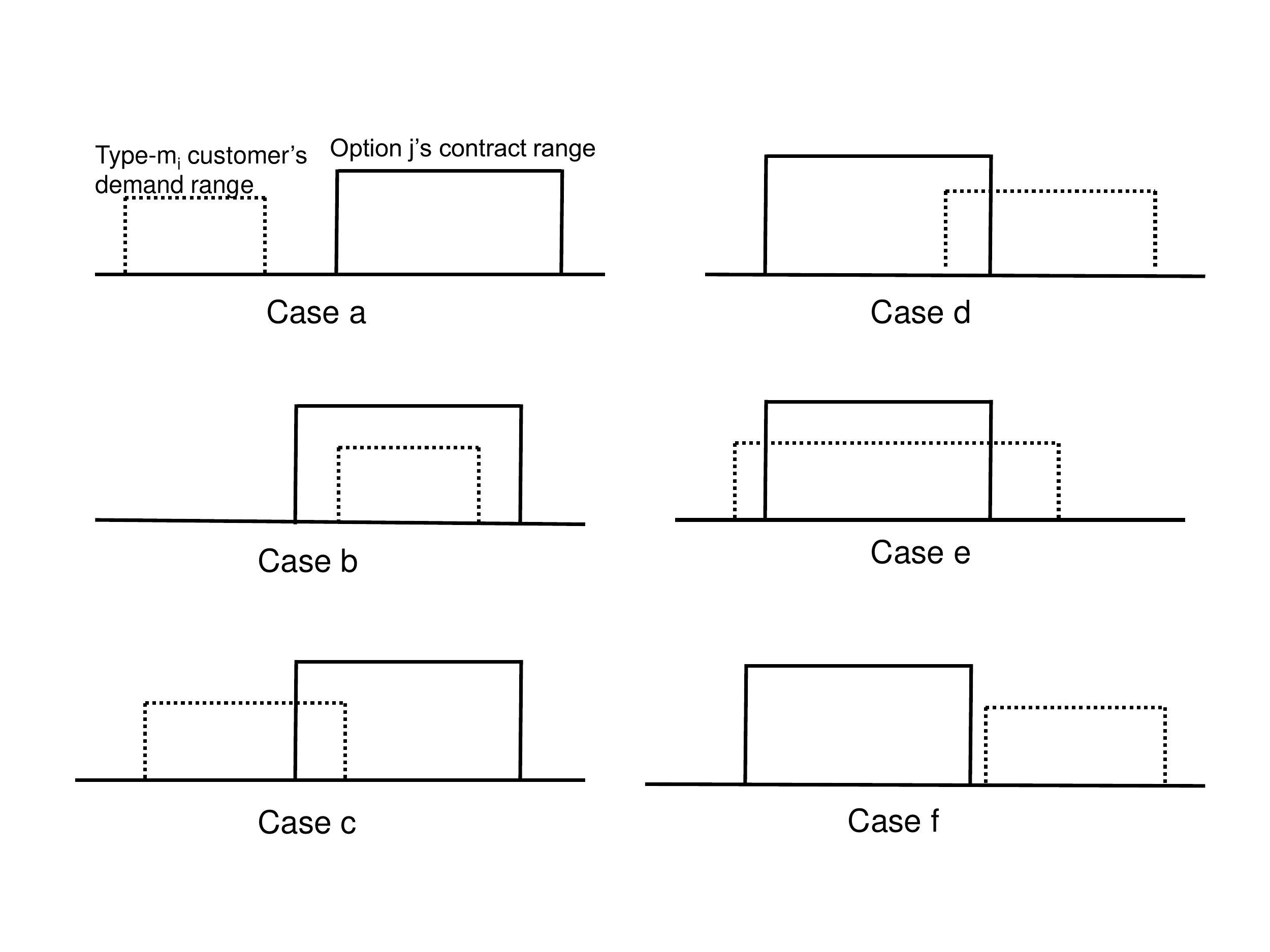}
\caption{Relationship between type-$m_i$ customer's demand range (i.e., $[m_i(1-\Delta), m_i(1+\Delta)]$) and the contract range (i.e., $[m_j(1-\delta_j), m_j(1+\delta_j)]$) of option $j$}
\label{fig-relationship-contract-range}
\end{figure}
\vspace{0.5cm}
\begin{lemma}\label{lem-cost-wrong}
$\E[C_j(m_i,\Delta)]$ depends on the relationship between type-$m_i$ customer's demand range (i.e., $[m_i(1-\Delta), m_i(1+\Delta)]$) and the contract range (i.e., $[m_j(1-\delta_j), m_j(1+\delta_j)]$) of option $j$. 
\begin{itemize}
\item Case a: If $m_i(1+\Delta)<m_j(1-\delta_j)$, then
\begin{equation*}\small
\E[C_j(m_i,\Delta)]=m_j(1-\delta_j)p_j.
\end{equation*} 

\item Case b: If $m_i(1-\Delta)>m_j(1-\delta_j)$ and $m_i(1+\Delta)\leq m_j(1+\delta_j)$, then 
\begin{equation*}\small
\E[C_j(m_i,\Delta)]=m_ip_j.
\end{equation*} 

\item Case c: If $m_i(1-\Delta)\leq m_j(1-\delta_j)$ and $m_j(1-\delta_j)\leq m_i(1+\Delta)\leq m_j(1+\delta_j)$, then 
\begin{equation*}\small
\E[C_j(m_i,\Delta)]=\frac{p_j}{4m_i}\big(m^2_i\Delta+\frac{(m_i-m_j(1-\delta_j))^2}{\Delta}+2m^2_i+2m_im_j(1-\delta_j)\big).
\end{equation*}

\item Case d: If $m_j(1-\delta_j)\leq m_i(1-\Delta)\leq m_j(1+\delta_j)$ and $m_i(1+\Delta)\geq m_j(1+\delta_j)$, then $\E[C_j(m_i,\Delta)]$ if $\bar{p}_j>k$ is
\begin{equation*}\small
 \E[C_j(m_i,\Delta)]=\frac{1}{4m_i}\big((k-p_j)m^2_i\Delta+\frac{(k-p_j)(m_j(1+\delta_j)-m_i)^2}{\Delta}+2km^2_i+2p_jm^2_i+2(p_j-k)m_im_j(1+\delta_j)\big);
\end{equation*}
Otherwise, if $\bar{p}_j\leq k$ 
\begin{equation*}\small
 \E[C_j(m_i,\Delta)]=\frac{1}{4m_i}\big((\bar{p}_j-p_j)m^2_i\Delta+\frac{(\bar{p}_j-p_j)(m_j(1+\delta_j)-m_i)^2}{\Delta}+2\bar{p}_jm^2_i+2p_jm^2_i+2(p_j-\bar{p}_j)m_im_j(1+\delta_j)\big).
\end{equation*} 

\item Case e: If $m_i(1-\Delta)\leq m_j(1-\delta_j)$ and $m_i(1+\Delta)\geq m_j(1+\delta_j)$, then if $\bar{p}_j>k$, 
\begin{eqnarray*}\small
 \E[C_j(m_i,\Delta)]&=&\frac{1}{4m_i}\bigg(km^2_i\Delta+\frac{(-4\delta_jp_j+k(1+\delta_j)^2)m^2_j-2(k(1+\delta_j)-2\delta_jp_j)m_im_j+km^2_i}{\Delta}\\
                    & &+2km^2_i+2(-k(1+\delta_j)+2p_j)m_im_j\bigg);
\end{eqnarray*} 
Otherwise, if $\bar{p}_j\leq k$, 
\begin{eqnarray*}\small
 \E[C_j(m_i,\Delta)]&=&\frac{1}{4m_i}\bigg(\bar{p}_jm^2_i\Delta+\frac{(-4\delta_jp_j+\bar{p}_j(1+\delta_j)^2)m^2_j-2(\bar{p}_j(1+\delta_j)-2\delta_jp_j)m_im_j+\bar{p}_jm^2_i}{\Delta}\\
                    & &+2\bar{p}_jm^2_i+2(-\bar{p}_j(1+\delta_j)+2p_j)m_im_j\bigg).
\end{eqnarray*} 
\item Case f: If $m_i(1-\Delta)>m_j(1+\delta_j)$, then if $\bar{p}_j>k$
\begin{equation*}\small
\E[C_j(m_i,\Delta)]=(p_j-k)m_j(1+\delta_j)+km_i;
\end{equation*} 
Otherwise, if $\bar{p}_j\leq k$,
\begin{equation*}\small
\E[C_j(m_i,\Delta)]=(p_j-\bar{p}_j)m_j(1+\delta_j)+\bar{p}_jm_i.
\end{equation*} 

\end{itemize}

\end{lemma}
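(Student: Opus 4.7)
The plan is to reduce $\E[C_j(m_i,\Delta)]$ to a one-dimensional integral of an effective cost function against the uniform density of the realized demand, and then handle each of the six geometric cases as a routine piecewise integration. First, I would fix the contract option $j$ and, for every realized demand $x\in[m_i(1-\Delta),m_i(1+\Delta)]$, determine the customer's cost-minimizing actual demand $x^*(x)$ under the cost rule $\bar{c}(x')+k\max(x-x',0)$. Three sub-regions of $x$ relative to the breakpoints $m_j(1-\delta_j)$ and $m_j(1+\delta_j)$ need to be examined: (i) if $x<m_j(1-\delta_j)$, raising actual demand up to $m_j(1-\delta_j)$ is free by the paper's footnote, so $x^*=m_j(1-\delta_j)$ and the incurred cost is $m_j(1-\delta_j)p_j$; (ii) if $x\in[m_j(1-\delta_j),m_j(1+\delta_j)]$, then $x^*=x$ (since $p_j<p_0<k$ forbids any downward adjustment), and the cost is $xp_j$; (iii) if $x>m_j(1+\delta_j)$, comparing the penalty cost $x\bar{p}_j+m_j(1+\delta_j)(p_j-\bar{p}_j)$ to the elasticity cost $m_j(1+\delta_j)p_j+k(x-m_j(1+\delta_j))$ shows the customer cuts to $m_j(1+\delta_j)$ iff $\bar{p}_j>k$, so in either sub-case the effective marginal cost above $m_j(1+\delta_j)$ is $q:=\min(k,\bar{p}_j)$.

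Second, I would assemble the resulting effective cost function $c_j^{\mathrm{eff}}(x)$, which is piecewise linear and equals $m_j(1-\delta_j)p_j$, $xp_j$, and $m_j(1+\delta_j)p_j+q(x-m_j(1+\delta_j))$ on the three regions respectively, and express $\E[C_j(m_i,\Delta)]=\int_{m_i(1-\Delta)}^{m_i(1+\Delta)}c_j^{\mathrm{eff}}(x)\frac{dx}{2m_i\Delta}$. The six cases (a)--(f) of the lemma are exactly the six ways the support $[m_i(1-\Delta),m_i(1+\Delta)]$ can overlap with the two contract breakpoints $m_j(1\pm\delta_j)$: entirely to the left (a), entirely inside (b), straddling only the left breakpoint (c), straddling only the right breakpoint (d), containing both breakpoints (e), and entirely to the right (f). In each case the integral splits into at most three constant/linear pieces whose antiderivatives reduce to $\int dx$ and $\int x\,dx=x^2/2$.

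Third, I would verify each closed-form expression by direct algebra. Cases (a), (b), and (f) are immediate: $c_j^{\mathrm{eff}}$ is constant on the entire support in (a), constant in (f) up to a linear term whose integral gives the midpoint $m_i$, and equal to $x p_j$ on the entire support in (b) so the expected value is $m_i p_j$. For case (c), denoting $a=m_j(1-\delta_j)$, the integral decomposes as $\frac{1}{2m_i\Delta}[ap_j(a-m_i(1-\Delta))+\tfrac{p_j}{2}(m_i^2(1+\Delta)^2-a^2)]$; collecting terms using $(a-m_i)^2=a^2-2am_i+m_i^2$ and $(1+\Delta)^2=1+2\Delta+\Delta^2$ yields $\frac{p_j}{4m_i}[(a-m_i)^2/\Delta+2am_i+2m_i^2+m_i^2\Delta]$, matching the stated formula. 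Cases (d) and (e) follow the same template, with the right-side effective slope being $q=k$ in the high-penalty regime and $q=\bar{p}_j$ in the low-penalty regime; this symmetry explains why the two displayed formulas in each of (d) and (e) differ only by the substitution $k\leftrightarrow\bar{p}_j$.

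The main obstacle is bookkeeping rather than any conceptual difficulty: in the straddling cases (c)--(e), combining boundary terms with the quadratic contributions into the compact form $(\cdot)\Delta+(\cdot)/\Delta+(\cdot)$ required by the lemma demands careful expansion of $m_i^2(1\pm\Delta)^2-m_j^2(1\pm\delta_j)^2$ and a completion-of-square step to isolate the $(m_i-m_j(1\pm\delta_j))^2/\Delta$ term. The most delicate case is (e), where the integral splits into two outer pieces $\int_{m_i(1-\Delta)}^{m_j(1-\delta_j)}$ (constant integrand $ap_j$) and $\int_{m_j(1+\delta_j)}^{m_i(1+\Delta)}$ (linear slope $q$ with intercept chosen to match continuity), plus the middle piece on $[m_j(1-\delta_j),m_j(1+\delta_j)]$; the two quadratic remainders from the outer integrals must be merged into the single $1/\Delta$ coefficient appearing in the stated formula. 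Once this bookkeeping is verified for case (c), the analogous manipulations for (d) and (e) follow mechanically, completing the proof.
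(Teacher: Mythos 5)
Your proposal is correct and follows essentially the same route as the paper's proof: it derives the effective piecewise-linear cost of a realized demand $x$ under option $j$ (with marginal cost $\min(k,\bar{p}_j)$ above $m_j(1+\delta_j)$, as the paper does via its high/low penalty split) and then integrates it against the uniform density $\frac{1}{2m_i\Delta}$ over $[m_i(1-\Delta),m_i(1+\Delta)]$ in each of the six overlap configurations. The explicit verification of case (c) and the observation that (d) and (e) are obtained by the same bookkeeping with slope $k$ replaced by $\bar{p}_j$ match the paper's computations, so no gap remains.
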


\begin{proof}
There are a total of six possible cases for the relationship between type-$m_i$ customer's usage range and contract range of option $j$ as shown in Figure \ref{fig-relationship-contract-range}. Thus, there are six cases for the cost $\E[C_j(m_i,\Delta)]$ as shown in the proposition. We discuss the cost case by case. In the first case (a) that $m_i(1+\Delta)<m_{j}(1-\delta_{j})$ (i.e., $\Delta<\frac{m_{j}(1-\delta_{j})}{m_i}-1$). We have 
\begin{equation*}
\E[C_j(m_i,\Delta)]=\int_{m_i(1-\Delta)}^{m_i(1+\Delta)}\frac{1}{2m_i\Delta}p_{j}m_{j}(1-\delta_{j})dx=m_{j}(1-\delta_{j})p_{j}.
\end{equation*} 

In the second case (b) that $m_i(1-\Delta)<m_{j}(1-\delta_{j})$ and $m_i(1+\Delta)\leq m_{j}(1+\delta_{j})$ (i.e., $\Delta\leq 1-\frac{m_{j}(1-\delta_{j})}{m_i}$ and $\Delta\leq \frac{m_{j}(1+\delta_{j})}{m_i}-1$). We have
\begin{equation*}
\E[C_j(m_i,\Delta)]=\int_{m_i(1-\Delta)}^{m_i(1+\Delta)}\frac{1}{2m_i\Delta}p_{j}xdx=m_ip_{j}.
\end{equation*} 

In the third case (c) that $m_i(1-\Delta)\leq m_{j}(1-\delta_{j})$ and $m_{j}(1-\delta_{j})\leq m_i(1+\Delta)\leq m_{j}(1+\delta_{j})$ (i.e., $\Delta\geq 1-\frac{m_{j}(1-\delta_{j})}{m_i}$, $\Delta\geq\frac{m_{j}(1-\delta_{j})}{m_i}-1$, and $\Delta\leq \frac{m_{j}(1+\delta_{j})}{m_i}-1$). We have 
\begin{eqnarray*}
 \E[C_j(m_i,\Delta)] &=&\int_{m_i(1-\Delta)}^{m_{j}(1-\delta_{j})}\frac{1}{2m_i\Delta}p_{j}m_{j}(1-\delta_{j})dx+\int_{m_{j}(1-\delta_{j})}^{m_i(1+\Delta)}\frac{1}{2m_i\Delta}p_{j}xdx\\
                     &=&\frac{p_{j}}{2m_i\Delta}\big(m_{j}(1-\delta_{j})(m_{j}(1-\delta_{j})-m_i(1-\Delta))+\frac{1}{2}(m^2_i(1+\Delta)^2-m^2_{j}(1-\delta_{j})^2)\big)\\
                     &=&\frac{p_{j}}{4m_i}\big(m^2_i\Delta+\frac{(m_i-m_{j}(1-\delta_{j}))^2}{\Delta}+2m^2_i+2m_im_{j}(1-\delta_{j})\big).
\end{eqnarray*} 

In the fourth case (d) that $m_{j}(1-\delta_{j})\leq m_i(1-\Delta)\leq m_{j}(1+\delta_{j})$ and $m_i(1+\Delta)\geq m_{j}(1+\delta_{j})$ (i.e., $\Delta\leq 1-\frac{m_{j}(1-\delta_{j})}{m_i}$, $\Delta\geq 1-\frac{m_{j}(1+\delta_{j})}{m_i}$ and $\Delta\geq\frac{m_{j}(1+\delta_{j})}{m_i}-1$). If $\bar{p}_j>k$, we have 
\begin{eqnarray*}
 \E[C_j(m_i,\Delta)]  &=&\int_{m_i(1-\Delta)}^{m_{j}(1+\delta_{j})}\frac{1}{2m_i\Delta}p_{j}xdx+\int_{m_{j}(1+\delta_{j})}^{m_i(1+\Delta)}\frac{1}{2m_i\Delta}\big(p_{j}m_{j}(1+\delta_{j})+k(x-m_{j}(1+\delta_{j}))\big)dx\\
                   &=&\frac{1}{2m_i\Delta}\big(\frac{1}{2}p_{j}(m^2_{j}(1+\delta_{j})^2-m^2_i(1-\Delta)^2)+(p_{j}-k)m_{j}(1+\delta_{j})(m_i(1+\Delta)-m_{j}(1+\delta_{j}))\\
									 & &+\frac{1}{2}k(m^2_i(1+\Delta)^2-m^2_{j}(1+\delta_i)^2)\big)\\
				           &=&\frac{1}{4m_i}\big((k-p_{j})m^2_i\Delta+\frac{(k-p_{j})(m_{j}(1+\delta_{j})-m_i)^2}{\Delta}+2km^2_i+2p_{j}m^2_i+2(p_{j}-k)m_im_{j}(1+\delta_{j})\big).
\end{eqnarray*} 
Similarly, if $\bar{p}_j\leq k$, We can compute $\E[C_j(m_i,\Delta)]=\frac{1}{4m_i}\big((\bar{p}_j-p_{j})m^2_i\Delta+\frac{(\bar{p}_j-p_{j})(m_{j}(1+\delta_{j})-m_i)^2}{\Delta}+2\bar{p}_jm^2_i+2p_{j}m^2_i+2(p_{j}-\bar{p}_j)m_im_{j}(1+\delta_{j})\big)$ by replacing $k$ with $\bar{p}_j$.

In the fifth case (e) that $m_i(1-\Delta)\leq m_{j}(1-\delta_{j})$ and $m_i(1+\Delta)\geq m_{j}(1+\delta_{j})$ (i.e., $\Delta\geq 1-\frac{m_{j}(1-\delta_{j})}{m_i}$, and $\Delta\geq\frac{m_{j}(1+\delta_{j})}{m_i}-1$). If $\bar{p}_j> k$, we have 
\begin{eqnarray*}\small
 \E[C_j(m_i,\Delta)]&=&\int_{m_i(1-\Delta)}^{m_{j}(1-\delta_{j})}\frac{1}{2m_i\Delta}p_{j}m_{j}(1-\delta_{j})dx+\int_{m_{j}(1-\delta_{j})}^{m_{j}(1+\delta_{j})}\frac{1}{2m_i\Delta}p_{j}xdx\\
                   & &+\int_{m_{j}(1+\delta_{j})}^{m_i(1+\Delta)}\frac{1}{2m_i\Delta}\big(p_{j}m_{j}(1+\delta_{j})+k(x-m_{j}(1+\delta_{j}))\big)dx\\
                   &=&\frac{1}{2m_i\Delta}\big(\frac{1}{2}(-4\delta_{j}p_{j}+k(1+\delta_{j})^2)m^2_{j}-(k(1+\Delta)(1+\delta_{j})-2(\delta_{j}+\Delta)p_{j})m_im_{j}\\
									 & &+\frac{1}{2}km^2_i(1+\Delta)^2\big)\\
				           &=&\frac{1}{4m_i}\big(km^2_i\Delta+\frac{(-4\delta_{j}p_{j}+k(1+\delta_{j})^2)m^2_{j}-2(k(1+\delta_{j})-2\delta_{j}p_{j})m_im_{j}+km^2_i}{\Delta}\\
				           & &+2km^2_i+2(-k(1+\delta_{j})+2p_{j})m_im_{j}\big).
\end{eqnarray*} 
Similarly, if $\bar{p}_j\leq k$, We can compute $\E[C_j(m_i,\Delta)]=\frac{1}{4m_i}\bigg(\bar{p}_jm^2_i\Delta+\frac{(-4\delta_jp_j+\bar{p}_j(1+\delta_j)^2)m^2_j-2(\bar{p}_j(1+\delta_j)-2\delta_jp_j)m_im_j+\bar{p}_jm^2_i}{\Delta}+2\bar{p}_jm^2_i+2(-\bar{p}_j(1+\delta_j)+2p_j)m_im_j\bigg)$ by replacing $k$ with $\bar{p}_j$.

In the sixth case (f) that $m_i(1-\Delta)>m_{j}(1+\delta_{j})$ (i.e., $\Delta<1-\frac{m_{j}(1+\delta_{j})}{m_i}$). If $\bar{p}_j> k$, we have
\begin{eqnarray*}
 \E[C_j(m_i,\Delta)]&=&\int_{m_i(1-\Delta)}^{m_i(1+\Delta)}\frac{1}{2m_i\Delta}\big(p_{j}m_{j}(1+\delta_{j})+k(x-m_{j}(1+\delta_{j}))\big)dx\\
                    &=&(p_{j}-k)m_{j}(1+\delta_{j})+km_i.
\end{eqnarray*}
Similarly, if $\bar{p}_j\leq k$, we can compute $\E[C_j(m_i,\Delta)]=(p_j-\bar{p}_j)m_j(1+\delta_j)+\bar{p}_jm_i$ by replacing $k$ with $\bar{p}_j$.

Hence, the lemma holds.
\end{proof}

\vspace{0.5cm}
\hspace{-0.4cm}\textbf{Proof of Proposition \ref{pro-equilibrium-P1}}
\begin{proof}
We first consider the equilibrium behavior of a customer with variability degree $\Delta\in [0, \delta_i]$. Its expected cost is $m_ip_i$ if it chooses the contract option while its expected cost is $m_ip_0$ if it chooses the baseline pricing scheme. Since $p_i<p_0$, the customer's expected cost under the contract scheme is always smaller than that under the baseline pricing scheme. Thus, the customer will subscribe to the contract. 

We then consider a customer's equilibrium behavior with $\Delta>\delta_i$. In the high penalty regime ($\bar{p}_i>k$), its expected cost is $m_ip_i+\frac{m_ik}{4\Delta}(\Delta-\delta_i)^2$ if it chooses the contract option and its cost is $m_ip_0$ if it chooses the baseline pricing scheme. Thus, to compute $\Delta_{th,i}$ which separates customers in choosing between the two pricing schemes, we let 
\begin{equation}
m_ip_i+\frac{m_ik}{4\Delta_{th,i}}(\Delta_{th,i}-\delta_i)^2=m_ip_0.
\end{equation}
Since $p_i\leq p_0$, the equation has two real value solutions: 
\begin{eqnarray}
\frac{k\delta_i+2(p_0-p_i)-\sqrt{(k\delta+2(p_0-p_i))^2-k^2\delta^2_i}}{k}\leq\delta_i,\nonumber\\
\frac{k\delta_i+2(p_0-p_i)+\sqrt{(k\delta+2(p_0-p_i))^2-k^2\delta^2_i}}{k}\geq\delta_i. 
\end{eqnarray}
Note that customers with $\Delta\in [0, \delta_i]$ will subscribe to the contract, which indicates $\Delta_{th,i}>\delta_i$. Thus, the first solution is not a feasible threshold. Furthermore, since each customer's variation must not exceed $1$, we have
\begin{equation}
\Delta_{th,i}=\min(1,\frac{k\delta_i+2(p_0-p_i)+\sqrt{(k\delta+2(p_0-p_i))^2-k^2\delta^2_i}}{k}). 
\end{equation}
We then prove customers with $\Delta\in [\delta_i, \Delta_{th,i}]$ will subscribe to the contract, and customers with $\Delta\in (\Delta_{th,i}, 1]$ choose the baseline pricing scheme. To do this, let $q(\Delta)=m_ip_i+\frac{m_ik}{4\Delta}(\Delta-\delta_i)^2$, and it suffices to prove $q(\Delta)\leq m_ip_0$ if $\Delta\in [\delta_i, \Delta_{th,i}]$ and $q(\Delta)> m_ip_0$ if $\Delta\in (\Delta_{th,i},1]$. To do this, there are two cases depending on $\Delta_{th,i}$. In the first case that $\frac{k\delta_i+2(p_0-p_i)+\sqrt{(k\delta_i+2(p_0-p_i))^2-k^2\delta^2_i}}{k}\leq 1$, then $\Delta_{th,i}=\frac{k\delta_i+2(p_0-p_i)+\sqrt{(k\delta_i+2(p_0-p_i))^2-k^2\delta^2_i}}{k}\in [\delta_i, 1]$. Thus, we have $q(\Delta=\Delta_{th,i})=m_ip_i+\frac{m_ik}{4\Delta_{th,i}}(\Delta_{th,i}-\delta_i)^2=m_ip_0$, $\Delta_{th,i}\in [\delta_i, 1]$. Since
\begin{equation}
\frac{\partial q(\Delta)}{\partial \Delta}=\frac{m_ik}{4}(1-\frac{\delta^2_i}{\Delta^2})\geq 0, \Delta\geq\delta_i,
\end{equation}
we have $q(\Delta)$ is increasing in $\Delta, \Delta\geq\delta_i$, and thus $q(\Delta)\leq m_ip_0$ if $\Delta\in [\delta_i, \Delta_{th,i}]$ and $q(\Delta)> m_ip_0$ if $\Delta\in (\Delta_{th,i},1]$. In the second case that $\frac{k\delta_i+2(p_0-p_i)+\sqrt{(k\delta_i+2(p_0-p_i))^2-k^2\delta^2_i}}{k}> 1$, then $\Delta_{th,i}=1$, and $q(\Delta=\Delta_{th,i})=m_ip_i+\frac{m_ik}{4\Delta_{th,i}}(\Delta_{th,i}-\delta_i)^2\leq m_ip_0$. Since $q(\Delta)$ is increasing in $\Delta, \Delta\geq\delta_i$, we have $q(\Delta)\leq m_ip_0$ if $\Delta\in [\delta_i, \Delta_{th,i}]$. Note that $\Delta$ should not exceed $1$, and thus there is no need to prove $q(\Delta)> m_ip_0$ if $\Delta\in (\Delta_{th,i},1]$. 

\end{proof}

\vspace{0.5cm}
\hspace{-0.4cm}\textbf{Proof of Proposition \ref{pro-opt-P2}}
\begin{proof}

By the definition of Problem $\mathbb{P}_2$, its domain is $\hat{R}=\{(p_i, \delta_i, \bar{p}_i), i\in\mathcal{I}|p_i=p_0, \delta_i\in [0, 1], \bar{p}_i>k, i\in\mathcal{I}\}$ and its objective is the same as Problem $\mathbb{P}_1$. For any solution in $\hat{R}$, we have $\forall i\in \mathcal{I}, p_i=p_0, \Delta_{th,i}=\delta_i$. Besides, the objective of Problem $\mathbb{P}_1$ is shown in Equation (\ref{eqn-profit-P1}). Thus, for any solution of Problem $\mathbb{P}_2$, the supplier's profit is simplified to
\begin{equation}
P(\Phi)=\sum_{i=1}^{n}Nh(m_i)(m_ip_0-\hat{c}(m_i(1+\delta_i)\delta_i+2m_n(1-\delta_i))-c_0m_i).
\end{equation}

Let $P_i(\Phi_i)=Nh(m_i)(m_ip_0-\hat{c}(m_i(1+\delta_i)\delta_i+2m_n(1-\delta_i))-c_0m_i)$ where $\Phi_i=(p_i, \delta_i, \bar{p}_i)$. To solve Problem $\mathbb{P}_2$, it is equivalent to solve $n$ subproblems. In particular, each subproblem $i (i=1,\ldots, n)$ is
\begin{equation}
\max_{\Phi_i} P_i(\Phi_i).
\end{equation}
subject to 
\begin{equation*}
0\leq \Delta_{th,i}=\delta_i\leq 1.
\end{equation*}
We will solve each subproblem $i$, respectively. We have 
\begin{eqnarray*}
P_i(\Phi_i)&=&Nh\big(m_i)(m_ip_0-\hat{c}(m_i\delta^2_i+(m_i-2m_n)\delta_i+2m_n)-c_0m_i\big).
\end{eqnarray*}
which is one variable function in $\delta_i\in [0,1]$. The critical point of $P_i(\Phi_i)$ occurs at $\delta_1=\frac{m_n}{m_i}-\frac{1}{2}>0$ since $m_n>m_i$. There are two situations on the critical point. One situation is that the critical point occurs outside the domain of $\delta_i\in [0,1]$. In this situation, we have $\frac{m_n}{m_i}-\frac{1}{2}>1$ ((i.e., $\frac{m_n}{m_i}>\frac{3}{2}$)), and we have $P_i(\Phi_i)$ obtains the maximum value $Nh(m_i)(m_ip_0-2m_i\hat{c}-c_0m_i)$ at $\delta_i=1$. The other situation is that the critical point occurs within the domain of $\delta_i\in [0,1]$. In this situation, we have $\frac{m_n}{m_i}-\frac{1}{2}<1$ (i.e., $\frac{m_n}{m_i}\leq\frac{3}{2}$), and the maximum $P_i(\Phi_i)$ is $Nh(m_i)(m_ip_0-\hat{c}(2m_n-\frac{(2m_n-m_i)^2}{4m_i})-c_0m_i)$ at $\delta_i=\frac{m_n}{m_i}-\frac{1}{2}$. Thus, the second statement of the proposition follows. 


\end{proof}

\vspace{0.5cm}
\hspace{-0.4cm}\textbf{Proof of Proposition \ref{pro-bound-P1}}
\begin{proof}
To prove the proposition, it suffices to prove the gain ratio between solution $\Phi'$ and the super-optimal solution $\hat{\Phi}$ in Lemma \ref{lem-super} is no smaller than $\frac{1}{2}$, since the optimal profit is no larger than than super-optimal profit. Let $C_i$ be the capacity for a type-$m_i$ customer prepared by the supplier. At $\Phi'$, we have $C_i=m_i(1+\delta_i)\delta_i+2m_n(1-\delta_i)$. As shown in Lemma \ref{lem-Shat-bound-P1-pess-support-1}, the gain ratio between solution $\Phi'$ and the super-optimal solution is no smaller than $\min\limits_{i\in\mathcal{I}}(\frac{2m_n-C_i}{2m_n-\frac{3}{2}m_i})$. . To do this, it suffices to prove $\forall i\in \mathcal{I}, \frac{2m_n-C_i}{2m_n-\frac{3}{2}m_i}\geq \frac{1}{2}$. Since at $\Phi'$, $\delta_i$ has two cases depending on the difference between $m_i$ and $m_n$ by Proposition \ref{pro-opt-P2}, $C_i$ also has two cases depending on the difference between $m_i$ and $m_n$. 

If $\frac{m_n}{m_i}\leq\frac{3}{2}$, we have $\delta_i=\frac{m_n}{m_i}-\frac{1}{2}$ by Proposition \ref{pro-opt-P2}, and therefore $C_i=m_i(1+\delta_i)\delta_i+2m_n(1-\delta_i)=2m_n-\frac{(2m_n-m_i)^2}{4m_i}$. Thus, we have
\begin{eqnarray*}
\frac{2m_n-C_i}{2m_n-\frac{3}{2}m_i}&=&\frac{\frac{(2m_n-m_i)^2}{4m_i}}{2m_n-\frac{3}{2}m_i}\\
                                    &=&\frac{\frac{m^2_n}{m^2_i}-\frac{m_n}{m_i}+\frac{1}{4}}{\frac{2m_n}{m_i}-\frac{3}{2}}\\
																		&\geq&\frac{1}{2}.
\end{eqnarray*}
The last inequality follows since $\frac{m_n}{m_i}\in [1,\frac{3}{2}]$ and the term $\frac{\frac{m^2_n}{m^2_i}-\frac{m_n}{m_i}+\frac{1}{4}}{\frac{2m_n}{m_i}-\frac{3}{2}}$ is increasing in $\frac{m_n}{m_i}\in [1,\frac{3}{2}]$.

If $\frac{m_n}{m_i}>\frac{3}{2}$, we have $\delta_i=1$ by Proposition \ref{pro-opt-P2}, and therefore $C_i=m_i(1+\delta_i)\delta_i+2m_n(1-\delta_i)=2m_i$. Thus, we have
\begin{eqnarray*}
\frac{2m_n-C_i}{2m_n-\frac{3}{2}m_i}&=&\frac{2m_n-2m_i}{2m_n-\frac{3}{2}m_i}\\
                                    &=&\frac{\frac{2m_n}{m_i}-2}{\frac{2m_n}{m_i}-\frac{3}{2}}\\
																		&\geq&\frac{2}{3}. 
\end{eqnarray*}
The last inequality follows since $\frac{m_n}{m_i}\in [\frac{3}{2},+\infty]$ and the term $\frac{\frac{2m_n}{m_i}-2}{\frac{2m_n}{m_i}-\frac{3}{2}}$ is increasing in $\frac{m_n}{m_i}\in [\frac{3}{2},+\infty]$. Thus, the proposition holds.


\end{proof}

\vspace{0.5cm}
\hspace{-0.4cm}\textbf{Proof of Lemma \ref{lem-super}}

To prove Lemma \ref{lem-super}, we will first prove Claim \ref{claim-high-penalty-opt}.
\begin{claim}\label{claim-high-penalty-opt}
At super-optimality, the supplier will always choose a high penalty $\bar{p}_i$ for each option contract $i\in\mathcal{I}$ such that the customers will actively employ their elasticity. 
\end{claim}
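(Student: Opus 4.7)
Plan: Since the super-optimal problem of Lemma~\ref{lem-super} drops the IC constraint~(\ref{eqn-mechanism-design}), the objective $P(\Phi)$ has no coupling between the different contract options and decouples into a sum of per-type subproblems, one per $i\in\mathcal{I}$. It therefore suffices to show the claim for a single type, i.e., that $\sup_{p_i,\delta_i} P_i^H(p_i,\delta_i)\ge \sup_{p_i,\delta_i,\bar{p}_i\le k} P_i^L(p_i,\delta_i,\bar{p}_i)$.

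The first observation is that $P_i^H$ in~(\ref{eqn-profit-high-penalty}) is completely independent of $\bar{p}_i$ once $\bar{p}_i>k$: each subscribed customer exercises elasticity identically and delivers revenue $m_ip_i$, energy $m_i$, and capacity $m_i(1+\delta_i)$ regardless of the precise value of the penalty. Thus the high-regime optimization is effectively two-dimensional in $(p_i,\delta_i)$. In contrast, $P_i^L$ depends nontrivially on $\bar{p}_i\le k$ through both the subscription threshold $\Delta_{th,i}^L$ given in~(\ref{eqn-threshold-low}) and through the penalty revenue collected from subscribers with $\Delta>\delta_i$; the customer's lack of elasticity in this regime forces the supplier to provide extra energy and to reserve per-subscriber capacity $m_i(1+\Delta_{th,i}^L)$ instead of $m_i(1+\delta_i)$.

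The main step would then be constructive: given any low-penalty candidate $(p_i,\delta_i,\bar{p}_i^L)$ with threshold $\Delta_{th,i}^L$, I would exhibit a high-penalty alternative $(\tilde{p}_i,\delta_i,\tilde{\bar{p}}_i)$ with $\tilde{\bar{p}}_i>k$ yielding at least as high a profit. A natural choice preserves the contract range $\tilde{\delta}_i=\delta_i$ and matches the subscription by setting $\tilde{p}_i = p_0 - k(\Delta_{th,i}^L-\delta_i)^2/(4\Delta_{th,i}^L)$, so that~(\ref{eqn-threshold-high}) produces $\tilde{\Delta}_{th,i}^H=\Delta_{th,i}^L$ and the same customer set subscribes under both. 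The comparison then collapses to three per-subscriber terms: a flat revenue drop of $m_i(k-\bar{p}_i^L)(\Delta_{th,i}^L-\delta_i)^2/(4\Delta_{th,i}^L)$; an energy-cost saving of $c_0 m_i(\Delta-\delta_i)^2/(4\Delta)$ on every $\Delta>\delta_i$ (the elasticity cap removes the excess demand); a capacity-cost saving of $\hat{c}m_i(\Delta_{th,i}^L-\delta_i)$ applied to every subscribed customer; and on the revenue side, the supplier forgoes the penalty $\bar{p}_i^L m_i(\Delta-\delta_i)^2/(4\Delta)$ for $\Delta>\delta_i$.

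The hard part will be verifying that these capacity and energy savings, aggregated over the subscriber mass $\Delta\in[0,\Delta_{th,i}^L]$, uniformly dominate the lost flat and penalty revenue. This should be enforceable via the standing assumptions $\hat{c}\le p_0/2$, $k\ge 2\hat{c}$, and $k>p_0$, which bound the relative magnitudes so that the elasticity-induced savings outweigh the revenue loss; the most delicate case is when $\Delta_{th,i}^L-\delta_i$ is large relative to $\hat{c}$, where the threshold-preserving alternative above may not by itself suffice. In that case, I would fall back on a refined alternative that also moves $\tilde{\delta}_i$ (e.g., to $\Delta_{th,i}^L$) so as to reduce the penalty-revenue loss, or, as a last resort, directly characterize the optimizer of $P_i^L$ via its first-order conditions in $(p_i,\delta_i,\bar{p}_i)$ and compare its value against the closed-form high-penalty optimum, which Lemma~\ref{lem-super} will derive by sequentially optimizing $\delta_i$ for fixed $\Delta_{th,i}$ and then $\Delta_{th,i}$.
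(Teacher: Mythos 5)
Your decomposition into per-type subproblems and the observation that $P_i^H$ depends only on $(p_i,\delta_i)$ once $\bar p_i>k$ both match the paper and are correct. However, your central step is an admitted gap, not a proof: you yourself flag that the threshold-matching alternative $(\tilde p_i,\delta_i,\tilde{\bar p}_i)$ ``may not by itself suffice'' when $\Delta_{th,i}^L-\delta_i$ is large, and neither fallback (shifting $\tilde\delta_i$ or a brute-force first-order analysis of $P_i^L$ over three variables) is carried out. As written, the proposal does not establish the dominance of the high-penalty regime.

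The paper closes the argument with a different and shorter maneuver that avoids any per-instance matching. It defines a relaxed low-penalty problem by replacing the energy-cost term $C_e(m_i)$ with its lower bound $c_0 m_i$ (justified by showing $\partial C_e/\partial\Delta_{th,i}\ge 0$ and $C_e\ge c_0 m_i$ at $\Delta_{th,i}=\delta_i$). Call the resulting optimal value $\hat S_i^L$; then $S_i^L\le \hat S_i^L$. The crucial point is that the relaxed problem's optimum is attained at $p_i=p_0$, $\Delta_{th,i}=\delta_i=\min(1,\tfrac{m_n}{m_i}-\tfrac12)$, where the penalty-revenue integral over $(\delta_i,\Delta_{th,i})$ vanishes and the capacity term becomes $m_i(1+\delta_i)\delta_i$ --- a configuration that yields the identical profit when evaluated in the high-penalty regime. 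Hence $S_i^L\le\hat S_i^L=\hat S_i^H\le S_i^H$. The contrast with your plan: you compare the two regimes pointwise over a parametrized family of low-penalty contracts, which runs into regime-dependent casework; the paper instead constructs a single tractable upper bound on the entire low-penalty optimum that degenerates to a shared feasible point. If you pursue your route, you will need to actually carry out the ``last resort'' first-order characterization of $P_i^L$ in $(p_i,\delta_i,\bar p_i)$, or otherwise prove that one of your two candidate alternatives always dominates --- neither of which you have done.
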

\begin{proof}
By removing constraint (\ref{eqn-mechanism-design}) of Problem $\mathbb{P}_1$, the supplier's profit from all type-$m_i$ customers, which is denoted by $P_i$ is determined only by contract option $i$. Thus, to prove the claim, we will show that at the optimality of $P_i$, each contract option $i$ should be in high penalty regime (e.g., $\bar{p}_i>k$). In other words, the maximum expected profit $S^L_i$ in the low penalty regime is always not larger than the maximum expected profit $S^H_i$ in the high penalty regime. To do this, we will first compute $P_i$ under high penalty regime and low penalty regime, respectively. 

Under high penalty regime, we have 
\begin{equation}\small
\vspace{-0.1cm} 
P_i=Nh(m_i)\bigg(\big(m_ip_i\Delta_{th,i}+m_ip_0(1-\Delta_{th,i})\big)-c_0m_i-\hat{c}\big(m_i(1+\delta_i)\Delta_{th,i}+2m_n(1-\Delta_{th,i})\big)\bigg).
\end{equation}
Thus, the optimization problem for $P_i$ in high penalty regime is
\begin{equation}\label{eqn-max-high}
\max_{0\leq \delta_i\leq 1, p_i\leq p_0, \bar{p}_i\geq k, 0\leq \Delta_{th,i}\leq 1} P_i.
\end{equation}
subject to constraint 
\begin{equation*}
\Delta_{th,i}=\min(1,\frac{k\delta_i+2(p_0-p_i)+\sqrt{(k\delta_i+2(p_0-p_i))^2-k^2\delta^2_i}}{k}).
\end{equation*}

Under low penalty regime, a type-$m_i$ customer with medium variability degree $\Delta\in [\delta_i, \Delta_{th,i}]$ contributes $m_ip_i+\frac{m_i\bar{p}_i}{4\Delta}(\Delta-\delta_i)^2$ amount of revenue, which is different from the $m_ip_i$ term in the high penalty regime. Moreover, note that the upper bound of electricity bought by a customer choosing the contract is $m_i(1+\Delta_{th,i})$ instead of $m_i(1+\delta_i)$, since customers do not need to exercise their elasticity. Furthermore, we can derive the energy consumption cost of each type $m_i$ customer by the realized demand distribution of each customer with variability $\Delta$ and the distribution of $\Delta$ is
\begin{eqnarray}
C_e(m_i) &=&m_ic_0\big (\delta_i+1-\Delta_{th,i}+\frac{1}{8}(\Delta^2_{th,i}-\delta^2_i)+\frac{\delta^2_i}{4}\ln\frac{\Delta_{th,i}}{\delta_i}\nonumber\\
    & &+(1-\frac{1}{2}\delta_i)(\Delta_{th,i}-\delta_i)\big ),
\end{eqnarray}
which is different from the energy generation cost term $m_ic_0$ in the high penalty regime since in the low penalty regime no customers will leverage their elasticity to shift their demand exceeding the upper bound demand of the contract. Thus, we have
\begin{eqnarray}
\hspace{-0.3cm}P_i \hspace{-0.3cm}&=&\hspace{-0.3cm}Nh(m_i)\bigg (\Big (\int_{0}^{\delta_i}f(\Delta)m_ip_id\Delta+\int_{\delta_i}^{\Delta_{th,i}}f(\Delta)\big (m_ip_i+\frac{m_i\bar{p}_i}{4\Delta}(\Delta-\delta_i)^2\big)d\Delta\Big)+\int_{\Delta_{th,i}}^{1}m_ip_0d\Delta\nonumber\\
  & &-\hat{c} \big (m_i(1+\Delta_{th,i})+2m_n(1-\Delta_{th,i})\big)-C_e(m_i)\bigg).
\end{eqnarray}
Thus, the optimization problem for $P_i$ in low penalty regime is
\begin{equation}\label{eqn-max-low}
\max_{0\leq \delta_i\leq 1, p_i\leq p_0, \bar{p}_i\leq k, 0\leq \Delta_{th,i}\leq 1} P_i,
\end{equation}
subject to constraint 
\begin{equation*}
\Delta_{th,i}=\min(1,\frac{\bar{p}_i\delta_i+2(p_0-p_i)+\sqrt{(\bar{p}_i\delta_i+2(p_0-p_i))^2-\bar{p}^2_i\delta^2_i}}{\bar{p}_i}).
\end{equation*}
from table 1.

We first replace the $C_e(m_i)$ term in Problem (\ref{eqn-max-low}) by $c_0m_i$ and construct another optimization problem. The optimization goal of the new problem is 
\begin{eqnarray}
\hspace{-0.3cm}P_i \hspace{-0.3cm}&=&\hspace{-0.3cm}Nh(m_i)\bigg (\Big (\int_{0}^{\delta_i}f(\Delta)m_ip_id\Delta+\int_{\delta_i}^{\Delta_{th,i}}f(\Delta)\big (m_ip_i+\frac{m_i\bar{p}_i}{4\Delta}(\Delta-\delta_i)^2\big)d\Delta\Big)+\int_{\Delta_{th,i}}^{1}m_ip_0d\Delta\nonumber\\
  & &-\hat{c} \big (m_i(1+\Delta_{th,i})+2m_n(1-\Delta_{th,i})\big)-c_0m_i\bigg).
\end{eqnarray}
while the constraints of the new problem remain the same. Specifically, the new optimization problem is
\begin{equation}\label{eqn-max-low-new}
\max_{0\leq \delta_i\leq 1, p_i\leq p_0, \bar{p}_i\leq k, 0\leq \Delta_{th,i}\leq 1} P_i,
\end{equation}
subject to constraint 
\begin{equation*}
\Delta_{th,i}=\min(1,\frac{\bar{p}_i\delta_i+2(p_0-p_i)+\sqrt{(\bar{p}_i\delta_i+2(p_0-p_i))^2-\bar{p}^2_i\delta^2_i}}{\bar{p}_i}).
\end{equation*}
We can show $C_e(m_i)\geq c_0m_i$. Since 
\begin{equation}
\frac{\partial C_e(m_i)}{\partial \Delta_{th,i}}=c_0m_i(\frac{\Delta_{th,i}}{4}+\frac{\delta^2_i}{4\Delta_{th,i}}-\frac{\delta_i}{2})\geq 0,
\end{equation}
$C_e(m_i)$ is increasing in $\Delta_{th,i}$. Note that $\Delta_{th,i}\geq\delta_i$. Thus, $C_e(m_i,\Delta_{th,i})\geq C_e(m_i,\delta_i)\geq c_0m_i$. Thus, $S^L_i$ is not larger than the maximum value $\hat{S}^L_i$ under the new problem (\ref{eqn-max-low-new}). Thus, to prove the optimal $P_i$ always occur in high penalty regime, it suffices to prove $\hat{S}^L_i\leq S^H_i$. 

To prove $\hat{S}^L_i\leq S^H_i$, we can show the optimal solution of the new problem is $p_i=p_0, \Delta_{th,i}=\delta_i=\min(1,\frac{m_n}{m_i}-\frac{1}{2})=\delta, \bar{p}_i\leq k, \hat{S}^L_i=N\big(m_ip_0-(m_i\delta^2+(m_i-2m_n)\delta+2m_n)-c_0m_i\big)$. In the high penalty regime, the supplier's expected profit at $p_i=p_0, \Delta_{th,i}=\delta_i=\min(1,\frac{m_n}{m_i}-\frac{1}{2})=\delta, \bar{p}_i\geq k$ is $\hat{S}^H_i=N\big(m_ip_0-(m_i\delta^2+(m_i-2m_n)\delta+2m_n)-c_0m_i\big)$, which is equal to $\hat{S}^L_i$. Thus, since $p_i=p_0, \Delta_{th,i}=\delta_i=\min(1,\frac{m_n}{m_i}-\frac{1}{2})=\delta, \bar{p}_i>k$ is a special case in the high penalty regime, we have $S^H_i\geq \hat{S}^H_i\geq \hat{S}^L_i$. Thus, the optimal solution of $P_i$ always occur in high penalty regime. Thus, the claim holds.
\end{proof}

Now, we will prove Lemma \ref{lem-super}.
\begin{proof}
As shown in the above Claim \ref{claim-high-penalty-opt}, to find the supplier's profit, it suffices to consider high penalty regime of contract option $i$.
By only removing constraint (\ref{eqn-mechanism-design}) of Problem $\mathbb{P}_1$, we have each type-$m_i$ customer either chooses the contract option $i$designed for it or chooses the baseline pricing scheme. Thus, the supplier's profit from all type-$m_i$ customers, which is denoted by $P_i$ is determined only by contract option $i$. Thus, to optimize the supplier's profit from all customers, it suffices to optimize $P_i$ in high penalty regime.  

We have the supplier's expected profit from all type-$m_i$ customers in high penalty regime is 
\begin{eqnarray*}
P_i\!\!\!&=&\!\!\!Nh(m_i)\big(m_i(p_i\Delta_{th,i}+p_0(1-\Delta_{th,i}))-\hat{c}(m_i(1+\delta_i)\Delta_{th,i}+2m_n(1-\Delta_{th,i}))-c_0m_i\big)\\
   &=&N(h(m_i)\big(m_1(p_0-\frac{k}{4}(\Delta_{th,i}-\delta_i)^2)-\hat{c}(m_i(1+\delta_i)\Delta_{th,i}+2m_n(1-\Delta_{th,i}))-c_0m_i\big).
\end{eqnarray*}
The second equality follows by Table 1. The problem to optimize the supplier's expected profit from all type-$m_i$ customers in high penalty regime becomes
\begin{equation}
\max_{0\leq \delta_i\leq \Delta_{th,i}\leq 1} P_i.
\end{equation}

To solve the problem, we first fix $\Delta_{th,i}$ at a certain value within $[0,1]$, and optimize $\delta_i$, $(\delta_i\in [0,\Delta_{th,i}])$ under a fixed $\Delta_{th,i}$. We have $P_i$ under a fixed $\Delta_{th,i}$ is 
\begin{equation*}
P_i(\delta_i)=Nh(m_i)\big (-\frac{k}{4}m_i\delta^2_i+m_i(\frac{k}{2}-\hat{c})\delta_i\Delta_{th,i}-\frac{1}{4}m_ik\Delta^2_{th,i}-\hat{c}(2m_n-2m_n\Delta_{th,i}+m_i\Delta_{th,i})+m_ip_0-m_ic_0\big ). 
\end{equation*}
where $\delta_i\in [0,\Delta_{th,i}]$.

Since $p_0\geq 2\hat{c}$ by the assumption in Section \ref{sec-detail} and $k\geq p_0$, we have $k\geq 2\hat{c}$. Thus, the critical point of $P_i(\delta_i), \delta_i\in [0, \Delta_{th,i}]$ (i.e., $\delta_i=(1-\frac{2\hat{c}}{k})\Delta_{th,i})$ is within $[0,\Delta_{th,i}]$ and thus $P_i(\delta_i)$ obtains the maximum value 
\begin{equation*}
P^*_i(\delta_i)=Nh(m_i)\Big (\frac{m_i\hat{c}}{k}(\hat{c}-k)\Delta^2_{th,i}-\hat{c}(m_i-2m_n)\Delta_{th,i}+m_ip_0-2m_n\hat{c}-m_ic_0\Big ).
\end{equation*}
at $\delta_i=(1-\frac{2\hat{c}}{k})\Delta_{th,i}$.

We then optimize $\delta_{th,i}$. The critical point of $w^*_i(\delta_i)$, $\Delta_{th,i}\in [0,1]$ occurs at $\Delta_{th,i}=\frac{k}{2(k-\hat{c})}(\frac{2m_n}{m_i}-1)$. Note that the domain of $\Delta_{th,i}$ is $[0,1]$. Thus, there are two cases. In the first case that $\frac{k}{2(k-\hat{c})}(\frac{2m_n}{m_i}-1)>1$ (i.e., $\frac{m_n}{m_i}>\frac{k-\hat{c}}{k}+\frac{1}{2}$), then $w^*_i(\delta_i)$ is increasing in $\Delta_{th,i}\in [0,1]$ and thus the maximum value of all $P^*_i(\delta_i)$ for $\Delta_{th,i}\in [0,1]$ is 
\begin{equation}
Nh(m_i)\big(m_ip_0-m_ic_0-2m_i\hat{c}+\frac{m_i\hat{c}^2}{k}\big),
\end{equation}
where $\Delta_{th,i}=1$, and $\delta_i=1-\frac{2\hat{c}}{k}$, and $p_i=p_0-\frac{\hat{c}^2}{k}$ by Table 1.

In the second case that $\frac{k}{2(k-\hat{c})}(\frac{2m_n}{m_i}-1)\leq 1$ (i.e., $\frac{m_n}{m_i}\leq \frac{k-\hat{c}}{k}+\frac{1}{2}$), then the extremum of $w^*_i(\delta_i)$ is within the domain of $\Delta_{th,i}\in [0,1]$ and thus the maximum value of all $w^*_i(\delta_i)$ for $\Delta_{th,i}\in [0,1]$ is 
\begin{equation}
Nh(m_i)\big (m_ip_0-m_ic_0-2m_n\hat{c}+\frac{\hat{c} k(2m_n-m_i)^2}{4m_i(k-\hat{c})}\big ),
\end{equation}
where $\Delta_{th,i}=\frac{k}{2(k-\hat{c})}(\frac{2m_n}{m_i}-1)$, $\delta_i=\frac{k-2\hat{c}}{2(k-\hat{c})}(\frac{2m_n}{m_i}-1)$, and $p_i=p_0-\frac{\hat{c}^2}{2(k-\hat{c})}(\frac{2m_n}{m_i}-1)$ by Table 1.

Note that in either of the two cases, the maximum $P_i$ is decreasing in $k$. Thus, the lemma holds.

\end{proof}

\vspace{0.5cm}
\hspace{-0.4cm}\textbf{Proof of Lemma \ref{lem-Shat-bound-P1-pess-support-0}}

To prove the lemma, let $C_i$ be the capacity requirement of a type-$m_i$ of customer under the pessimistic setting at solution $\Phi'$, if the supplier sets contract defined by $\Phi'$. For example, suppose that there are two types customers with $m_1=1, m_2=1.2$ and two contract options designed for them with $\delta_1=0.7, \delta_2=0.5, p_1=p_2=10, \bar{p}_1,\bar{p}_2>k$. For type-$m_1$ customer with $\Delta\in [0,0.4]$, choosing contract option $1$ incurs the same cost as option $2$. Since choosing option $2$ results into lower profit to the supplier, we assume that under the pessimistic setting it chooses option $2$. For type-$m_1$ customer with $\Delta\in [0.4,\Delta_{th,1}]$, it chooses its dedicated option $1$. Thus, we can calculate $C_1=m_2(1+\delta_2)\cdot0.4+m_1(1+\delta_1)(\Delta_{th,1}-0.4)+2m_2(1-\Delta_{th,1})$.
We will first prove Claim \ref{claim-monotonicity} and Claim \ref{claim-support}, Lemma \ref{lem-Shat-bound-P1-pess-support-1}, Claim \ref{claim-contract-increase} that will be used to prove Lemma \ref{lem-Shat-bound-P1-pess-support-0}. Claim \ref{claim-monotonicity} states the monotonicity of cost function $\E[C_j(m_i,\Delta)]$ at any contract, which is shown as follows.
\begin{claim}\label{claim-monotonicity}
For any solution at high penalty regime (i.e., $\forall i\mathcal{I}, \bar{p}_i>k$), $\E[C_j(m_i,\Delta)], i\neq j$ has the following property. 
\begin{itemize}
\item If $m_i\in [m_{j}(1-\delta_{j}), m_{j}(1+\delta_{j})]$, then $\E[C_j(m_i,\Delta)]$ remains constant with regard to $\Delta\in [0,\Delta_{i,j}]$ and is strictly increasing in $\Delta\geq\Delta_{i,j}$, and $\E[C_j(m_i,\Delta)]\geq m_ip_{j}$.
\item If $m_i\notin [m_{j}(1-\delta_{j}), m_{j}(1+\delta_{j})]$ (i.e. $j>i, m_i<m_{j}(1-\delta_{j})$ or $j<i, m_i>m_{j}(1+\delta_{j})$), then $\E[C_j(m_i,\Delta)]$ is non decreasing in $\Delta\in [0,1]$ and $\E[C_j(m_i,\Delta)]> m_ip_{j}$. 
\end{itemize}
\end{claim}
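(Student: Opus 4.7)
The plan is to verify the claim directly from the six-branch formula for $\E[C_j(m_i,\Delta)]$ proved in Lemma \ref{lem-cost-wrong}, using the high-penalty branch (so the coefficient multiplying over-range contributions is $k$, since $\bar p_j>k$). Viewed as a function of $\Delta\in[0,1]$, $\E[C_j(m_i,\cdot)]$ is a concatenation of pieces coming from cases (a)--(f), and the geometric interpretation of $\Delta_{i,j}$ identifies exactly where the transitions occur. The strategy is therefore to (i) identify the ordered sequence of active cases as $\Delta$ grows, (ii) verify continuity at each transition, and (iii) verify the sign of the derivative on each piece.

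For the first bullet ($m_i\in[m_j(1-\delta_j),m_j(1+\delta_j)]$), the demand range sits inside the contract range for $\Delta\le\Delta_{i,j}$, so case (b) applies and yields $\E[C_j(m_i,\Delta)]=m_ip_j$, giving both the constancy on $[0,\Delta_{i,j}]$ and the lower bound with equality. As $\Delta$ crosses $\Delta_{i,j}$, the range breaches the lower end when $j>i$ (landing in case (c)) or the upper end when $j<i$ (landing in case (d)). Differentiating case (c) gives $\partial_\Delta\E[C_j(m_i,\Delta)]=\frac{p_j}{4m_i}\bigl(m_i^2-(m_i-m_j(1-\delta_j))^2/\Delta^2\bigr)$, which vanishes precisely at $\Delta=\Delta_{i,j}$ and is strictly positive beyond it. The case-(d) derivative has the same quadratic-in-$1/\Delta$ form with $p_j$ replaced by $k-p_j$, and $k-p_j>0$ (since $p_j\le p_0<k$) makes it strictly positive beyond $\Delta_{i,j}$. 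If $\Delta$ grows further into case (e), the same structure applies and the derivative matches the preceding piece at the boundary, so strict monotonicity is preserved throughout $(\Delta_{i,j},1]$.

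For the second bullet, by symmetry take $j>i$ with $m_i<m_j(1-\delta_j)$. For $\Delta\le m_j(1-\delta_j)/m_i-1$ we are in case (a) with the constant value $m_j(1-\delta_j)p_j$, which already exceeds $m_ip_j$ strictly since $m_j(1-\delta_j)>m_i$. Beyond that threshold we enter case (c) (and possibly later case (e)); the derivative computations from the first bullet again give non-negativity on each piece, with the transition point being exactly the zero of the case-(c) derivative, so the joined function is continuous and non-decreasing. The symmetric case $j<i$ with $m_i>m_j(1+\delta_j)$ starts in case (f) with the constant value $(p_j-k)m_j(1+\delta_j)+km_i$; the inequality $(k-p_j)\bigl(m_i-m_j(1+\delta_j)\bigr)>0$ (again from $k>p_j$) translates into the required strict lower bound $>m_ip_j$.

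The main obstacle is bookkeeping: tracking which of cases (a)--(f) is active on which sub-interval of $[0,1]$ (the order depends on the sign of $m_i-m_j$ and on how $m_i,\delta_j$ interact) and then checking continuity at each of up to three case transitions. The derivative computations themselves are routine quadratics in $1/\Delta$, but one must verify that, upon entering case (e), the transition value of $\Delta$ (inherited from case (c) or (d)) lies on or past the critical point of the case-(e) derivative; this reduces to an algebraic inequality that uses $\bar p_j>k$ (to pick up the coefficient $k$) and $p_j<k$ (to sign the leading coefficient), and I expect it to fall out by matching the one-sided derivatives across the case boundary.
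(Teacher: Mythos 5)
Your proposal is correct and follows essentially the same case-by-case differentiation strategy as the paper's proof: identify the ordered sequence of active cases from Lemma~\ref{lem-cost-wrong} as $\Delta$ grows, note that cases (a), (b), (f) give constant pieces, and differentiate cases (c), (d), (e), using $k>p_j$ for the sign. You also correctly identify that the case-(c)/(d) critical point coincides with the case boundary, which is the crux of the first bullet and the entry point for the second. There is, however, one genuine difference in how you handle case (e). The paper computes the critical point of the case-(e) derivative directly --- namely $\sqrt{B/(km_i^2)}$ with $B=(-4\delta_jp_j+k(1+\delta_j)^2)m_j^2-2(k(1+\delta_j)-2\delta_jp_j)m_im_j+km_i^2$ --- and then proves by a separate algebraic estimate (split into $j>i$ and $j<i$) that this critical point lies strictly to the left of the interval on which case (e) is active, so the derivative is positive there. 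You instead propose to exploit $C^1$-smoothness of $\E[C_j(m_i,\cdot)]$: the one-sided derivatives do match at the (c)$\to$(e) and (d)$\to$(e) transitions (one can check this either from the integral representation $\E[C_j(m_i,\Delta)]=\tfrac{1}{2m_i\Delta}\int_{m_i(1-\Delta)}^{m_i(1+\Delta)}\tilde c(x)\,dx$ with $\tilde c$ piecewise linear, or by the algebraic identity $B=ku^2-p_j(u^2-v^2)$ with $u=m_j(1+\delta_j)-m_i$, $v=m_i-m_j(1-\delta_j)$), so positivity is inherited at the boundary. This is cleaner and avoids the paper's longer inequality chase, but it is not by itself a complete argument: you still owe the observation that the case-(e) derivative has the form $\tfrac{1}{4m_i}(km_i^2-B/\Delta^2)$, which is monotone in $\Delta$ if $B>0$ and bounded below by $\tfrac{k m_i}{4}>0$ if $B\le 0$, so once positive at the case boundary it cannot turn negative for larger $\Delta$. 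You flag this as something you ``expect to fall out,'' and it does, but it should be stated, because the $C^1$ matching alone only gives positivity at a single point. With that step filled in, your argument is a valid and somewhat leaner alternative to the paper's.
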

\begin{proof}
Recall that $\E[C_j(m_i,\Delta)]$ has six cases depends on the relationship between type-$m_i$ customer's demand range (i.e., $[m_i(1-\Delta), m_i(1+\Delta)]$) and the contract range (i.e., $[m_j(1-\delta_j), m_j(1+\delta_j)]$) of option $j$ by Lemma \ref{lem-cost-wrong}. 

In case (a), (b), (f), we have $\E[C_j(m_i,\Delta)]$ remains constant with regard to $\Delta$. Considering case (c), we have function $\E[C_j(m_i,\Delta)]$ is strictly increasing in $\Delta\geq |1-\frac{m_{j}(1-\delta_{j})}{m_i}|$. Since in case (c), we have $\Delta\geq\frac{m_{j}(1-\delta_{j})}{m_i}-1$ and $\Delta\geq 1-\frac{m_{j}(1-\delta_{j})}{m_i}$. Thus, $\E[C_j(m_i,\Delta)]$ is strictly increasing in $\Delta$ in case (f). 

Considering case (d), since $k>p_{j}$, we have function $\E[C_j(m_i,\Delta)]$ is strictly increasing in $\Delta\geq\sqrt{\frac{(k-p_{j})(m_{j}(1+\delta_{j})-m_i)^2}{(k-p_{j})m^2_i}}\geq |1-\frac{m_{j}(1+\delta_{j})}{m_i}|$. Since in case (d), we have $\Delta\geq\frac{m_{j}(1+\delta_{j})}{m_i}-1$ and $\Delta\geq 1-\frac{m_{j}(1+\delta_{j})}{m_i}$. Thus, $\E[C_j(m_i,\Delta)]$ is strictly increasing in $\Delta$ in case (d). 

We will prove $\E[C_j(m_i,\Delta)]$ is strictly increasing in $\Delta$ in case (e). If $(-4\delta_{j}p_{j}+k(1+\delta_{j})^2)m^2_{j}-2(k(1+\delta_{j})-2\delta_{j}p_{j})m_im_{j}+km^2_i\leq 0$, $\E[C_j(m_i,\Delta)]$ is strictly increasing in $\Delta$ in case (e). Otherwise, we have function $\E[C_j(m_i,\Delta)]$ is strictly increasing in $\Delta\geq\sqrt{\frac{(-4\delta_{j}p_{j}+k(1+\delta_{j})^2)m^2_{j}-2(k(1+\delta_{j})-2\delta_{j}p_{j})m_im_{j}+km^2_i}{km^2_i}}$. There are two subcases: $j>i$ or $j<i$. In the first subcase that $j>i$, we have 
\begin{eqnarray*}
& &(-4\delta_{j}p_{j}+k(1+\delta_{j})^2)m^2_{j}-2(k(1+\delta_{j})-2\delta_{j}p_{j})m_im_{j}+km^2_i\\
&=&k(m_{j}(1+\delta_{j})-m_i)^2+4\delta_{j}p_{j}m_{j}(m_i-m_{j})\\
&<& k(m_{j}(1+\delta_{j})-m_i)^2. 
\end{eqnarray*}
Thus, we have
\begin{eqnarray*}
& &\sqrt{\frac{(-4\delta_{j}p_{j}+k(1+\delta_{j})^2)m^2_{j}-2(k(1+\delta_{j})-2\delta_{j}p_{j})m_im_{j}+km^2_i}{km^2_i}}\\
&<&\sqrt{\frac{k(m_{j}(1+\delta_{j})-m_i)^2}{km^2_i}}\\
&<&\frac{m_{j}(1+\delta_{j})}{m_i}-1.
\end{eqnarray*}
The last inequality follows since $m_{j}>m_i$ and $\frac{m_{j}(1+\delta_{j})}{m_i}-1>0$. Thus, we have $\E[C_j(m_i,\Delta)]$ is increasing in $\Delta\geq \frac{m_{j}(1+\delta_{j})}{m_i}-1>0$. Thus, since $\Delta\geq \frac{m_{j}(1+\delta_{j})}{m_i}-1>0$ in case (e), we have $\E[C_j(m_i,\Delta)]$ is strictly increasing in $\Delta$ in case (e) if $(-4\delta_{j}p_{j}+k(1+\delta_{j})^2)m^2_{j}-2(k(1+\delta_{j})-2\delta_{j}p_{j})m_im_{j}+km^2_i> 0$ and $j>i$. In the second subcase that $j<i$, we have  
\begin{eqnarray*}
& &\frac{(-4\delta_{j}p_{j}+k(1+\delta_{j})^2)m^2_{j}-2(k(1+\delta_{j})-2\delta_{j}p_{j})m_im_{j}+km^2_i}{km^2_i}\\
&=&\frac{k(m_{j}(1+\delta_{j})-m_i)^2+4\delta_{j}p_{j}m_{j}(m_i-m_{j})}{km^2_i}\\
&=&\frac{((1+\delta_{j})m_{j}-m_i)^2}{m^2_i}+\frac{4\delta_{j}p_{j}m_{j}(m_i-m_{j})}{km^2_i}\\
&<& \frac{((1+\delta_{j})m_{j}-m_i)^2}{m^2_i}+\frac{4\delta_{j}p_{j}m_{j}(m_i-m_{j})}{p_{j}m^2_i}\\
&<& \frac{(m_i-m_{j}(1-\delta_{j}))^2}{m^2_i}.
\end{eqnarray*}
The first inequality follows since $k>p_{j}$ and $m_i>m_{j}$. 

Thus, we have 
\begin{eqnarray*}\small
\!\!\!\!\!\!\!\!\!& &\!\!\!\sqrt{\frac{(-4\delta_{j}p_{j}+k(1+\delta_{j})^2)m^2_{j}-2(k(1+\delta_{j})-2\delta_{j}p_{j})m_im_{j}+km^2_i}{km^2_i}}\\
&<& |1-\frac{m_{j}(1-\delta_{j})}{m_i}|.
\end{eqnarray*}
Thus, since $1-\frac{m_{j}(1-\delta_{j})}{m_i}>0$, we have $\E[C_j(m_i,\Delta)]$ is strictly increasing in $\Delta\geq 1-\frac{m_{j}(1-\delta_{j})}{m_i}$. Since in case (e) we have $\Delta\geq 1-\frac{m_{j}(1-\delta_{j})}{m_i}$, then $\E[C_j(m_i,\Delta)]$ is strictly increasing in $\Delta$ in case (e) if $(-4\delta_{j}p_{j}+k(1+\delta_{j})^2)m^2_{j}-2(k(1+\delta_{j})-2\delta_{j}p_{j})m_im_{j}+km^2_i> 0$ and $j<i$. 

Based on the above analysis of $\E[C_j(m_i,\Delta)]$ for each $\Delta$ in six possible cases, we will prove the statement of the lemma. We first prove the first statement. If $j>i, m_i\in [m_{j}(1-\delta_{j}), m_{j}(1+\delta_{j}])$, then case (a), (d), (f) does not occur for $\Delta\in [0,1]$. As $\Delta$ increases in $[0,1]$, case (b),(c) occurs sequentially, or (b),(c), (e) occur sequentially. Since $\E[C_j(m_i,\Delta)]$ is constant with regard to $\Delta$ in case (b) and is strictly increasing in case (c),(e), and continuous at borderline points among the cases, and the borderline point between case (b) and (c) is at $\Delta=\Delta_{i,j}$, we have $\E[C_j(m_i,\Delta)]$ is constant with regard to $\Delta\in [0,\Delta_{i,j}]$ and is strictly increasing in $\Delta\in [\Delta_{i,j},1]$. If $j<i, m_i\in [m_{j}(1-\delta_{j}), m_{j}(1+\delta_{j})]$, then case (a), (c), (f) does not occur for $\Delta\in [0,1]$. As $\Delta$ increases in $[0,1]$, case (b),(d) occur sequentially, or case (b),(d),(e) occur sequentially. Since $\E[C_j(m_i,\Delta)]$ is constant in case (b) and strictly increasing in $\Delta$ in case (d), (e), and continuous at borderline points among the cases, and the borderline point between case (b) and (d) is at $\Delta=\Delta_{i,j}$, we have $\E[C_j(m_i,\Delta)]$ is constant with regard to $\Delta\in [0,\Delta_{i,j}]$ and strictly increasing in $\Delta\in [\Delta_{i,j},1]$. Thus, if $m_i\in [m_{j}(1-\delta_{j}), m_{j}(1+\delta_{j})]$, then $\E[C_j(m_i,\Delta)]$ remains constant with regard to $\Delta\in [0,\Delta_{i,j}]$ and is strictly increasing in $\Delta\geq\Delta_{i,j}$. Thus, since $\E[C_j(m_i,\Delta)](\Delta=0)=m_ip_{j}$, we have $\E[C_j(m_i,\Delta)]\geq m_ip_{j}$ for all $\Delta\in [0,1]$. Thus, the first statement follows.

We now prove the second statement. If $m_i\notin [m_{j}(1-\delta_{j}), m_{j}(1+\delta_{j})]$. If $j>i, m_i\notin [m_{j}(1-\delta_{j}), m_{j}(1+\delta_{j})]$, then case (b), (d), (f) does not occur for $\Delta\in [0,1]$. As $\Delta$ increases in $[0,1]$, case (a) occur, or case (a), (c) sequentially, or case (a),(c),(e) occur sequentially. Since $\E[C_j(m_i,\Delta)]$ is constant with regard to $\Delta$ in case (a), and strictly increasing with regard to $\Delta$ in case (c),(e), and continuous at borderline points among the cases, we have $\E[C_j(m_i,\Delta)]$ is non-decreasing in $\Delta\in [0,1]$. If $j<i, m_i\notin [m_{j}(1-\delta_{j}), m_{j}(1+\delta_{j})]$, then case (a), (b), (c) does not occur for $\Delta\in [0,1]$. As $\Delta$ increases in $[0,1]$, case (f) occurs, or case (f), (d) occur sequentially, or case (f), (d),(e) occur sequentially. Since $\E[C_j(m_i,\Delta)]$ is constant with regard to $\Delta$ in case (f) and is strictly increasing in $\Delta$ in case (d), (e), and continuous at borderline points among the cases, we have $\E[C_j(m_i,\Delta)]$ is non-decreasing in $\Delta\in [0,1]$. In addition, we will prove if $m_i\notin [m_{j}(1-\delta_{j}), m_{j}(1+\delta_{j})]$, then $\E[C_j(m_i,\Delta)]> m_ip_{j}$. If $j>i$, since $m_i\notin [m_{j}(1-\delta_{j}), m_{j}(1+\delta_{j})])$, we have $m_i<m_{j}(1-\delta_{j})$. hus, we have $\E[C_j(m_i,\Delta)]\geq \E[C_j(m_i,\Delta)](\Delta=0)\geq m_{j}(1-\delta_{j})p_{j}>m_ip_{j}$. If $j<i$, we have $m_i>m_{j}(1+\delta_{j})$. Thus, we have $\E[C_j(m_i,\Delta)]\geq \E[C_j(m_i,\Delta)](\Delta=0)\geq (p_{j}-k)m_{j}(1+\delta_{j})+km_i>m_ip_{j}$ since $p_{j}<k$. Thus, the second statement follows.
\end{proof}
The claim is intuitively correct. Consider type-$m_i$ customer with $\Delta$ chooses option $j$. As $\Delta$ increases, the customer has more demand surpassing the contract range of option $j$ and therefore needs to undertake more penalty or flexibility cost. Note that the claim will be also used in Lemma \ref{lem-feasible-solution}. 

We then prove Claim \ref{claim-support}, which is stated below.
\begin{claim}\label{claim-support}
Let $z=\frac{A-\sum_{i=1}^{n}x_ia_i}{A-\sum_{i=1}^{n}x_ib_i}, A, a_i, b_i, x_i>0, A-\sum_{i=1}^{n}x_ib_i>0,\forall i\in [1,n], A-b_i>0, \sum_{i=1}^{n}x_i=1$. Then we have $z\geq\min(\frac{A-a_i}{A-b_i}), i=1, \ldots, n$.
\end{claim}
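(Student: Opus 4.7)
The plan is to reduce the statement to a one-line argument using a weighted-average identity, after rewriting each individual ratio as a linear inequality. The only things I need to be careful about are sign conditions: each denominator $A - b_i$ is positive by hypothesis, and the aggregate denominator $A - \sum_i x_i b_i$ is positive by hypothesis, so the division direction is preserved at the end.

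First I would set $m := \min_{1 \le i \le n} \frac{A - a_i}{A - b_i}$, which is well-defined because each $A - b_i > 0$. By definition of $m$, for every $i$ we have
\begin{equation*}
\frac{A - a_i}{A - b_i} \ge m,
\end{equation*}
and multiplying through by the positive quantity $A - b_i$ gives the linear inequality
\begin{equation*}
A - a_i \ge m\,(A - b_i), \qquad i = 1, \dots, n.
\end{equation*}

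Next I would take a convex combination: multiply the $i$th inequality by $x_i > 0$ and sum over $i$. Using $\sum_{i=1}^n x_i = 1$, the left side becomes $A - \sum_{i=1}^n x_i a_i$ and the right side becomes $m\bigl(A - \sum_{i=1}^n x_i b_i\bigr)$, yielding
\begin{equation*}
A - \sum_{i=1}^n x_i a_i \ge m \Bigl(A - \sum_{i=1}^n x_i b_i\Bigr).
\end{equation*}
Finally, since $A - \sum_{i=1}^n x_i b_i > 0$ by assumption, I may divide both sides by this quantity without flipping the inequality, which gives $z \ge m = \min_i \frac{A - a_i}{A - b_i}$, as desired.

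There is no real obstacle here: the argument is simply the standard mediant/weighted-average inequality, and all sign hypotheses needed to justify the two multiplicative steps are exactly the ones listed in the claim. In fact the hypothesis $A - b_i > 0$ for each $i$ is only used to rearrange the pointwise ratios; it is implied by the other hypotheses when $x_i > 0$ for all $i$, but it is cleaner to invoke it directly.
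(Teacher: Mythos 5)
Your proof is correct, and it uses the same core idea as the paper's proof: establish the per-index inequality $A-a_i\geq m(A-b_i)$ (where $m=\min_j\frac{A-a_j}{A-b_j}$), take the convex combination with weights $x_i$, and divide by the positive aggregate denominator. The paper reaches this by writing $z-m$ as a single fraction and then showing the numerator is nonnegative term by term, which is algebraically heavier but equivalent; your linearize-then-average organization is the cleaner presentation of the same argument.
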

\begin{proof}
Without loss of generality, we assume $\min(\frac{A-a_1}{A-b_1},\ldots,\frac{A-a_n}{A-b_n})=\frac{A-a_j}{A-b_j}$. To prove the claim, we will prove
$z-\frac{A-a_j}{A-b_j}\geq 0$. We have
\begin{eqnarray*}\small
z-\frac{A-a_j}{A-b_j}&=&\frac{w}{(A-\sum_{i=1}^{n}b_ix_i)(A-b_j)}\\
                     &=&\frac{A(a_j-b_j)+\sum_{i=1}^{n}\big((A-a_j)b_i-(A-b_j)a_i\big)x_i}{(A-\sum_{i=1}^{n}b_ix_i)(A-b_j)}.\\
\end{eqnarray*}
where $w=(A-\sum_{i=1}^{n}a_ix_i)(A-b_j)-(A-\sum_{i=1}^{n}b_ix_i)(A-a_j)$.
To prove $z-\frac{A-a_j}{A-b_j}\geq 0$, since $A-\sum_{i=1}^{n}x_ib_i>0, A-b_j>0$, it suffice to prove $A(a_j-b_j)+\sum_{i=1}^{n}\big((A-a_j)b_i-(A-b_j)a_i\big)x_i\geq 0$. We have $\forall i\in [1,n],\frac{A-b_i}{A-a_i}\geq\frac{A-a_j}{A-b_j}$. Thus, since $A-b_i>0, A-b_j>0$, we have $(A-a_i)(A-b_j)\geq (A-b_i)(A-a_j)$. Thus, we have $A(A-b_j)-a_i(A-b_j)\geq A(A-a_j)-b_i(A-a_j)$. Thus, we have
\begin{eqnarray*}
b_i(A-a_j)-a_j(A-b_i)&\geq& A(A-a_j)-A(A-b_j)\\ 
                     &\geq& A(b_j-a_j).
\end{eqnarray*}
Thus, we have $\sum_{i=1}^{n}\big((A-a_j)b_i-(A-b_j)a_i\big)x_i\geq A(b_j-a_j)\sum_{i=1}^{n}x_i\geq A(b_j-a_j)$, since $\sum_{i=1}^{n}=1$. Thus, we have
\begin{equation}
A(a_j-b_j)+\sum_{i=1}^{n}\big((A-a_j)b_i-(A-b_j)a_i\big)x_i\geq 0.
\end{equation}
Thus, the claim holds.
\end{proof}
Now we will prove Lemma \ref{lem-Shat-bound-P1-pess-support-1} as follows.
\begin{lemma}\label{lem-Shat-bound-P1-pess-support-1}
The gain ratio between solution $\Phi'$ and the super-optimal solution is no smaller than $\min\limits_{i\in\mathcal{I}}(\frac{2m_n-C_i}{2m_n-\frac{3}{2}m_i})$.
\end{lemma}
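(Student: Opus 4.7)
The plan is to reduce Lemma~\ref{lem-Shat-bound-P1-pess-support-1} to a direct application of Claim~\ref{claim-support} by rewriting both $P(\Phi')-P_0$ and an upper bound on $\hat{P}^{*}-P_0$ as $\hat{c}$-weighted sums over customer types with weights $h(m_i)$ and per-type terms involving $2m_n$.

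For the numerator, observe that $\Phi'$ sets $p_i = p_0$ for every $i$, so a type-$m_i$ customer that picks an option $j$ whose commitment range contains its entire demand interval $[m_i(1-\Delta),m_i(1+\Delta)]$ pays $m_i p_0$ and consumes $m_i$ units of energy in expectation---matching the baseline. By Claim~\ref{claim-monotonicity}, any option $j$ whose commitment range does not fully contain the demand interval satisfies $\E[C_j(m_i,\Delta)] > m_i p_0$, making the baseline strictly preferable to it and thus ineligible under any tie-breaking rule. Hence even in the pessimistic setting, revenue and generation cost per customer coincide with their baseline values; only the capacity changes, from $2m_n$ to the pessimistic expected value $C_i$. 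This yields $P(\Phi') - P_0 = \sum_{i\in\mathcal{I}} N h(m_i)\hat{c}(2m_n - C_i)$.

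For the denominator, the monotonicity statement in Lemma~\ref{lem-super} gives that $\hat{P}^{*}$ is non-increasing in $k$, so $\hat{P}^{*}$ at any admissible $k \geq 2\hat{c}$ is at most its value at $k = 2\hat{c}$. At $k = 2\hat{c}$, the threshold $\tfrac{k-\hat{c}}{k}+\tfrac{1}{2}=1$ places every type in the second case of Lemma~\ref{lem-super}, yielding $\E(P_i) = N h(m_i)(m_i p_0 - m_i c_0 - \tfrac{3}{2}m_i\hat{c})$. Subtracting the baseline~(\ref{eqn-P0}) produces $\hat{P}^{*} - P_0 \leq \sum_{i\in\mathcal{I}} N h(m_i)\hat{c}(2m_n - \tfrac{3}{2}m_i)$. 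Combining the two bounds and using $\sum_i h(m_i)=1$ to rewrite $2m_n = \sum_i h(m_i)\cdot 2m_n$,
\[
\frac{P(\Phi')-P_0}{\hat{P}^{*}-P_0} \;\geq\; \frac{2m_n - \sum_i h(m_i)\,C_i}{2m_n - \sum_i h(m_i)\cdot \tfrac{3}{2}m_i},
\]
which is of the form in Claim~\ref{claim-support} with $A=2m_n$, $x_i=h(m_i)$, $a_i = C_i$, $b_i=\tfrac{3}{2}m_i$. The hypotheses $\sum_i x_i = 1$ and $A - b_i = 2m_n - \tfrac{3}{2}m_i > 0$ (immediate from $m_n \geq m_i$) hold, and the same inequality yields $A - \sum_i x_i b_i > 0$; Claim~\ref{claim-support} then delivers the desired lower bound $\min_{i\in\mathcal{I}}\tfrac{2m_n - C_i}{2m_n - \tfrac{3}{2}m_i}$. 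The main subtlety is the cancellation in the second paragraph: ruling out via Claim~\ref{claim-monotonicity} and $p_j \equiv p_0$ that any pessimistically chosen option imposes an elasticity cost or lower-floor surcharge, which is what allows the revenue and energy terms to cancel cleanly against the baseline.
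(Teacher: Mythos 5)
Your proof is correct and follows essentially the same route as the paper's: expressing $P(\Phi')-P_0$ as $N\hat{c}\sum_i h(m_i)(2m_n - C_i)$ via the cancellation of revenue and energy terms (justified through Claim~\ref{claim-monotonicity}), bounding $\hat{P}^*-P_0$ by evaluating Lemma~\ref{lem-super} at the extreme $k=2\hat{c}$, and finishing with Claim~\ref{claim-support} under the same identification $A=2m_n$, $x_i=h(m_i)$, $a_i=C_i$, $b_i=\tfrac{3}{2}m_i$. One cosmetic imprecision: at $k=2\hat{c}$ the type $m_n$ lands in the \emph{first} case of Lemma~\ref{lem-super} (the condition $\frac{m_n}{m_i}\le 1$ holds with equality), not the second; both branches happen to collapse to $Nh(m_i)(m_ip_0-m_ic_0-\tfrac{3}{2}m_i\hat{c})$ at that value of $k$, so the displayed bound is still right, but you should not claim all types fall in the second case.
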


\begin{proof}
As defined in Section \ref{sec-challenge-II}, the gain ratio between solution $\Phi'$ and the super-optimal solution is $\frac{P(\Phi')-P_0}{\hat{P}^*-p_0}$, where $P(\Phi')$ is the supplier's profit at solution $\Phi'$, and $\hat{P}^*$ is the super-optimal profit, and $P_0$ is the supplier's profit without contract options. 

By Lemma \ref{lem-super}, $\hat{P}^*$ is decreasing in $k$. Thus, since $k\geq 2\hat{c}$, we have $\hat{P}^*\leq \bar{P}^*(k=2\hat{c})$. Thus, since $\hat{P}^*(k=2\hat{c})= \sum_{i=1}^{n}Nh(m_i)m_i(p_0-c_0-\frac{3}{2}\hat{c})$, we have $\hat{P}^*\leq \sum_{i=1}^{n}Nh(m_i)m_i(p_0-c_0-\frac{3}{2}\hat{c})$. Combining this with Equation (\ref{eqn-P0}), we have 
\begin{equation}
\hat{P}^*-P_0\leq 2Nm_n\hat{c}-\frac{3}{2}Nh(m_1)m_1\hat{c}-\ldots-\frac{3}{2}Nh(m_n)m_n\hat{c}.
\end{equation}

Besides, we have $P(\Phi')=\sum_{i=1}^{n}Nh(m_i)(m_ip_0-\hat{c}C_i-c_0m_i)$. To prove this, we have each type-$m_i$ customer contributes $m_ip_0$ amount of revenue to the supplier even if it wrongly chooses other contract options. Also, we have each customer's mean energy consumption is $m_i$. Since $\Delta_{th,i}=\delta_i$, we divide type-$m_i$ customers into two classes by $\Delta\in [0,\delta_i]$ and $\Delta\in [\delta_i, 1]$. In particular, for each type-$m_i$customer with $\Delta\leq\delta_i$, it may choose option $i$ or other option $j$. If it chooses option $i$, its mean energy consumption is $m_i$. If it chooses other option $j$, it must have $m_i\in [m_{j}(1-\delta_{j}),m_{j}(1+\delta_{j})]$ and $\Delta\leq\Delta_{i,j}$. To show this, we have $\E[C_i(m_i,\Delta)]=m_ip_0$. If $m_i\notin [m_{j}(1-\delta_{j}),m_{j}(1+\delta_{j})]$, we have $\E[C_j(m_i,\Delta)]>m_ip_0$ by Claim \ref{claim-monotonicity} and $\E[C_j(m_i,\Delta)]>\E[C_i(m_i,\Delta)]$, and thus the customer will not choose option $j$. If $m_i\in [m_{j}(1-\delta_{j}),m_{j}(1+\delta_{j})], \Delta>\Delta_{i,j}$, we have $\E[C_j(m_i,\Delta)]>m_ip_0$ by Claim \ref{claim-monotonicity}, and $\E[C_j(m_i,\Delta)]>\E[C_i(m_i,\Delta)]$, and thus it will not choose option $j$. Thus, we have if type-$m_i$ customer with $\Delta\in [0,\delta_i]$ chooses other option $j$, it must have $m_i\in [m_{j}(1-\delta_{j}),m_{j}(1+\delta_{j})]$ and $\Delta\leq\Delta_{i,j}$. Thus, we have type-$m_i$ customer's whole usage range is contained in option $i$'s contract range. Thus, its demand is not changed by the customer and its mean energy consumption is $m_i$. For each type-$m_i$ customer with $\Delta>\delta_i$, it chooses baseline pricing scheme and its mean energy consumption is also $m_i$ since its demand within $[m_i(1-\Delta),m_i(1+\Delta)]$ is not changed by the customer in order not to incur high flexibility cost.

Thus, we have 
\begin{equation}
P(\Phi')-P_0=N\hat{c}(2m_n-h(m_1)C_1-\ldots-h(m_n)C_n).
\end{equation}
Note that $P(\Phi')-P_0>0$ since we can derive $\forall i\in [1,n], C_i<2m_n$.
Thus, we have 
\begin{eqnarray*}
\frac{P(\Phi')-P_0}{\hat{P}^*-P_0}&\geq& \frac{P(\Phi')-P_0}{\hat{P}^*(k=2\hat{c})-P_0}\\
                                  &\geq& \frac{2m_n-h(m_1)C_1-\ldots-h(m_n)C_n}{2m_n-\frac{3}{2}h(m_1)m_1-\frac{3}{2}h(m_2)m_2-\ldots-\frac{3}{2} h(m_n)m_n}\\
								                  &\geq& \min\limits_{i\in\mathcal{I}}(\frac{2m_n-C_i}{2m_n-\frac{3}{2}m_i}).
\end{eqnarray*}
The last inequality follows by Claim \ref{claim-support}. Hence, the lemma holds. 
\end{proof}


\begin{claim}\label{claim-contract-increase}
$\Phi'$ has the following property: 
\begin{itemize}
\item $m_1(1+\delta_1)<m_2(1+\delta_2)<\ldots<m_n(1+\delta_n)$; 
\item $m_1(1-\delta_1)\leq m_2(1-\delta_2)\leq\ldots\leq m_n(1-\delta_n)$.
\end{itemize}
\end{claim}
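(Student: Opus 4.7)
The plan is to prove both monotonicity statements by direct substitution of the closed-form expression for $\delta_i$ given in Proposition \ref{pro-opt-P2}, followed by a case analysis on whether each $m_i$ lies in the ``small'' regime ($\frac{m_n}{m_i}>\frac{3}{2}$, so $\delta_i=1$) or the ``large'' regime ($\frac{m_n}{m_i}\leq\frac{3}{2}$, so $\delta_i=\frac{m_n}{m_i}-\frac{1}{2}$).

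First I would compute the two endpoint functions explicitly in each regime. In the small regime, $m_i(1+\delta_i)=2m_i$ and $m_i(1-\delta_i)=0$. In the large regime, $m_i(1+\delta_i)=\frac{m_i}{2}+m_n$ and $m_i(1-\delta_i)=\frac{3m_i}{2}-m_n$. A crucial observation is that at the threshold $m_i=\frac{2m_n}{3}$ the two piecewise expressions match: both formulas yield $\frac{4m_n}{3}$ for the upper endpoint and $0$ for the lower endpoint, so $m_i\mapsto m_i(1+\delta_i)$ and $m_i\mapsto m_i(1-\delta_i)$ are continuous in $m_i$ across the boundary. Within each regime separately, both functions are manifestly strictly increasing (respectively, non-decreasing) in $m_i$.

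Next, for any two consecutive types $m_i<m_{i+1}$ I would split into three cases. (i) Both in the small regime: then $m_i(1+\delta_i)=2m_i<2m_{i+1}=m_{i+1}(1+\delta_{i+1})$ and both lower endpoints equal $0$. (ii) Both in the large regime: monotonicity of $\frac{m}{2}+m_n$ and $\frac{3m}{2}-m_n$ in $m$ gives both inequalities immediately. (iii) Mixed case, where (since types are sorted) necessarily $m_i$ is in the small regime and $m_{i+1}$ in the large regime: here I would use $m_i<\frac{2m_n}{3}\leq m_{i+1}$ together with the boundary values above to conclude $2m_i<\frac{4m_n}{3}\leq\frac{m_{i+1}}{2}+m_n$ for the upper endpoint and $0\leq\frac{3m_{i+1}}{2}-m_n$ for the lower endpoint.

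The argument is essentially bookkeeping; there is no real obstacle beyond carefully handling the mixed case, and that is made trivial by the continuity check at $m_i=\frac{2m_n}{3}$. The combination of the three cases, iterated over $i=1,\ldots,n-1$, yields the chain of strict inequalities for the upper endpoints and the chain of (possibly non-strict, owing to the small regime where all lower endpoints vanish) inequalities for the lower endpoints, which is exactly the claim.
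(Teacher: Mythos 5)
Your proof is correct and follows essentially the same route as the paper's: substitute the closed-form expression for $\delta_i$ from Proposition~\ref{pro-opt-P2} and perform a case analysis on whether each type falls in the regime $\frac{m_n}{m_i}>\frac{3}{2}$ (so $\delta_i=1$) or $\frac{m_n}{m_i}\leq\frac{3}{2}$ (so $\delta_i=\frac{m_n}{m_i}-\frac{1}{2}$). The only cosmetic differences are that the paper verifies the inequality directly for every pair $i<i'$ rather than just consecutive pairs, whereas you reduce to consecutive pairs and handle the mixed case via the (valid) observation that both piecewise endpoint functions are continuous at $m_i=\frac{2m_n}{3}$.
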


\begin{proof}
To prove the claim, it suffices to prove $m_i(1+\delta_i)<m_{i'}(1+\delta_{i'})$ and $m_i(1-\delta_i)\leq m_{i'}(1-\delta_{i'})$ for any two contract options $i,i',(i<i')$ at $\Phi'$. 

There are two cases. In the first case that $\frac{m_n}{m_1}\leq\frac{3}{2}$. We have $\delta_i=\frac{m_n}{m_i}-\frac{1}{2}$ and $\delta_{i'}=\frac{m_n}{m_{i'}}-\frac{1}{2}$ by Proposition \ref{pro-opt-P2}. Thus, we have $m_i(1+\delta_i)-m_{i'}(1+\delta_{i'})=\frac{1}{2}(m_i-m_{i'})<0$ since $m_i<m_{i'}$. Also, we have $m_i(1-\delta_i)-m_{i'}(1-\delta_{i'})=\frac{3}{2}(m_i-m_{i'})<0$ since $m_i<m_{i'}$.

In the second case that $\frac{m_n}{m_1}>\frac{3}{2}$. Assume $\frac{m_n}{m_j}>\frac{3}{2}$ and $\frac{m_n}{m_{j+1}}\leq \frac{3}{2}$, $j=1,,\ldots, n-1$. If $i,i'\leq j$, we have $\delta_i=\delta_{i'}=1$  by Proposition \ref{pro-opt-P2}. Thus, we have $m_i(1+\delta_i)-m_{i'}(1+\delta_{i'})=2(m_i-m_{i'})<0$, and $m_i(1-\delta_i)-m_{i'}(1-\delta_{i'})=0$. If $i,i'>j$, we have $\delta_i=\frac{m_n}{m_i}-\frac{1}{2}$ and $\delta_{i'}=\frac{m_n}{m_{i'}}-\frac{1}{2}$. Thus, we have $m_i(1+\delta_i)-m_{i'}(1+\delta_{i'})=\frac{1}{2}(m_i-m_{i'})<0$, and $m_i(1-\delta_i)-m_{i'}(1-\delta_{i'})=\frac{3}{2}(m_i-m_{i'})<0$. If $i\leq j$ and $i'>j$, we have $\delta_i=1, \delta_{i'}=\frac{m_n}{m_{i'}}-\frac{1}{2}$. Thus, we have $m_i(1+\delta_i)-m_{i'}(1+\delta_{i'})=\frac{1}{2}(4m_i-2m_n-m_{i'})<0$. Since $m_i<\frac{2}{3}m_n, m_{i'}\geq\frac{2}{3}m_n$, we have $4m_i<\frac{8}{3}m_n$ and $2m_n+m_{i'}\geq\frac{8}{3}m_n$, and $4m_i-2m_n-m_{i'}<0$. Thus, we have $m_i(1+\delta_i)-m_{i'}(1+\delta_{i'})<0$. Also, we have $m_i(1-\delta_i)-m_{i'}(1-\delta_{i'})=-m_{i'}(\frac{3}{2}-\frac{m_n}{m_{i'}})\leq 0$ since $\frac{m_n}{m_{i'}}\leq\frac{3}{2}$.

Overall, the claim follows.
\end{proof}

Next, we will prove Lemma \ref{lem-Shat-bound-P1-pess-support-0}.
\begin{proof}
As shown in Lemma \ref{lem-Shat-bound-P1-pess-support-1}, the gain ratio between solution $\Phi'$ and the super-optimal solution is no smaller than $\min\limits_{i\in\mathcal{I}}(\frac{2m_n-C_i}{2m_n-\frac{3}{2}m_i})$. For ease of presentation later, let $r_g$ represent the gain ratio between solution $\Phi'$ and the super-optimal solution. Thus, to prove $r_g$, it suffices to prove $\min\limits_{i\in\mathcal{I}}(\frac{2m_n-C_i}{2m_n-\frac{3}{2}m_i})\geq \frac{1}{3}$. Without loss of generality, we assume $\min\limits_{i=1,\ldots,n}(\frac{2m_n-C_i}{2m_n-\frac{3}{2}m_i})=\frac{2m_n-C_1}{2m_n-\frac{3}{2}m_1}$. Since $m_1(1+\delta_1)<m_2(1+\delta_2)<\ldots<m_n(1+\delta_n)$ by Claim \ref{claim-contract-increase}, if $\min\limits_{i=1,\ldots, n}(\frac{2m_n-C_i}{2m_n-\frac{3}{2}m_i})=\frac{2m_n-C_{i'}}{2m_n-\frac{3}{2}m_{i'}}$, it can be seen that $C_{i'}$ is $C_1$ and there are $n-i'+1$ options. 

Now we discuss all possible contract option choices of type-$m_1$ customer and find the lower bound of $\frac{2m_n-C_1}{2m_n-\frac{3}{2}m_1}$ in all cases. Generally, there are three cases. In the first case shown in Figure \ref{cost-function-case1}, we assume $\frac{m_n}{m_1}\leq\frac{3}{2}$. In the second case shown in figure \ref{cost-function-case2}, we assume $\frac{m_n}{m_1}\geq 2$. In the third case shown in figure \ref{cost-function-case3}, we assume $\frac{3}{2}\leq \frac{m_n}{m_1}\leq 2$.

\begin{figure}
	\centering 
		\subfigure[Case 1: $\frac{m_n}{m_1}\leq\frac{3}{2}$]{ 
		\label{cost-function-case1} 
		\includegraphics[height=5cm, width=6cm]{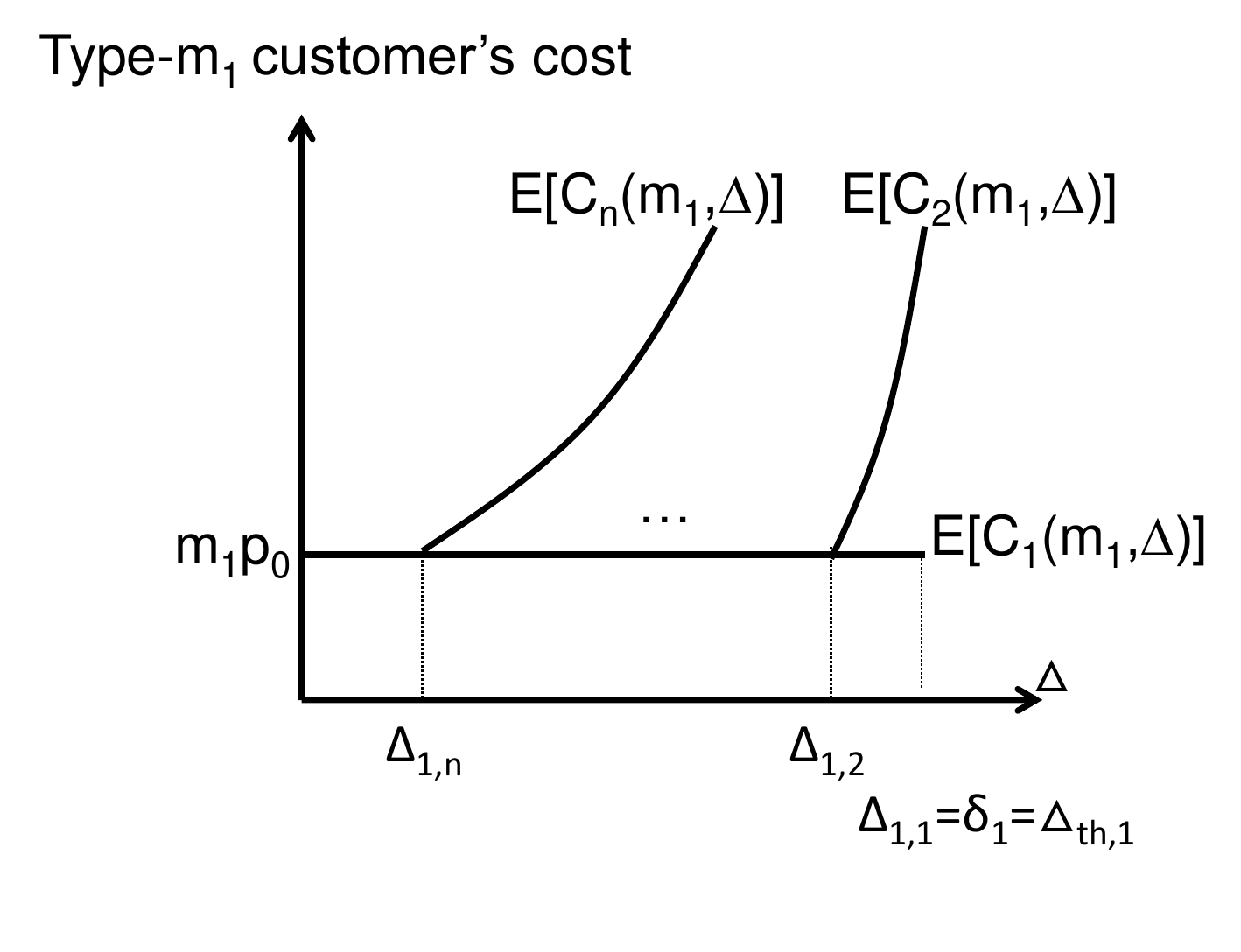}} 
	\subfigure[Case 2: $\frac{m_n}{m_1}\geq 2$]{ 
		\label{cost-function-case2} 
		\includegraphics[height=5cm, width=6cm]{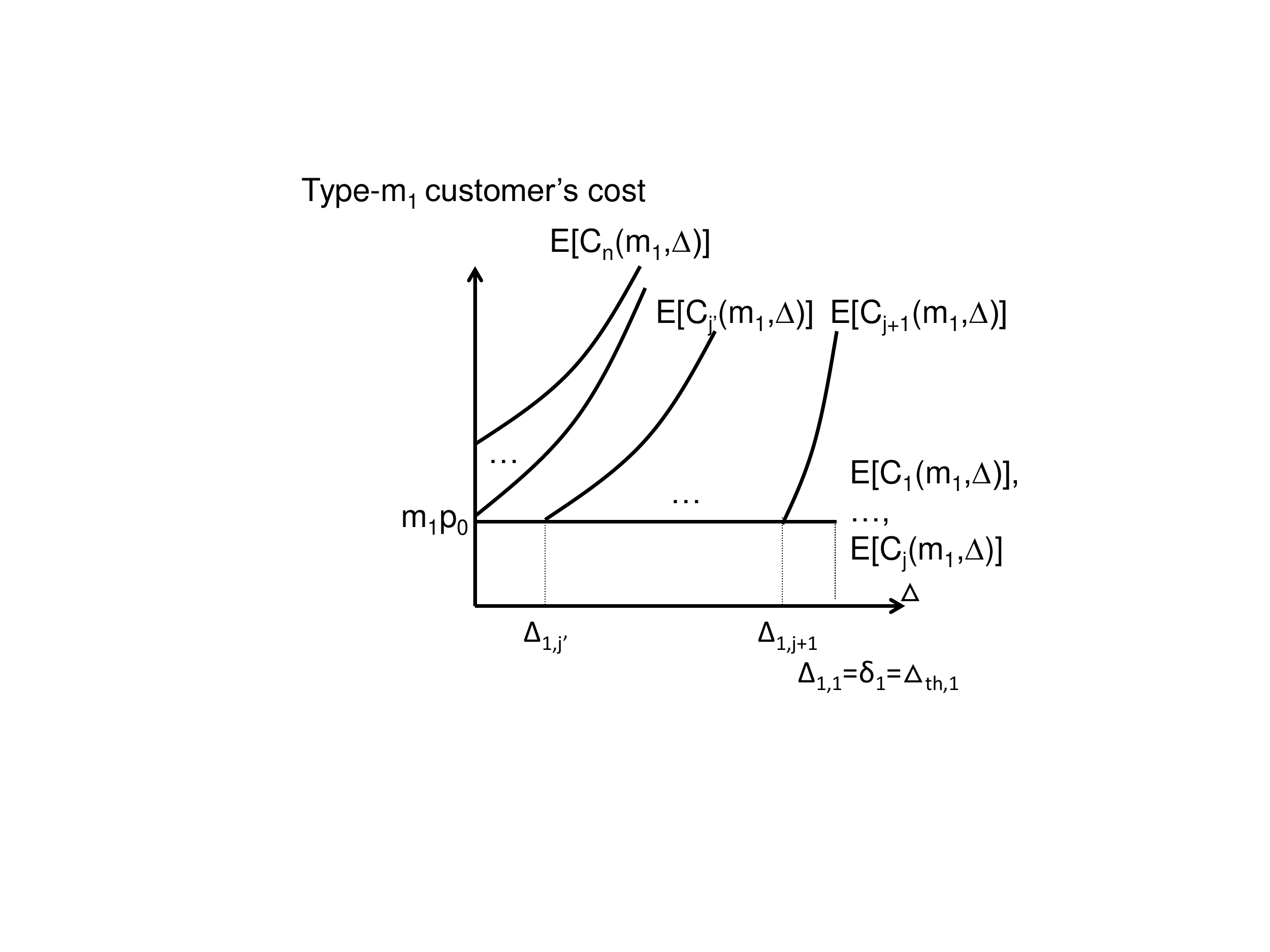}} 
	\subfigure[Case 3: $\frac{3}{2}<\frac{m_n}{m_1}<2$]{ 
		\label{cost-function-case3} 
		\includegraphics[height=5cm, width=6cm]{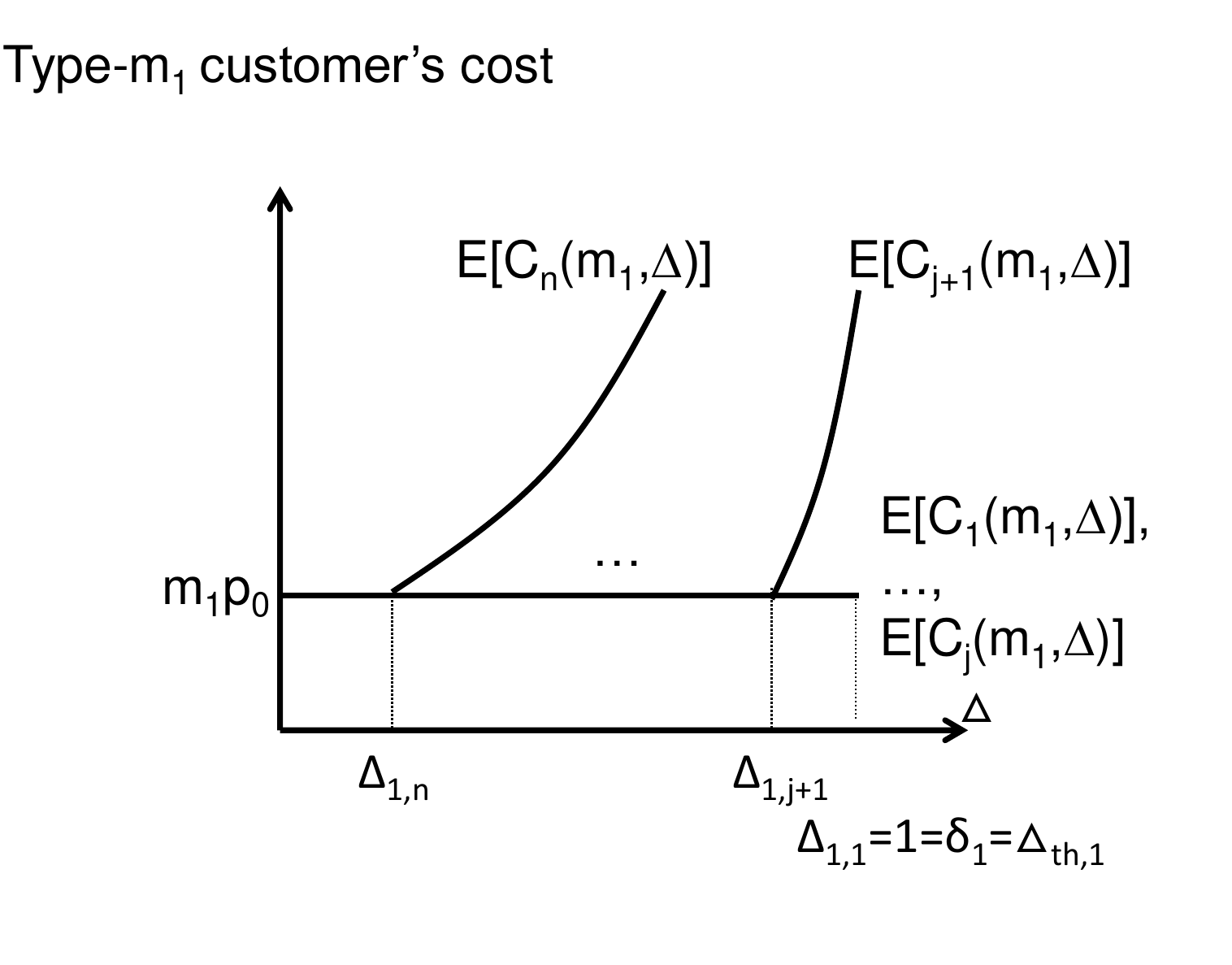}}  
	\caption{Type-$m_i$ customer's cost functions with $\Delta$ when choosing different options} 
\end{figure}
To prove the lemma, we will prove $r_g>0.476$ if $\frac{m_n}{m_1}\leq \frac{3}{2}$, $r_g>\frac{1}{3}$ if $\frac{m_n}{m_1}\geq 2$ and $r_g>0.466$ if $\frac{3}{2}<\frac{m_n}{m_1}<2$. 

\textbf{1)} In the first case, $\Phi'$ is $\delta_i=\frac{m_n}{m_i}-\frac{1}{2}, i=1,\ldots, n$ by Proposition \ref{pro-opt-P2}. We have $m_1(1+\delta_1)<m_2(1+\delta_2)<\ldots<m_n(1+\delta_n)$ by Claim \ref{claim-contract-increase}. We have $\Delta_{1,i}=1-\frac{m_i(1-\delta_i)}{m_1}$, $i=1,\ldots, n$. Thus, since $\delta_i=\frac{m_n}{m_i}-\frac{1}{2}, i=1,\ldots, n$, we have $\delta_1=\Delta_{1,1}>\Delta_{1,2}>\ldots>\Delta_{1,n}>0$ by Claim \ref{claim-contract-increase}. For a $m_1$ type of customer with $\Delta\in [\Delta_{1,i+1},\Delta_{1,i}], i=1,\ldots, n-1$, we have $\E[C_1(m_1,\Delta)]=\E[C_2(m_1,\Delta)]=\ldots=\E[C_i(m_1,\Delta)]=m_1p_0<\E[C_{i'}(m_1,\Delta)], i'>i$. The supplier can choose any option among options $1,\ldots, i$. Since $m_1(1+\delta_1)<m_2(1+\delta_2)<\ldots<m_i(1+\delta_i)$, we assume $m_1$ type of customer choose contract option $i$ since the estimated capacity of the customer is $m_i(1+\delta_i)$ which is the worst for the supplier, the revenue it contributes to the supplier is $m_ip_0$ whatever option it chooses, the mean energy consumption it results in is $m_i$ whatever option it chooses, and therfore the profit it chooses option $i$ is the worst. For a $m_1$ type of customer with $\Delta\in [0,\Delta_{1,n}]$, we have $\E[C_1(m_1,\Delta)]=C_{1,2}=\ldots=\E[C_n(m_1,\Delta)]=m_1p_0$ and we assume it chooses option $n$ for the worst-case profit of the supplier. Besides, type-$m_1$ customer with $\Delta\in [\delta_1, 1]$ subscribes to the baseline pricing scheme. Thus, we have 
\begin{eqnarray*}\small
   C_1&=&2m_n(1-\delta_1)+\sum_{i=1}^{n-1}m_i(1+\delta_i)(\Delta_{1,i}-\Delta_{1,i+1})+m_n(1+\delta_n)\Delta_{1,n}\\
   &=&2m_n(1-\delta_1)+\sum_{i=1}^{n-1}m_i(1+\delta_i)(\frac{m_{i+1}(1-\delta_{i+1})}{m_1}-\frac{m_i(1-\delta_i)}{m_1})+m_n(1+\delta_n)(1-\frac{m_n(1-\delta_n)}{m_1})\\
	 &=&2m_n(1-(\frac{m_n}{m_1}-\frac{1}{2}))+\sum_{i=1}^{n-1}\frac{3}{2}m_i(1+\frac{m_n}{m_i}-\frac{1}{2})\frac{m_{i+1}-m_i}{m_1}+\frac{3}{2}m_n(1-\frac{m_n}{2m_1})\\
   &=&3m_n-\frac{2m^2_n}{m_1}+\sum_{i=1}^{n-1}(\frac{3m_im_{i+1}}{4m_1}-\frac{3m^2_i}{4m_1}+\frac{3m_{i+1}m_n}{2m_1}-\frac{3m_im_n}{2m_1})+\frac{3m_n}{2}-\frac{3m^2_n}{4m_1}\\
	 &=&3m_n-\frac{2m^2_n}{m_1}+(\sum_{i=1}^{n-1}\frac{3m_im_{i+1}}{4m_1}-\sum_{i=1}^{n-1}\frac{3m^2_i}{4m_1}+\frac{3m^2_n}{2m_1}-\frac{3}{2}m_n)+\frac{3m_n}{2}-\frac{3m^2_n}{4m_1}\\
	 &=&3m_n-\frac{2m^2_n}{m_1}+\big(\frac{3}{4m_1}\sum_{i=1}^{n-1}(m_im_{i+1}-m^2_i)+\frac{3m^2_n}{2m_1}-\frac{3}{2}m_n\big)+\frac{3m_n}{2}-\frac{3m^2_n}{4m_1}.
\end{eqnarray*}
 
Thus, to find the lower bound of $\frac{2m_n-C_1}{2m_n-\frac{3}{2}m_1}$, we will find the maximum value of $C_1$, considering $m_1,m_n, n$ as fixed values. Note that $m_1<m_2<\ldots< m_n$. We have 
\begin{eqnarray*}
         C_1&=&3m_n-\frac{2m^2_n}{m_1}+\big(-\frac{3}{8m_1}\sum_{i=1}^{n-1}(2m^2_i-2m_im_{i+1})+\frac{3m^2_n}{2m_1}-\frac{3}{2}m_n\big)+\frac{3m_n}{2}-\frac{3m^2_n}{4m_1}\\
            &=&3m_n-\frac{2m^2_n}{m_1}+\big(-\frac{3}{8m_1}(m^2_1-m^2_n+\sum_{i=1}^{n-1}(m_{i+1}-m_i)^2)+\frac{3m^2_n}{2m_1}-\frac{3}{2}m_n\big)+\frac{3m_n}{2}-\frac{3m^2_n}{4m_1}\\
				    &\leq&3m_n-\frac{2m^2_n}{m_1}+\big(-\frac{3}{8m_1}(m^2_1-m^2_n+\frac{(m_n-m_1)^2}{n-1})+\frac{3m^2_n}{2m_1}-\frac{3}{2}m_n\big)+\frac{3m_n}{2}-\frac{3m^2_n}{4m_1}\\
					  &\leq& 3m_n-\frac{2m^2_n}{m_1}+\big(\frac{3(5n-6)}{8(n-1)}\frac{m^2_n}{m_1}+\frac{3(3-2n)}{4(n-1)}m_n-\frac{3n}{8(n-1)}m_1\big)+\frac{3m_n}{2}-\frac{3m^2_n}{4m_1}\\
					  &\leq&\big(\frac{9}{2}+\frac{3(3-2n)}{4(n-1)}\big)m_n+\big(\frac{3(5n-6)}{8(n-1)}-\frac{11}{4}\big)\frac{m^2_n}{m_1}-\frac{3n}{8(n-1)}m_1.
\end{eqnarray*}
To show the first inequality follows, it suffices to show $\sum_{i=1}^{n-1}(m_{i+1}-m_i)^2\geq\frac{(m_n-m_1)^2}{n-1}$. To do this, let $x_i=m_{i+1}-m_i, i\in [1,n-1]$. We have $x_1+x_2+\ldots+x_{n-1}=m_n-m_1$. Thus, it is equivalent to prove $\sum_{i=1}^{n-1}x^2_i\geq\frac{(\sum_{i=1}^{n-1}x_i)^2}{n-1}$
. We can prove the inequality by Cauchy-Schwarz inequality. We have $(n-1)\sum_{i=1}^{n-1}x^2_i=(1^2+1^2+\ldots+1^2)(x^2_1+x^2_2+\ldots+x^2_{n-1})\geq (x_1+x_2+\ldots+x_{n-1})^2$ by Cauchy-Schwarz inequality. Thus, we have $\sum_{i=1}^{n-1}x^2_i\geq\frac{(\sum_{i=1}^{n-1}x_i)^2}{n-1}$ and the first inequality follows.

Taking the upper bound of $C_1$ into $\frac{2m_n-C_1}{2m_n-\frac{3}{2}m_1}$, we have 
\begin{eqnarray*}\tiny
  \frac{2m_n-C_1}{2m_n-\frac{3}{2}m_1} &\geq&\frac{\frac{(-4n+1)m_n}{4(n-1)}+\frac{3nm_1}{8(n-1)}+\frac{(7n-4)m^2_n}{8(n-1)m_1}}{2m_n-\frac{3}{2}m_1}\\
   &\geq&\frac{(-8n+2)\frac{m_n}{m_1}+(7n-4)\frac{m^2_n}{m^2_1}+3n}{8(n-1)(\frac{2m_n}{m_1}-\frac{3}{2})}\\
   &\geq&\frac{-\frac{8m_n}{m_1}+\frac{7m^2_n}{m^2_1}+3}{8(\frac{2m_n}{m_1}-\frac{3}{2})}\\
   &\geq& 0.476.
\end{eqnarray*}
We first prove the third inequality follows. Let $y$ represent the term on the right of the second inequality. We have $\frac{\partial y}{\partial n}=\frac{-3(\frac{m_n}{m_1}-1)^2}{8(n-1)(\frac{2m_n}{m_1}-\frac{3}{2})}<0$ since $\frac{m_n}{m_1}>1$. Thus, $y$ is decreasing in $n$. Thus, since $n<+\infty$, we have the third inequality follows. We will prove the last inequality follows. Let $z$ represent the term on the right of the third inequality and $x=\frac{m_n}{m_1}\in [1,\frac{3}{2}]$. To prove the last inequality follows, we will prove $z=\frac{-8x+7x^2+3}{8(2x-\frac{3}{2})}\geq 0.476, x\in [1,\frac{3}{2}]$. We have $\frac{\partial z}{\partial x}=\frac{14x^2-21x+6}{8(2x-\frac{3}{2})^2}$. For equation $14x^2-21x+6=0, x\in [1,\frac{3}{2}]$, we have $x=\frac{21+\sqrt{105}}{28}\in (1,\frac{3}{2})$. Thus, since the function $14x^2-21x+6, x\in [1,\frac{3}{2}]$ is increasing in $x\in [1,\frac{3}{2}]$, we have $14x^2-21x+6\leq 0$ for $x\in [1,\frac{21+\sqrt{105}}{28}]$ and $14x^2-21x+6>0$ for $x\in [\frac{21+\sqrt{105}}{28},\frac{3}{2}]$. Thus, since $8(2x-\frac{3}{2})^2>0$, we have $\frac{\partial z}{\partial x}\leq 0$ for $x\in [1,\frac{21+\sqrt{105}}{28}]$ and $\frac{\partial z}{\partial x}>0$ for $x\in [\frac{21+\sqrt{105}}{28},\frac{3}{2}]$. Thus, we have $z(x), x\in [1,\frac{3}{2}]$ obtains the minimum value at $x=\frac{21+\sqrt{105}}{28}\in (1,\frac{3}{2})$. Thus, we have $z\geq z(x=\frac{21+\sqrt{105}}{28})\geq 0.476$ and the last inequality follows.

\textbf{2)} In the second case, we assume $\frac{m_n}{m_j}>\frac{3}{2}$ and $\frac{m_n}{m_{j+1}}\leq\frac{3}{2}, j=1,\ldots, n-1$. $\Phi'$ is $\delta_i=1, i=1,\ldots, j$, $\delta_i=\frac{m_n}{m_i}-\frac{1}{2}, i=j+1,\ldots, n$ by Proposition \ref{pro-opt-P2}. We also have $m_1(1+\delta_1)<m_2(1+\delta_2)<\ldots<m_n(1+\delta_n)$ and $m_1(1-\delta_1)>m_2(1-\delta_2)>\ldots<m_n(1-\delta_n)$. Assume $m_1\geq m_{j'}(1-\delta_{j'})$ and $m_1<m_{j'+1}(1-\delta_{j'+1})$, $j'=j+1, \ldots, n-1$. We have $\Delta_{1,1}=\Delta_{1,2}=\ldots=\Delta_{1,j}=1>\ldots>\Delta_{1,j'}\geq 0$. Since $\E[C_i(m_1,\Delta)]>m_1p_0, i=j'+1,\ldots, n$ by Claim \ref{claim-monotonicity} and $\E[C_1(m_1,\Delta)]=m_1p_0$ for $\Delta\in [0,1]$, type-$m_1$ customer will not choose options $j'+1,\ldots, n-1$. For a type-$m_1$ customer with $\Delta\in [\Delta_{1,i+1},\Delta_{1,i}], i=j+1,\ldots, j'-1$, we have $\E[C_1(m_1,\Delta)]=\E[C_2(m_1,\Delta)]=\ldots=\E[C_i(m_1,\Delta)]<\E[C_{i'}(m_1,\Delta)], i'>i$ and the customer may choose any contract option among options $1,\ldots, i$. Since $m_1(1+\delta_1)<m_2(1+\delta_2)<\ldots<m_i(1+\delta_i)$, we assume $m_1$ type of customer choose contract option $i$ since the estimated capacity of the customer is $m_i(1+\delta_i)$ which causes the worst-case profit for the supplier. For a $m_1$ type of customer with $\Delta\in [0,\Delta_{1,j'}]$, we have $\E[C_1(m_1,\Delta)]=\E[C_2(m_1,\Delta)]=\ldots=\E[C_{j'}(m_1,\Delta)]=m_1p_0$ and we assume it chooses option $j'$ for the worst-case profit of the supplier. Besides, type-$m_1$customer with $\Delta\in [\Delta_{1,j+1}, 1]$ subscribes to option $j$ for the worst-case profit of the supplier. Thus, we have 
\begin{eqnarray*}
 C_1&=&m_j(1+\delta_j)(1-\Delta_{1,j+1})+\sum_{i=j+1}^{j'-1}m_i(1+\delta_i)(\Delta_{1,i}-\Delta_{1,i+1})+m_{j'}(1+\delta_{j'})\Delta_{1,j'}\\
    &=&m_j(1+\delta_j)(1-(1-\frac{m_{j+1}(1-\delta_{j+1})}{m_1}))+\sum_{i=j+1}^{j'-1}m_i(1+\delta_i)(\frac{m_{i+1}(1-\delta_{i+1})}{m_1}-\frac{m_i(1-\delta_i)}{m_1})\\
    & &+m_{j'}(1+\delta_{j'})(1-\frac{m_{j'}(1-\delta_{j'})}{m_1})\\
    &=&\frac{2m_jm_{j+1}}{m_1}(1-(\frac{m_n}{m_{j+1}}-\frac{1}{2}))+\sum_{i=j+1}^{j'-1}\frac{3}{2}m_i(1+\frac{m_n}{m_i}-\frac{1}{2})\frac{m_{i+1}-m_i}{m_1}\\
    & &+m_{j'}(1+\frac{m_n}{m_{j'}}-\frac{1}{2})(1-\frac{m_{j'}}{m_1}(1-\frac{m_n}{m_{j'}}+\frac{1}{2}))\\
    &=&\frac{3m_j m_{j+1}}{m_1}-\frac{2m_jm_n}{m_1}+\sum_{i=j+1}^{j'-1}\frac{3}{2m_1}(\frac{m_im_{i+1}}{2}-\frac{m^2_i}{2}+m_nm_{i+1}-m_nm_i)\\
    & &+(\frac{1}{2}m_{j'}-\frac{3m^2_{j'}}{4m_1}-\frac{m_nm_{j'}}{m_1}+m_n+\frac{m^2_n}{m_1}).
\end{eqnarray*}

We will find the upper bound of $C_1$. To do this, we have 
\begin{eqnarray*}
 C_1&=&\frac{3m_j m_{j+1}}{m_1}-\frac{2m_jm_n}{m_1}+(-\frac{3}{8m_1})\sum_{i=j+1}^{j'-1}(2m^2_i-2m_im_{i+1}+4m_n(m_i-m_{i+1}))\\
    & &+(\frac{1}{2}m_{j'}-\frac{3m^2_{j'}}{4m_1}-\frac{m_nm_{j'}}{m_1}+m_n+\frac{m^2_n}{m_1})\\
		&=&\frac{3m_j m_{j+1}}{m_1}-\frac{2m_jm_n}{m_1}+(-\frac{3}{8m_1})\big((m^2_{j+1}-m^2_{j'})+\sum_{i=j+1}^{j'-1}(m_i-m_{i+1})^2+4m_n(m_{j+1}-m_{j'})\big)\\
    & &+(\frac{1}{2}m_{j'}-\frac{3m^2_{j'}}{4m_1}-\frac{m_nm_{j'}}{m_1}+m_n+\frac{m^2_n}{m_1})\\
		&\leq&\frac{3m_j m_{j+1}}{m_1}-\frac{2m_jm_n}{m_1}+(-\frac{3}{8m_1})\big((m^2_{j+1}-m^2_{j'})+\frac{(m_{j'}-m_{j+1})^2}{j'-j-1}+4m_n(m_{j+1}-m_{j'})\big)\\
    & &+(\frac{1}{2}m_{j'}-\frac{3m^2_{j'}}{4m_1}-\frac{m_nm_{j'}}{m_1}+m_n+\frac{m^2_n}{m_1})\\
		&\leq&\frac{3m_j m_{j+1}}{m_1}-\frac{2m_jm_n}{m_1}+\frac{3m^2_{j'}(j'-j-2)}{8(j'-j-1)m_1}-\frac{3(j'-j)m^2_{j+1}}{8(j'-j-1)m_1}+\frac{3m_{j'}m_{j+1}}{4(j'-j-1)m_1}\\
    & &+\frac{3m_n(m_{j'}-m_{j+1})}{2m_1}+(\frac{1}{2}m_{j'}-\frac{3m^2_{j'}}{4m_1}-\frac{m_nm_{j'}}{m_1}+m_n+\frac{m^2_n}{m_1}).
\end{eqnarray*}
The first inequality follows by the inequality $\sum_{i=1}^{n-1}(m_{i+1}-m_i)^2\geq\frac{(m_n-m_1)^2}{n-1}$, which has been proved in the first case. 
Since the term on the right of the last inequality is increasing in $m_j$ and $m_j\leq\frac{2}{3}m_n$, we have
\begin{eqnarray*}
C_1&\leq&\frac{m'_j}{2}-\frac{3(j'-j)}{8(j'-j-1)}\frac{m^2_{j'}}{m_1}+\frac{m_nm_{j'}}{2m_1}+m_n-\frac{m^2_n}{3m_1}-\frac{3(j'-j)}{8(j'-j-1)}\frac{m^2_{j+1}}{m_1}+\frac{3}{4(j'-j-1)}\frac{m'_jm_{j+1}}{m_1}+\frac{m_nm_{j+1}}{2m_1}\\
   &\leq&\frac{m_{j'}}{2}-\frac{3m^2_{j'}}{8m_1}+\frac{m_nm_{j'}}{2m_1}+m_n-\frac{m^2_n}{3m_1}-\frac{3m^2_{j+1}}{8m_1}+\frac{m_nm_{j+1}}{2m_1}\\
   &\leq&\frac{m_{j'}}{2}-\frac{3m^2_{j'}}{8m_1}+\frac{m_nm_{j'}}{2m_1}+m_n-\frac{m^2_n}{6m_1}\\	
	 &\leq&\frac{1}{6}m_1+\frac{4}{3}m_n.
\end{eqnarray*}
The second inequality follows since $j'-j<+\infty$ and the term on the right of the first inequality is increasing in $j'-j$. To see this, let $y$ represent the term on the right of the first inequality and $t=j'-j$. We have $\frac{\partial y}{\partial t}=\frac{3(m_{j'}-m_{j+1})^2}{8m_1(t-1)^2}\geq 0$. The third inequality follows since $m_{j+1}\geq\frac{2}{3}m_n$ and the term on the right of the second inequality is decreasing in $m_{j+1}$. To see this, let $z$ represent the term on the right of the second inequality. We have $\frac{\partial z}{\partial m_{j+1}}=-\frac{1}{4m_1}(3m_{j+1}-2m_n)\leq 0$  since $m_{j+1}\geq\frac{2}{3}m_n$. The last inequality follows since the term on the right of the third inequality is concave as a function of $m_{j'}$ and obtains the maximum at $m_{j'}=\frac{2}{3}(m_1+m_n)\in (\frac{2}{3}m_n, m_n)$.


Thus, we have
\begin{eqnarray*}\tiny
 \frac{2m_n-C_1}{2m_n-\frac{3}{2}m_1} &\geq&\frac{\frac{2}{3}m_n-\frac{1}{6}m_1}{2m_n-\frac{3}{2}m_1}\\
   &\geq&\frac{\frac{2}{3}\frac{m_n}{m_1}-\frac{1}{6}}{2\frac{m_n}{m_1}-\frac{3}{2}}\\
   &\geq&\frac{1}{3}.
\end{eqnarray*}
The third inequality follows since the term on the right of the second inequality is decreasing in $\frac{m_n}{m_1}$ and $\frac{m_n}{m_1}<+\infty$.

\textbf{3)} In the third case, $\Phi'$ is $\delta_i=1, i=1,\ldots, j$, $\delta_i=\frac{m_n}{m_i}-\frac{1}{2}, i=j+1,\ldots, n$ by Proposition \ref{pro-opt-P2}. We also have $m_1(1+\delta_1)<m_2(1+\delta_2)<\ldots<m_n(1+\delta_n)$ and $\Delta_{1,1}=\Delta_{1,2}=\ldots=\Delta_{1,j}=1>\ldots>\Delta_{1,n}>0$. For a $m_1$ type of customer with $\Delta\in [\Delta_{1,i+1},\Delta_{1,i}], i=j+1,\ldots, n-1$, we have $\E[C_1(m_1,\Delta)]=\E[C_2(m_1,\Delta)]=\ldots=\E[C_i(m_1,\Delta)]<\E[C_{i'}(m_1,\Delta)], i'>i$ and the customer may choose any contract option among options $j+1,\ldots, i$. Since $m_1(1+\delta_1)<m_2(1+\delta_2)<\ldots<m_i(1+\delta_i)$, we assume type-$m_1$ customer choose contract option $i$ since the estimated capacity of the customer is $m_i(1+\delta_i)$ which causes the worst-case profit for the supplier. For a type-$m_1$ customer with $\Delta\in [0,\Delta_{1,n}]$, we have $\E[C_1(m_1,\Delta)]=\E[C_2(m_1,\Delta)]=\ldots=\E[C_n(m_1,\Delta)]=m_1p_0$ and we assume it chooses option $n$ for the worst -case profit of the supplier. Besides, type-$m_1$ customer with $\Delta\in [\Delta_{1,j+1}, 1]$ subscribes to option $j$ for the worst-case profit of the supplier. Thus,
\begin{eqnarray*}\small
C_1&=&m_j(1+\delta_j)(1-\Delta_{1,j+1})+\sum_{i=j+1}^{n-1}m_i(1+\delta_i)(\Delta_{1,i}-\Delta_{1,i+1})+m_n(1+\delta_n)\Delta_{1,n}\\
   &=&m_j(1+\delta_j)(1-(1-\frac{m_{j+1}(1-\delta_{j+1})}{m_1}))+\sum_{i=j+1}^{n-1}m_i(1+\delta_i)(\frac{m_{i+1}(1-\delta_{i+1})}{m_1}-\frac{m_i(1-\delta_i)}{m_1})\\
   & &+m_n(1+\delta_n)(1-\frac{m_n(1-\delta_n)}{m_1})\\
	 &=&\frac{2m_jm_{j+1}}{m_1}(1-(\frac{m_n}{m_{j+1}}-\frac{1}{2}))+\sum_{i=j+1}^{n-1}\frac{3}{2}m_i(\frac{1}{2}+\frac{m_n}{m_i})\frac{m_{i+1}-m_i}{m_1}+\frac{3}{2}m_n(1-\frac{m_n}{2m_1})\\
	 &=&\frac{3m_jm_{j+1}}{m_1}-\frac{2m_jm_n}{m_1}+\sum_{i=j+1}^{n-1}(\frac{3m_im_{i+1}}{4m_1}-\frac{3m^2_i}{4m_1}+\frac{3m_nm_{i+1}}{2m_1}-\frac{3m_nm_i}{2m_1})+\frac{3}{2}m_n-\frac{3m^2_n}{4m_1}.
\end{eqnarray*}
Thus, to compute the lower bound of $\frac{2m_n-C_1}{2m_n-\frac{3}{2}m_1}$, it suffices to obtain the upper bound of $C_1$. To do this, we have
\begin{eqnarray*}\small
C_1&=&\frac{3m_jm_{j+1}}{m_1}-\frac{2m_jm_n}{m_1}+(-\frac{3}{8m_1})\sum_{i=j+1}^{n-1}(-2m_im_{i+1}+2m^2_i-4m_nm_{i+1}+4m_nm_i)+\frac{3}{2}m_n-\frac{3m^2_n}{4m_1}\\
   &=&\frac{3m_jm_{j+1}}{m_1}-\frac{2m_jm_n}{m_1}+(-\frac{3}{8m_1})\big((m^2_{j+1}-m^2_n)+\sum_{i=j+1}^{n-1}(m_i-m_{i+1})^2+\sum_{i=j+1}^{n-1}4m_n(m_i-m_{i+1})\big)+\frac{3}{2}m_n-\frac{3m^2_n}{4m_1}\\
	&\leq&\frac{3m_jm_{j+1}}{m_1}-\frac{2m_jm_n}{m_1}+(-\frac{3}{8m_1})\big((m^2_{j+1}-m^2_n)+\frac{(m_n-m_{j+1})^2}{n-j-1}+4m_n(m_{j+1}-m_n)\big)+\frac{3}{2}m_n-\frac{3m^2_n}{4m_1}\\
	&\leq&\frac{3}{2}m_n-\frac{3m^2_n}{4m_1}+\frac{3}{8(n-j-1)m_1}((5n-5j-6)m^2_n-(n-j)m^2_{j+1}\\
  & &+(-4n+4j+6)m_{j+1}m_n)+\frac{3m_j m_{j+1}}{m_1}-\frac{2m_jm_n}{m_1}.
\end{eqnarray*}
The first inequality follows by the inequality $\sum_{i=1}^{n-1}(m_{i+1}-m_i)^2\geq\frac{(m_n-m_1)^2}{n-1}$, which has been proved in the first case.

To compute the upper bound of $C_1$, we let $D=\frac{3}{2}m_n-\frac{3m^2_n}{4m_1}+\frac{3}{8(n-j-1)m_1}((5n-5j-6)m^2_n-(n-j)m^2_{j+1}+(-4n+4j+6)m_{j+1}m_n)+\frac{3m_j m_{j+1}}{m_1}-\frac{2m_jm_n}{m_1}$. We have $\frac{\partial D}{\partial m_j}=\frac{3m_{j+1}-2m_n}{m_1}$. Since $\frac{m_n}{m_{j+1}}\leq\frac{3}{2}$, we have $\frac{\partial D}{\partial m_j}\geq 0$. Thus, since $m_j\leq\frac{2}{3}m_n$ and $C_1\leq D$, we have
\begin{eqnarray*}
 C_1&\leq& \frac{3}{2}m_n+(-\frac{25}{12}+\frac{3(5n-5j-6)}{8(n-j-1)})\frac{m^2_n}{m_1}-\frac{3(n-j)}{8(n-j-1)}\frac{m^2_{j+1}}{m_1}+\frac{4n-4j+2}{8(n-j-1)}\frac{m_{j+1}m_n}{m_1}\\
    &\leq&\frac{3}{2}m_n+(-\frac{25}{12}+\frac{3(5n-5j-6)}{8(n-j-1)}+\frac{(2n-2j+1)^2}{24(n-j-1)(n-j)})\frac{m^2_n}{m_1}\\
    &\leq&\frac{3}{2}m_n-\frac{1}{24}(1+\frac{1}{n-j})\frac{m^2_n}{m_1}\\
    &\leq&\frac{3}{2}m_n-\frac{m^2_n}{24m_1}.
\end{eqnarray*}
The second inequality follows since the term on the right of the first inequality obtains the maximum value at $m_{j+1}=\frac{2n-2j+1}{3(n-j)}m_n\in (\frac{2}{3}m_n, m_n)$. The third inequality follows since $-\frac{25}{12}+\frac{3(5n-5j-6)}{8(n-j-1)}+\frac{(2n-2j+1)^2}{24(n-j-1)(n-j)}=-\frac{1}{24}(1+\frac{1}{n-j})$. The last inequality follows since the term on the right of the second inequality is increasing in $n-j$ and $n-j<+\infty$. Thus, we have 
\begin{eqnarray*}\tiny
  \frac{2m_n-C_1}{2m_n-\frac{3}{2}m_1} &\geq&\frac{\frac{m_n}{2}+\frac{m^2_n}{24m_1}}{2m_n-\frac{3}{2}m_1}\\
   &\geq&\frac{\frac{m_n}{2m_1}+\frac{m^2_n}{24m^2_1}}{\frac{2m_n}{m_1}-\frac{3}{2}}\\
   &\geq& 0.466.
\end{eqnarray*}
To prove the last inequality, let $y=\frac{\frac{x}{2}+\frac{x^2}{24}}{2x-\frac{3}{2}}, x=\frac{m_n}{m_1}\in [\frac{3}{2},2]$. We have $y'(x)=\frac{\frac{1}{12}x^2-\frac{1}{8}x-\frac{3}{4}}{(2x-\frac{3}{2})^2}$. Since $\frac{1}{12}x^2-\frac{1}{8}x-\frac{3}{4}, x\in [\frac{3}{2},2]$ is increasing in $x$ and $\frac{1}{12}x^2-\frac{1}{8}x-\frac{3}{4}=-\frac{2}{3}<0$ at $x=2$. Thus, $y'(x)<0$ for $x\in [\frac{3}{2},2]$. Thus, $y\geq y(2)\geq\frac{7}{15}\geq 0.466$ and the last inequality follows.

Overall, the lemma holds. 
\end{proof}

\vspace{0.5cm}
\hspace{-0.4cm}\textbf{Proof of Lemma \ref{lem-better-solution}}

To prove the lemma, we will first prove Lemma \ref{lem-feasible-solution}. To prove the lemma, we will first prove Claim \ref{claim-feasible} as follows.
Recall that $\Delta_{i,j}$ is defined as the maximum variability degree $\Delta$ of type-$m_i$ customer whose whole usage is included within the contract range of option $j$.
\begin{claim}\label{claim-feasible}
At solution $\Phi'$, consider any type-$m_i$ of customer and two contract options $i$ and $i',i'\neq i$ with $m_i\in [m_{i'}(1-\delta_{i'}),m_{i'}(1+\delta_{i'})]$. Then we have if $\delta_i<1$, then $\Delta_{i,i'}<\delta_i$.
\end{claim}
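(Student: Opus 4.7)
The plan is to leverage the explicit form of the approximate contract $\Phi'$ given by Proposition \ref{pro-opt-P2} together with the monotonicity properties in Claim \ref{claim-contract-increase}, so that the claim reduces to two short algebraic verifications. First, note that since $\delta_i<1$, Proposition \ref{pro-opt-P2} forces us into the regime $\frac{m_n}{m_i}<\frac{3}{2}$, and hence $\delta_i=\frac{m_n}{m_i}-\frac{1}{2}$ and $m_i(1-\delta_i)=\frac{3}{2}m_i-m_n$, $m_i(1+\delta_i)=\frac{1}{2}m_i+m_n$. I will then split into the two cases $i'<i$ and $i'>i$ matching the definition of $\Delta_{i,i'}$.

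For $i'<i$, I have $\Delta_{i,i'}=\frac{m_{i'}(1+\delta_{i'})}{m_i}-1$, and the desired inequality $\Delta_{i,i'}<\delta_i$ is equivalent to $m_{i'}(1+\delta_{i'})<m_i(1+\delta_i)$. This is immediate from the first monotonicity property of Claim \ref{claim-contract-increase}, i.e.\ $m_1(1+\delta_1)<m_2(1+\delta_2)<\cdots<m_n(1+\delta_n)$ under $\Phi'$.

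For $i'>i$, I have $\Delta_{i,i'}=1-\frac{m_{i'}(1-\delta_{i'})}{m_i}$, so $\Delta_{i,i'}<\delta_i$ is equivalent to $m_{i'}(1-\delta_{i'})>m_i(1-\delta_i)$. Because $m_{i'}>m_i$ and $\frac{m_n}{m_i}<\frac{3}{2}$, we also have $\frac{m_n}{m_{i'}}<\frac{m_n}{m_i}<\frac{3}{2}$, so Proposition \ref{pro-opt-P2} gives $\delta_{i'}=\frac{m_n}{m_{i'}}-\frac{1}{2}$ and hence $m_{i'}(1-\delta_{i'})=\frac{3}{2}m_{i'}-m_n$. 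Then $m_{i'}(1-\delta_{i'})-m_i(1-\delta_i)=\frac{3}{2}(m_{i'}-m_i)>0$, which is exactly what is needed. (The hypothesis $m_i\in[m_{i'}(1-\delta_{i'}),m_{i'}(1+\delta_{i'})]$ is only used to guarantee that $\Delta_{i,i'}\ge 0$ so the quantity is a meaningful threshold; the strict upper bound on $\Delta_{i,i'}$ does not rely on it.)

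There is no real obstacle here: once Proposition \ref{pro-opt-P2} pins down $\delta_i$ and $\delta_{i'}$ in closed form in the relevant regime, both cases collapse to one-line comparisons, the $i'<i$ case being essentially a restatement of Claim \ref{claim-contract-increase} and the $i'>i$ case being a two-term subtraction. The only subtlety worth flagging is that Claim \ref{claim-contract-increase} gives the lower-endpoint monotonicity only with a weak inequality in general, so one must separately verify strictness in the $i'>i$ case by noting that $\delta_{i'}$ is guaranteed to lie in the same ``interior'' branch of Proposition \ref{pro-opt-P2} as $\delta_i$, which was done above.
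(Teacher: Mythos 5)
Your proof is correct and reaches the same conclusion by the same basic computations, but it is organized more efficiently than the paper's argument. The paper splits globally on whether all ratios $\frac{m_n}{m_\ell}$ are at most $\frac{3}{2}$ or not, and then for each contract option $i$ runs a subcase analysis over $i'$ that explicitly distinguishes $\delta_{i'}=1$ from $\delta_{i'}<1$; in particular, for $i'<i$ with $\delta_{i'}=1$ it computes $\Delta_{i,i'}-\delta_i=\frac{4m_{i'}-m_i-2m_n}{2m_i}$ and bounds it using $m_{i'}\le\frac{2}{3}m_n$ and $m_i>\frac{2}{3}m_n$. You instead observe up front that the hypothesis $\delta_i<1$ already forces $\frac{m_n}{m_i}<\frac{3}{2}$ and hence $\delta_i=\frac{m_n}{m_i}-\frac{1}{2}$, and moreover that every $i'>i$ inherits $\frac{m_n}{m_{i'}}<\frac{3}{2}$, which collapses the paper's two-case split. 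For $i'<i$ you rewrite $\Delta_{i,i'}<\delta_i$ as $m_{i'}(1+\delta_{i'})<m_i(1+\delta_i)$ and cite Claim~\ref{claim-contract-increase} (proved earlier in the appendix), which cleanly absorbs the $\delta_{i'}=1$ subcase that the paper handles by hand; for $i'>i$ you do the same one-line subtraction $m_{i'}(1-\delta_{i'})-m_i(1-\delta_i)=\frac{3}{2}(m_{i'}-m_i)>0$ as the paper, and you correctly flag that the strictness needed here is available precisely because both $\delta_i$ and $\delta_{i'}$ lie in the interior branch of Proposition~\ref{pro-opt-P2}. Your parenthetical remark about the role of the hypothesis $m_i\in[m_{i'}(1-\delta_{i'}),m_{i'}(1+\delta_{i'})]$ is also accurate: it ensures $\Delta_{i,i'}\ge 0$ but is not used in establishing the strict upper bound. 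Net effect: same ingredients, shorter and less repetitive proof.
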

\begin{proof}
There are two cases. In the first case, we consider $\frac{m_n}{m_1}<\frac{3}{2}$. By Lemma \ref{pro-opt-P2}, we have $\delta_{i}=\frac{m_n}{m_{i}}-\frac{1}{2}<1$. There are two subcases for $i'$. If $i'<i$, then $\Delta_{i,i'}=\frac{m_{i'}(1+\delta_{i'})}{m_i}-1$. Thus, since $\delta_{i'}=\frac{m_n}{m_{i'}}-\frac{1}{2}$ by Lemma \ref{pro-opt-P2}, we have $\Delta_{i,i'}-\delta_i=\frac{m_{i'}-m_i}{2m_i}< 0$ since $m_{i'}< m_i$. Otherwise, if $i'>i$, we have $\Delta_{i,i'}=1-\frac{m_{i'}(1-\delta_{i'})}{m_i}$. Thus, since $\delta_{i'}=\frac{m_n}{m_{i'}}-\frac{1}{2}$ by Lemma \ref{pro-opt-P2}, we have $\Delta_{i,i'}-\delta_i=\frac{3(m_i-m_{i'})}{2m_i}< 0$ since $m_i< m_{i'}$. Thus, we have $\Delta_{i,i'}<\delta_i$ for any $i'\neq i$.

In the second case, we consider $\frac{m_n}{m_{j}}\geq\frac{3}{2}$ and $\frac{m_n}{m_{j+1}}<\frac{3}{2}$, $j=1,\ldots, n-1$. Firstly, we consider $i$ with $i\leq j$. By Lemma \ref{pro-opt-P2}, we have $\delta_i=1$. Next, we consider $i$ with $i\geq j+1$. We have $\delta_i=\frac{m_n}{m_i}-\frac{1}{2}<1$ by Lemma \ref{pro-opt-P2}. There are three subcases for $i'$. In the first subcase that $i'\leq j$, we have $\delta_{i'}=1$, $i'<i$ and $\Delta_{i,i'}=\frac{m_{i'}(1+\delta_{i'})}{m_i}-1$. Thus, we have $\Delta_{i,i'}-\delta_i=\frac{4m_{i'}-m_i-2m_n}{2m_i}$. Since $m_{i'}\leq \frac{2}{3}m_n$ and $m_i>\frac{2}{3}m_n$, we have $4m_{i'}-m_i-2m_n<0$. Thus, we have $\Delta_{i,i'}<\delta_i$. In the second subcase that $j+1\leq i'<i$, we have $\delta_{i'}=\frac{m_n}{m_{i'}}-\frac{1}{2}$, and $\Delta_{i,i'}=\frac{m_{i'}(1+\delta_{i'})}{m_i}-1$. Thus, since $\delta_i=\frac{m_n}{m_i}-\frac{1}{2}$, we have $\Delta_{i,i'}-\delta_i=\frac{m_{i'}-m_i}{2m_i}<0$ since $m_{i'}<m_i$. In the third subcase that $i'>i$, we have $\Delta_{i,i'}=1-\frac{m_{i'}(1-\delta_{i'})}{m_i}$. Thus, since $\delta_{i'}=\frac{m_n}{m_{i'}}-\frac{1}{2}$ and $\delta_{i}=\frac{m_n}{m_{i}}-\frac{1}{2}$ by Lemma \ref{pro-opt-P2}, we have $\Delta_{i,i'}-\delta_i=\frac{3(m_i-m_{i'})}{2m_i}< 0$ since $m_i< m_{i'}$.

Overall, we have if $\delta_i<1$, then $\Delta_{i,i'}<\delta_i$ and the lemma follows.

\end{proof}
This claim is important because it implies that at a solution $\Phi''$ (around $\Phi'$), $\E[C_i(m_i,\Delta)]\leq \E[C_{i'}(m_i,\Delta)]$ for all $\Delta\in [0,\Delta_{th,i}]$. Without the property (i.e., assume $\Delta_{i,i'}>\delta_i$), at the solution $\Phi''$ (around $\Phi'$), type-$m_i$ of customer's cost when choosing option $i'$ is lower than that choosing its right option $i$ for some $\Delta\in [\delta_i,\Delta_{th,i}]$, which breaks the IC constraint (\ref{eqn-mechanism-design}). For example, Figure \ref{non-feasible-contract} shows the cost function $\E[C_i(m_i,\Delta)]$ and $\E[C_{i'}(m_i,\Delta)]$ at the solution near $\Phi'$ assuming $\Delta_{i,i'}>\delta_i$. We see $\E[C_{i'}(m_i,\Delta)]< \E[C_i(m_i,\Delta)]$ at $\Delta=\delta_i+\epsilon'<\Delta_{th,i}$. This is because $\E[C_i(m_i,\Delta)]$ is larger than $m_ip$ at $\Delta=\delta_i+\epsilon'<\Delta_{i,i'}$ ($\epsilon'$ is a very small positive real number) while $\E[C_{i'}(m_i,\Delta)]$ is still $m_ip$ at the point since $\Delta_{i,i'}>\delta_i+\epsilon'$.  
\begin{figure}[]
\centering
\includegraphics[height=6cm, width=8cm]{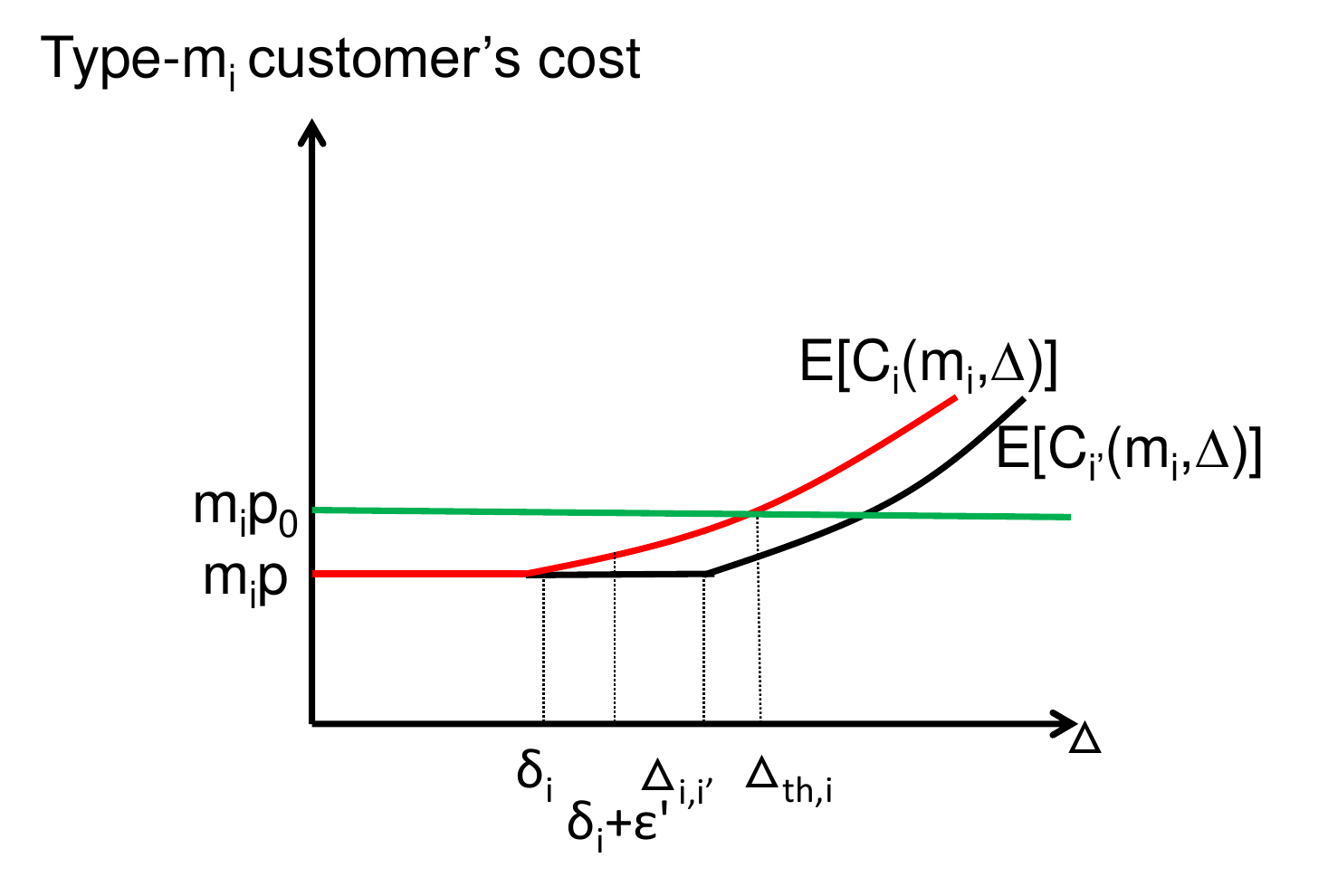}
\caption{Illustration of the consequence without the property at $\Phi'$ in Claim \ref{claim-feasible}}
\label{non-feasible-contract}
\end{figure}

By Claim \ref{claim-monotonicity} and Claim \ref{claim-feasible}, we show there exists an $\hat{\epsilon}>0$ such that for all $0<\epsilon<\hat{\epsilon}$, the solutions near $\Phi'$ are feasible.
\begin{lemma}\label{lem-feasible-solution}
There exist an $\hat{\epsilon}>0$ such that for all $0<\epsilon<\hat{\epsilon}$, the solutions $\Phi''$ (around $\Phi'$) are feasible.
\end{lemma}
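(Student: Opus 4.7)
\textbf{Proof plan for Lemma \ref{lem-better-solution} (feasibility part).}

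The plan is to exploit the fact that at $\epsilon = 0$ the contract $\Phi'$ already satisfies the IC condition (\ref{eqn-mechanism-design}) (as established in Proposition \ref{pro-opt-P2}), and then argue by continuity that the perturbed contract $\Phi''$, obtained from $\Phi'$ by replacing each $p_i = p_0$ with $p_0 - \epsilon$ while keeping $\delta_i$ and the high-penalty regime intact, remains feasible for small enough $\epsilon$. Constraint (\ref{eqn-contract-constraint}) is immediate since $p_i = p_0 - \epsilon < p_0$ and $\delta_i \in [0,1]$ by construction, so the entire task is to verify (\ref{eqn-mechanism-design}).

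I would fix a pair of types $i \neq j$ and a variation $\Delta \in [0,1]$, and split the analysis according to whether $m_i$ lies inside the contract range $[m_j(1 - \delta_j), m_j(1 + \delta_j)]$ of option $j$. In the \emph{outside} case, Claim \ref{claim-monotonicity} gives $\E[C_j(m_i,\Delta)] > m_i p_j$ uniformly on $[0,1]$ with a strict positive gap $\eta_{i,j}$ that, at $\epsilon = 0$, depends only on fixed problem data; since $\E[C_j]$ is (piecewise) linear in $p_j$, this gap contracts by only $O(\epsilon)$ under the perturbation, so the inequality $\E[C_j(m_i,\Delta)] \geq m_i p_0$ is preserved on all of $[0,1]$ for sufficiently small $\epsilon$, which is enough for (\ref{eqn-mechanism-design}). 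In the \emph{inside} case, Claim \ref{claim-monotonicity} gives a plateau $\E[C_j(m_i,\Delta)] = m_i p_j$ on $[0, \Delta_{i,j}]$ followed by strict increase, and Claim \ref{claim-feasible} guarantees $\Delta_{i,j} \leq \delta_i$. Hence on $[0, \Delta_{i,j}]$ the plateaus of option $j$ and option $i$ coincide at $m_i(p_0 - \epsilon)$ (IC holds with equality), while on $(\Delta_{i,j}, \delta_i]$ we have $\E[C_j(m_i,\Delta)] > m_i(p_0-\epsilon) = \E[C_i(m_i,\Delta)]$ strictly.

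The main obstacle will be the transitional sliver $\Delta \in (\delta_i, \Delta_{th,i}]$, in which the elasticity term activates and $\E[C_i(m_i,\Delta)] = m_i(p_0 - \epsilon) + \frac{m_i k}{4\Delta}(\Delta - \delta_i)^2$ lies strictly between $m_i(p_0 - \epsilon)$ and $m_i p_0$. At $\epsilon = 0$ this window collapses ($\Delta_{th,i} = \delta_i$), and from (\ref{eqn-threshold-high}) with $p_0 - p_i = \epsilon$ one gets $\Delta_{th,i} - \delta_i = O(\sqrt{\epsilon})$ (in fact $2\sqrt{\delta_i \epsilon/k} + O(\epsilon)$ when $\delta_i > 0$). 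I would use the uniform Lipschitz continuity of $\E[C_i(m_i,\cdot)]$ and $\E[C_j(m_i,\cdot)]$ on this bounded sliver (the denominators $\Delta$ stay bounded away from $0$ since $\delta_i > 0$ under Proposition \ref{pro-opt-P2}) to show that both $|\E[C_i(m_i,\Delta)] - m_i p_0|$ and $|\E[C_j(m_i,\Delta)] - \E[C_j(m_i,\delta_i)]|_{\epsilon = 0}|$ are $O(\sqrt{\epsilon})$ throughout the window, whereas $\E[C_j(m_i,\delta_i)]|_{\epsilon=0} - m_i p_0$ equals the fixed positive gap from the inside-case analysis above. Choosing $\epsilon$ small compared with this gap forces $\E[C_j(m_i,\Delta)] > \E[C_i(m_i,\Delta)]$ on $(\delta_i, \Delta_{th,i}]$, and for $\Delta > \Delta_{th,i}$ the same monotonicity yields $\E[C_j(m_i,\Delta)] \geq m_i p_0$, so that $\min\{\E[C_j(m_i,\Delta)], m_ip_0\} = m_ip_0 = \min\{\E[C_i(m_i,\Delta)], m_ip_0\}$.

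Finally, since the set of ordered pairs $(i,j)$ with $i \neq j$ is finite, taking $\hat{\epsilon}$ to be the minimum of the per-pair thresholds produced in the three paragraphs above gives a single $\hat{\epsilon} > 0$ such that $\Phi''$ satisfies (\ref{eqn-mechanism-design}) and (\ref{eqn-contract-constraint}) for every $0 < \epsilon < \hat{\epsilon}$. The technical heart of the argument is the $O(\sqrt{\epsilon})$ control of the elasticity-induced sliver $(\delta_i, \Delta_{th,i}]$; the remainder is a clean case split organized by Claims \ref{claim-monotonicity} and \ref{claim-feasible}.
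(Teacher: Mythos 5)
Your plan tracks the paper's own proof very closely: the same two ingredients (Claim~\ref{claim-monotonicity} for monotonicity of $\E[C_j(m_i,\cdot)]$ and Claim~\ref{claim-feasible} for $\Delta_{i,j}<\delta_i$) drive the same case split by whether $m_i$ lies inside $[m_j(1-\delta_j),m_j(1+\delta_j)]$, and the finite minimum over ordered pairs produces $\hat\epsilon$. The one place you genuinely diverge is the transitional sliver $\Delta\in(\delta_i,\Delta_{th,i}]$, where you invoke the asymptotic $\Delta_{th,i}-\delta_i=O(\sqrt{\epsilon})$ together with uniform Lipschitz bounds on the cost functions. This works, but it is heavier than necessary, and it is worth noting the paper's cleaner route: by the very definition of $\Delta_{th,i}$ one already has $\E[C_i(m_i,\Delta)]\leq m_ip_0$ on $(\delta_i,\Delta_{th,i}]$, while on the other side Claim~\ref{claim-monotonicity} gives that $\E[C_j(m_i,\Delta)]$ is non-decreasing in $\Delta$, so $\E[C_j(m_i,\Delta)]\geq\E[C_j(m_i,\delta_i)]>m_ip_0$ once $\epsilon$ is small enough to preserve the strict gap at the single point $\Delta=\delta_i$. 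That argument needs continuity of a single scalar quantity in $\epsilon$ and no control of the sliver's width, so you get the same conclusion without any $O(\sqrt{\epsilon})$ accounting. (Your explicit treatment of $\Delta>\Delta_{th,i}$ via the $\min$ in~(\ref{eqn-mechanism-design}) is a small point the paper leaves implicit, and it is a harmless addition.)
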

\begin{proof}
To find a feasible solution $\Phi''$ (around $\Phi'$), we need to make sure that $\E[C_i(m_i,\Delta)]\leq \E[C_{i'}(m_i,\Delta)]$ for any $\Delta\in [0,\Delta_{th,i}]$ at the solution. We will prove that there exist a range $[0,\epsilon_{th,i,i'}], \epsilon_{th,i,i'}>0$ of $\epsilon$ (i.e. $\epsilon\in [0,\epsilon_{th,i,i'}]$) such that $\E[C_i(m_i,\Delta)]\leq \E[C_{i'}(m_i,\Delta)]$ under the solution near $\Phi'$ (each contract option's price is $p_0-\epsilon$). There are two cases. In the first case that $\delta_i<1$ at solution $\Phi'$, we first prove $\E[C_{i'}(m_i,\Delta)](\Delta=\delta_i, \epsilon=0)>m_ip_0$. To do this, there are two subcases depending on whether $m_i\in [m_{i'}(1-\delta_{i'}), m_{i'}(1+\delta_{i'})]$ or not. In the first subcase that $m_i\notin [m_{i'}(1-\delta_{i'}), m_{i'}(1+\delta_{i'})]$, we have $\E[C_{i'}(m_i,\Delta)](\Delta=\delta_i,\epsilon=0)>m_ip_0$ by Claim \ref{claim-monotonicity}. In the second subcase that $m_i\in [m_{i'}(1-\delta_{i'}), m_{i'}(1+\delta_{i'})]$, we have $\Delta_{i,i'}<\delta_i$ by Claim \ref{claim-feasible}. We can prove $\E[C_{i'}(m_i,\Delta)](\Delta=\delta_i, \epsilon=0)>m_ip_0$ as illustrated in Figure \ref{fig-feasible-contract-illustration}. We have $\E[C_{i'}(m_i,\Delta)](\Delta=\Delta_{i,i'}, \epsilon=0)=m_ip_0$. Thus, since $\Delta_{i,i'}<\delta_i$ by Claim \ref{claim-feasible} and $\E[C_{i'}(m_i,\Delta)]$ is strictly increasing in $\Delta\in [\Delta_{i,i'}, \delta_i]$ by Claim \ref{claim-monotonicity}, we have $\E[C_{i'}(m_i,\Delta)](\Delta=\delta_i, \epsilon=0)>m_ip_0$. Thus, since $\E[C_{i'}(m_i,\Delta)](\Delta=\delta_i,\epsilon\geq 0)$ is continuous as a function of $\epsilon$, there exists a range $[0,\epsilon_{th,i,i'}], \epsilon_{th,i,i'}>0$ of $\epsilon$ such that $\E[C_{i'}(m_i,\Delta)](\Delta=\delta_i,\epsilon\in [0,\epsilon_{th,i,i'}])> m_ip_0$ as illustrated in Figure \ref{fig-feasible-contract-illustration}. Under such a $\epsilon\in [0,\epsilon_{th,i,i'}]$, we will show that $\E[C_i(m_i,\Delta)]\leq \E[C_{i'}(m_i,\Delta)]$ for any $\Delta\in [0,\Delta_{th,i}]$ as illustrated in Figure \ref{fig-feasible-contract-illustration}. Since at the solution near $\Phi'$, $\Delta_{th,i}=\min(1,\frac{k\delta_i+2(p_0-p)+\sqrt{(k\delta_i+2(p_0-p))^2-k^2\delta^2_i}}{k})$ and $p=p_0-\epsilon$, we have $\Delta_{th,i}\geq\delta_i$. For $\Delta\in [0,\delta_i]$, we have $\E[C_{i'}(m_i,\Delta)]\geq m_ip$ by Claim \ref{claim-monotonicity} and $\E[C_i(m_i,\Delta)]=m_ip$, and therefore $\E[C_{i'}(m_i,\Delta)]\geq \E[C_i(m_i,\Delta)]$. If $\Delta_{th,i}>\delta_i$, for $\Delta\in (\delta_i,\Delta_{th,i}]$, since $\Delta_{i,i'}<\delta_i$ by Claim \ref{claim-feasible}, $\E[C_{i'}(m_i,\Delta)]$ is strictly increasing in $\Delta\in [\delta_i, \Delta_{th,i}]$ by Claim \ref{claim-monotonicity} and $\epsilon\in [0, \epsilon_{th,i,i'}]$, we have $\E[C_{i'}(m_i,\Delta)]> \E[C_{i'}(m_i,\Delta)](\Delta=\delta_i, \epsilon\in [0, \epsilon_{th,i,i'}])\geq m_ip_0$. Thus, since $\E[C_i(m_i,\Delta)]\leq m_ip_0$ for $\Delta\in [\delta_i,\Delta_{th,i}]$ by the definition of $\Delta_{th,i}$, we have $\E[C_{i'}(m_i,\Delta)]> \E[C_i(m_i,\Delta)]$ for $\Delta\in (\delta_i,\Delta_{th,i}]$. In the second case that $\delta_i=1$ at solution $\Phi'$, we have $\Delta_{th,i}=\delta_i=1$. For any solution near $\Phi'$, we have $\E[C_i(m_i,\Delta)]=m_ip$ for $\Delta\in [0,\Delta_{th,i}]$. Thus, since $\E[C_{i'}(m_i,\Delta)]\geq m_ip$ for $\Delta\in [0,\Delta_{th,i}]$ at the solution by Claim \ref{claim-monotonicity}, we have $\E[C_{i'}(m_i,\Delta)]\geq \E[C_i(m_i,\Delta)]$. 

For each pair of $i,i'\in [1,n], i'\neq i$, there exists a range $[0,\epsilon_{th,i,i'}]$ of $\epsilon$ such that $\E[C_i(m_i,\Delta)]\leq \E[C_{i'}(m_i,\Delta)]$ under the solutions $\Phi''$ (around $\Phi'$). Let $\hat{\epsilon}=\min\limits_{i,i'\in[1,n]}\epsilon_{th,i,i'}$. $\forall \epsilon\in [0,\hat{\epsilon}]$, we have $\E[C_{i'}(m_i,\Delta)]\geq \E[C_i(m_i,\Delta)]$ for any $\Delta\in [0,\Delta_{th,i}]$. Thus, the lemma holds. 
\end{proof}
\begin{figure}[]
\vspace{-0.1in}
\centering
\includegraphics[height=6cm, width=8cm]{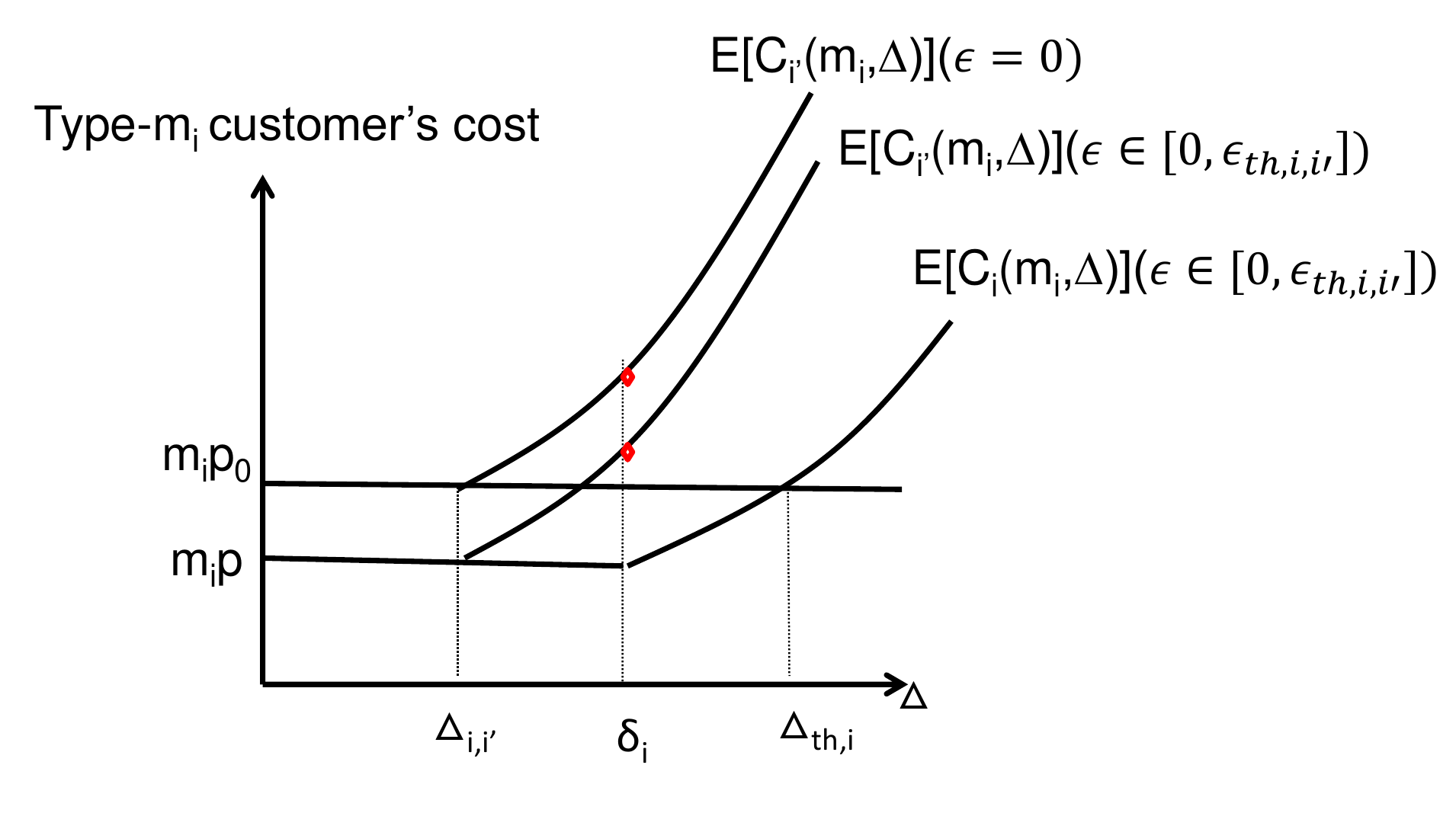}
\caption{Illustration of the intuition of proof in Lemma \ref{lem-feasible-solution} }
\label{fig-feasible-contract-illustration}
\end{figure}
Next, we will prove Lemma \ref{lem-better-solution}.

\begin{proof}
As shown in Lemma \ref{lem-feasible-solution}, there exists an $\hat{\epsilon}>0$ such that for all $0<\epsilon<\hat{\epsilon}$, the solutions $\Phi''$ (around $\Phi'$) are feasible. Under these feasible solutions, we have 
\begin{eqnarray}
\Delta_{th,i}&=&\min(1,\frac{k\delta_i+2(p_0-p)+\sqrt{(k\delta_i+2(p_0-p))^2-k^2\delta^2_i}}{k})\nonumber\\
             &=&\min(1,\delta_i+\frac{2\epsilon}{k}+\frac{2\sqrt{\epsilon^2+k\delta_i\epsilon}}{k})
\end{eqnarray}
where $\delta_i=\min(1,\frac{m_n}{m_i}-\frac{1}{2})$.
If $\delta_i=1$, we have $\Delta_{th,i}=1$. Otherwise if $0\leq \delta_i<1$, we have 
\begin{equation}\label{eqn-Delta-th-i-epsilon}
\Delta_{th,i}=
\begin{cases}
\delta_i+\frac{2\epsilon}{k}+\frac{2\sqrt{\epsilon^2+k\delta_i\epsilon}}{k},0\leq\epsilon\leq\frac{k(1-\delta_i)^2}{4}\\
1, \epsilon>\frac{k(1-\delta_i)^2}{4}
\end{cases}
\end{equation}
For any $0\leq\delta_i\leq 1$, we have $\Delta_{th,i}$ as a function of $\epsilon$ is continuous at $\epsilon=0$.

To compute the supplier's profit at solution (each option's price is $p_0-\epsilon$, $\epsilon\in [0, \epsilon_{th}]$) near $\Phi'$, we need to compute the supplier's revenue, and energy generation cost, capacity cost, respectively. To to do this, we first analyze the option choice of each type-$m_i$ customer with $\Delta$. For each type-$m_i$ customer with $\Delta\leq\delta_i$, it may choose option $i$ or other option $i'$. If it chooses other option $i'$, it must have $m_i\in [m_{i'}(1-\delta_{i'}),m_{i'}(1+\delta_{i'})]$ and $\Delta\leq\Delta_{i,i'}$. To show this, we have $\E[C_i(m_i,\Delta)]=m_ip$. If $m_i\notin [m_{i'}(1-\delta_{i'}),m_{i'}(1+\delta_{i'})]$, we have $\E[C_{i'}(m_i,\Delta)]>m_ip$ by Claim \ref{claim-monotonicity} and $\E[C_{i'}(m_i,\Delta)]>\E[C_i(m_i,\Delta)]$, and thus the customer will not choose option $i'$. If $m_i\in [m_{i'}(1-\delta_{i'}),m_{i'}(1+\delta_{i'})], \Delta>\Delta_{i,i'}$, we have $\E[C_{i'}(m_i,\Delta)]>m_ip$ by Claim \ref{claim-monotonicity}, and $\E[C_{i'}(m_i,\Delta)]>\E[C_i(m_i,\Delta)]$, and thus it will not choose option $i'$. Thus, we have if type-$m_i$ customer with $\Delta\in [0,\delta_i]$ chooses other option $i'$, it must have $m_i\in [m_{i'}(1-\delta_{i'}),m_{i'}(1+\delta_{i'})]$ and $\Delta\leq\Delta_{i,i'}$. If $\Delta_{th,i}>\delta_i$, for each type-$m_i$ customer with $\Delta\in (\delta_i,\Delta_{th,i}]$, it chooses option $i$. This is because by Lemma \ref{lem-feasible-solution}, we have $\E[C_{i'}(m_i,\Delta\in (\delta_i,\Delta_{th,i}],\epsilon\in [0,\epsilon_{th}])]>m_ip_0$ and $\E[C_i(m_i,\Delta\in (\delta_i,\Delta_{th,i}],\epsilon\in [0,\epsilon_{th}])]\leq m_ip_0$, and thus $\E[C_{i'}(m_i,\Delta\in (\delta_i,\Delta_{th,i}],\epsilon\in [0,\epsilon_{th}])]>\E[C_i(m_i,\Delta\in (\delta_i,\Delta_{th,i}],\epsilon\in [0,\epsilon_{th}])]$. Thus, it must choose option $i$. For each type-$m_i$ customer with $\Delta>\Delta_{th,i}$, it chooses baseline pricing scheme.

We now compute the supplier's revenue. For type-$m_i$ of customer with $\Delta\in [0,\delta_i]$, it may choose option $i$ and other options. If it choose other option $i'$, its whole demand range must be within option $i'$ contract range, and thus it contributes $m_ip_i$ amount of payment to the supplier. Thus, the supplier's total revenue from type-$m_i$ of customers is $\int_{0}^{\Delta_{th,i}}f(\Delta)m_ip_id\Delta+\int_{\Delta_{th,i}}^{1}f(\Delta)m_ip_0d\Delta=m_ip_i\Delta_{th,i}+m_ip_0(1-\Delta_{th,i})$.  Next, we compute the supplier's energy generation cost. For type-$m_i$ of customer with $\Delta\in [0,\delta_i]$, its mean demand is $m_i$ and thus the supplier's total energy generation cost from type-$m_i$ of customers is $c_0m_i$. 

We now analyze the estimated capacity of a type-$m_i$ customer considering it may choose other types of options. At the solution $\Phi''$ (around $\Phi'$), let $\bar{C}_i$ be the expected estimated capacity of type-$m_i$ of customers.
Similarly with the analysis of the estimated capacity of $m_1$ type customers by the supplier, we can derive each $C_i$. Note that at the solution near $\Phi'$, we can derive $m_1(1+\delta_1)<m_2(1+\delta_2)<\ldots<m_n(1+\delta_n)$ and thus type-$m_i$ customers does not choose option $i', (i'<i)$, since we assume the customer chooses the contract option that results in the lowest profit to the supplier in the worst case as mentioned in Section \ref{sec-challenge-II}. We also have $m_1(1-\delta_1)\leq m_2(1-\delta_2)\leq\ldots\leq m_n(1-\delta_n)$. To compute $\bar{C}_i$, there are three cases.

In the first case, consider $\frac{m_n}{m_1}\leq\frac{3}{2}$. We have $\Delta_{i,i'}=1-\frac{m_{i'}(1-\delta_{i'})}{m_i}$, $i'=i,\ldots, n$. Thus, since $\delta_{i'}=\frac{m_n}{m_{i'}}-\frac{1}{2}, i'=i,\ldots, n$, we have $\delta_i=\Delta_{i,i}>\Delta_{i,i+1}>\ldots>\Delta_{i,n}>0$. For a type-$m_i$ of customer with $\Delta\in [\Delta_{i,i'+1},\Delta_{i,i'}], i'=i,\ldots, n-1$, we have $\E[C_i(m_i,\Delta)]=\E[C_{i+1}(m_i,\Delta)]=\ldots=\E[C_{i'}(m_i,\Delta)]=m_ip<\E[C_{i''}(m_i,\Delta)], i''>i'$. The supplier can choose any option among options $i,\ldots, i'$. Since $m_i(1+\delta_i)<m_{i+1}(1+\delta_{i+1})<\ldots<m_{i'}(1+\delta_{i'})$, we assume type-$m_i$ of customer chooses contract option $i'$ since the estimated capacity of the customer is $m_{i'}(1+\delta_{i'})$ which is the worst for the supplier, the revenue it contributes to the supplier is $m_ip$ whatever option it chooses, the mean energy consumption it results in is $m_i$ whatever option it chooses, and therefore the profit it chooses option $i'$ is the worst. For a type-$m_i$ of customer with $\Delta\in [0,\Delta_{i,n}]$, we have $\E[C_i(m_i,\Delta)]=\E[C_{i+1}(m_i,\Delta)]=\ldots=\E[C_n(m_i,\Delta)]=m_ip$ and we assume it chooses option $n$ for the worst-case profit of the supplier. Besides, type-$m_i$ customer with $\Delta\in [\delta_i, \Delta_{th,i}]$ subscribes to option $i$. Moreover, type-$m_i$ customer with $\Delta\in [\Delta_{th,i}, 1]$ subscribes to the baseline pricing scheme. Thus, we have
\begin{eqnarray*}
\bar{C}_i&=&2m_n(1-\Delta_{th,i})+m_i(1+\delta_i)(\Delta_{th,i}-\delta_i)+\sum_{i'=i}^{n-1}m_{i'}(1+\delta_{i'})(\Delta_{i,i'}-\Delta_{i,i'+1})+m_n(1+\delta_n)\Delta_{i,n}\\
   &=&2m_n(1-\Delta_{th,i})+m_i(1+\delta_i)(\Delta_{th,i}-\Delta_{i,i+1})+\sum_{i'=i+1}^{n-1}m_{i'}(1+\delta_{i'})(\Delta_{i,i'}-\Delta_{i,i'+1})+m_n(1+\delta_n)\Delta_{i,n}\\
\end{eqnarray*} 
The last equality follows since $\Delta_{i,i}=\delta_i$.

In the second case, consider $\frac{m_n}{m_i}\geq 2$. Assume $\frac{m_n}{m_j}>\frac{3}{2}$ and $\frac{m_n}{m_{j+1}}\leq\frac{3}{2}, j=i,\ldots, n-1$. We have $\delta_{i'}=1, i'=i,\ldots, j$, $\delta_{i'}=\frac{m_n}{m_{i'}}-\frac{1}{2}, i'=j+1,\ldots, n$ by Proposition \ref{pro-opt-P2}. We also have $m_i(1-\delta_i)\leq m_{i+1}(1-\delta_{i+1})\leq\ldots\leq m_n(1-\delta_n)$. Assume $m_i\geq m_{j'}(1-\delta_{j'})$ and $m_i<m_{j'+1}(1-\delta_{j'+1})$, $j'=j+1, \ldots, n-1$. We have $\delta_i=\Delta_{i,i}=\Delta_{i,i+1}=\ldots=\Delta_{i,j}=1>\ldots>\Delta_{i,j'}\geq 0$. Since $\E[C_{i'}(m_i,\Delta)]>m_ip, i'=j'+1,\ldots, n$ by Claim \ref{claim-monotonicity} and $\E[C_i(m_i,\Delta)]=m_ip$ for $\Delta\in [0,1]$, type-$m_i$ of customer will not choose options $j'+1,\ldots, n-1$. For a type-$m_i$ of customer with $\Delta\in [\Delta_{i,i'+1},\Delta_{i,i'}], i'=j+1,\ldots, j'-1$, we have $\E[C_i(m_i,\Delta)]=\E[C_{i+1}(m_i,\Delta)]=\ldots=\E[C_{i'}(m_i,\Delta)]<\E[C_{i''}(m_i,\Delta)], i''>i'$ and the customer may choose any contract option among options $i,\ldots, i'$. Since $m_i(1+\delta_i)<m_{i+1}(1+\delta_{i+1})<\ldots<m_{i'}(1+\delta_{i'})$, we assume type-$m_i$ of customer choose contract option $i'$ since the estimated capacity of the customer is $m_{i'}(1+\delta_{i'})$ which causes the worst-case profit for the supplier. For a type-$m_i$ of customer with $\Delta\in [0,\Delta_{i,j'}]$, we have $\E[C_i(m_i,\Delta)]=\E[C_{i+1}(m_i,\Delta)]=\ldots=\E[C_{j'}(m_i,\Delta)]=m_ip$ and we assume it chooses option $j'$ for the worst-case profit of the supplier. Besides, type-$m_i$ customer with $\Delta\in [\Delta_{i,j+1}, 1]$ subscribes to option $j$ for the worst case-profit of the supplier. 
\begin{eqnarray*}
\bar{C}_i&=&m_j(1+\delta_j)(1-\Delta_{i,j+1})+\sum_{i'=j+1}^{j'-1}m_{i'}(1+\delta_{i'})(\Delta_{i,{i'}}-\Delta_{i,i'+1})+m_{j'}(1+\delta_{j'})\Delta_{i,j'}\\
 &=&2m_n(1-\Delta_{th,i})+m_i(1+\delta_i)(\Delta_{th,i}-1)+m_j(1+\delta_j)(1-\Delta_{i,j+1})\\
 & &+\sum_{i'=j+1}^{j'-1}m_{i'}(1+\delta_{i'})(\Delta_{i,{i'}}-\Delta_{i,i'+1})+m_{j'}(1+\delta_{j'})\Delta_{i,j'}\\
\end{eqnarray*}
The second equality follows since $\Delta_{th,i}=\delta_i=1$. 

In the third case, consider $\frac{3}{2}<\frac{m_n}{m_i}\leq 2$. Assume $\frac{m_n}{m_j}>\frac{3}{2}$ and $\frac{m_n}{m_{j+1}}\leq\frac{3}{2}, j=i,\ldots, n-1$. We have $\delta_{i'}=1, i'=i,\ldots, j$, $\delta_{i'}=\frac{m_n}{m_{i'}}-\frac{1}{2}, i'=j+1,\ldots, n$ by Proposition \ref{pro-opt-P2}. We also have $\delta_i=\Delta_{i,i}=\Delta_{i,i+1}=\ldots=\Delta_{i,j}=1>\ldots>\Delta_{i,n}>0$. For a type-$m_i$ of customer with $\Delta\in [\Delta_{i,i'+1},\Delta_{i,i'}], i'=j+1,\ldots, n-1$, we have $\E[C_i(m_i,\Delta)]=\E[C_{i+1}(m_i,\Delta)]=\ldots=\E[C_{i'}(m_i,\Delta)]<\E[C_{i''}(m_i,\Delta)], i''>i'$ and the customer may choose any contract option among options $j+1,\ldots, i'$. Since $m_i(1+\delta_i)<m_{i+1}(1+\delta_{i+1})<\ldots<m_{i'}(1+\delta_{i'})$, we assume type-$m_i$ of customer choose contract option $i'$ since the estimated capacity of the customer is $m_{i'}(1+\delta_{i'})$ which causes the worst-case profit for the supplier. For a type-$m_i$ of customer with $\Delta\in [0,\Delta_{i,n}]$, we have $\E[C_i(m_i,\Delta)]=\E[C_{i+1}(m_i,\Delta)]=\ldots=\E[C_n(m_i,\Delta)]=m_ip$ and we assume it chooses option $n$ for the worst-case profit of the supplier. Besides, type-$m_i$ customer with $\Delta\in [\Delta_{i,j+1}, 1]$ subscribes to option $j$ for the worst-case profit of the supplier. Thus, we have
\begin{eqnarray*}
\bar{C}_i&=&m_j(1+\delta_j)(1-\Delta_{1,j+1})+\sum_{i'=j+1}^{n-1}m_{i'}(1+\delta_{i'})(\Delta_{i,i'}-\Delta_{i,i'+1})+m_n(1+\delta_n)\Delta_{i,n}\\
   &=&2m_n(1-\Delta_{th,i})+m_i(1+\delta_i)(\Delta_{th,i}-1)+m_j(1+\delta_j)(1-\Delta_{1,j+1})\\
	 & &+\sum_{i'=j+1}^{n-1}m_{i'}(1+\delta_{i'})(\Delta_{i,i'}-\Delta_{i,i'+1})+m_n(1+\delta_n)\Delta_{i,n}\\
\end{eqnarray*}
The second equality follows since $\Delta_{th,i}=\delta_i=1$. 
Overall, we can write 
\begin{equation}
\bar{C}_i=2m_n(1-\Delta_{th,i})+m_i(1+\delta_i)(\Delta_{th,i}-\hat{\Delta}_{th,i})+C'_i  
\end{equation}
where
\begin{equation}
C'_i=
\begin{cases}
\sum_{i'=i+1}^{n-1}m_{i'}(1+\delta_{i'})(\Delta_{i,i'}-\Delta_{i,i'+1})+m_n(1+\delta_n)\Delta_{i,n}, \frac{m_n}{m_i}\leq\frac{3}{2}\\
m_j(1+\delta_j)(1-\Delta_{i,j+1})+\sum_{i'=j+1}^{j'-1}m_{i'}(1+\delta_{i'})(\Delta_{i,{i'}}-\Delta_{i,i'+1})+m_{j'}(1+\delta_{j'})\Delta_{i,j'}, \frac{m_n}{m_i}\geq 2\\
m_j(1+\delta_j)(1-\Delta_{i,j+1})+\sum_{i'=j+1}^{n-1}m_{i'}(1+\delta_{i'})(\Delta_{i,i'}-\Delta_{i,i'+1})+m_n(1+\delta_n)\Delta_{i,n}, \frac{m_n}{m_i}\in [\frac{3}{2}, 2]
\end{cases}
\end{equation}
 and 
\begin{equation}
\hat{\Delta}_{th,i}=
\begin{cases}
\Delta_{i,i+1}, \frac{m_n}{m_i}\leq\frac{3}{2}\\
1,\frac{m_n}{m_i}>\frac{3}{2}
\end{cases}
\end{equation}
Note that $C'_i, \hat{\Delta}_{th,i}$ are independent of $\epsilon$ since $\Delta_{i,i'}$ is independent of $\epsilon$. $C'_i$ can be regarded as the supplier's estimated capacity of type-$m_i$ customers choosing other options, and $\hat{\Delta}_{th,i}$ can be regarded as the threshold of type-$m_i$ of customers which choose option $i$ or other options. In particular, if $\Delta\leq\hat{\Delta}_{th,i}$, the customer chooses other options, and if $\hat{\Delta}_{th,i}\leq\Delta\leq\Delta_{th,i}$, the customer chooses option $i$.

Thus, the supplier's worst-case profit under the solution (each contract option's price is $p_i=p=p_0-\epsilon$) is 
\begin{eqnarray*}
P(\Phi)&=&N(h(m_1)(m_1(p_1\Delta_{th,1}+p_0(1-\Delta_{th,1}))-\hat{c}(C'_1+m_1(1+\delta_1)(\Delta_{th,1}-\hat{\Delta}_{th,1})+2m_n(1-\Delta_{th,1}))-c_0m_1)\\
		& &\ldots\\
    & &+Nh(m_n)(m_n(p_n\Delta_{th,n}+p_0(1-\Delta_{th,n}))-\hat{c}(C'_n+m_n(1+\delta_n)(\Delta_{th,n}-\hat{\Delta}_{th,n})+2m_n(1-\Delta_{th,n}))-c_0m_n)\\
		&=&\sum_{i=1}^{n}Nh(m_i)\bigg(m_i\big((p_0-\epsilon)\Delta_{th,i}+p_0(1-\Delta_{th,i})\big)-\hat{c}\big(C'_i+m_i(1+\delta_i)(\Delta_{th,i}-\hat{\Delta}_{th,i})+2m_n(1-\Delta_{th,i})\big)-c_0m_i\bigg)\\
    &=&\sum_{i=1}^{n}Nh(m_i)\bigg(\Delta_{th,i}\big(-m_i\epsilon-\hat{c}m_i(1+\delta_i)+2m_n\hat{c}\big)+m_ip_0-\hat{c}\big(C'_i-m_i(1+\delta_i)\hat{\Delta}_{th,i}+2m_n\big)-c_0m_i\bigg)\\
\end{eqnarray*}
We have
\begin{equation}
\frac{\partial P(\Phi)}{\partial\epsilon}=\sum_{i=1}^{n} Nh(m_i)\big((-m_i\epsilon-\hat{c}m_i(1+\delta_i)+2m_n\hat{c})\frac{\partial\Delta_{th,i}}{\partial\epsilon}-m_i\Delta_{th,i}\big)
\end{equation}
We will prove $\frac{\partial P(\Phi)}{\partial \epsilon}>0$ at $\epsilon=0$. Note that $\Delta_{th,i}$ is continuous at $\epsilon=0$ for $i\in [1,n]$ and $P(\Phi)$ is continuous at $\epsilon=0$. For each $i\in [1,n-1]$, we have either $\delta_i=1$ or $\delta_i<1$. In the former case, we have $\frac{\partial \Delta_{th,i}}{\partial\epsilon}=0$ at $\epsilon=0$ and in the latter case we have $\frac{\partial \Delta_{th,i}}{\partial\epsilon}=\frac{2}{k}+\frac{2\epsilon+k\delta_i}{k\sqrt{\epsilon^2+k\epsilon\delta_i}}=+\infty$ at $\epsilon=0$ since $\delta_i=\min(\frac{m_n}{m_i}-\frac{1}{2},1)\geq\frac{1}{2}>0$ by Proposition \ref{pro-opt-P2}. Thus, we have $\frac{\partial \Delta_{th,i}}{\partial\epsilon}\geq 0$ for $i\in [1,n-1]$. For $i=n$, we have $\delta_i=\frac{1}{2}$. Thus, by Equation \ref{eqn-Delta-th-i-epsilon}, we have $\frac{\partial \Delta_{th,i}}{\partial\epsilon}=\frac{2}{k}+\frac{2\epsilon+\frac{1}{2}k}{k\sqrt{\epsilon^2+\frac{1}{2}k\epsilon}}=+\infty$ at $\epsilon=0$. Thus, since $-m_i\epsilon-\hat{c}m_i(1+\delta_i)+2m_n\hat{c}=-\hat{c}m_i(1+\delta_i)+2m_n\hat{c}>0$ at $\epsilon=0$, and $m_i\Delta_{th,i}=m_i\delta_i$ is a finite number, we have $\frac{\partial P(\Phi)}{\partial \epsilon}=+\infty>0$ at $\epsilon=0$ for each $i\in [1,n]$. Thus, since $P(\Phi)$ is a continuous function as $\epsilon\geq 0$, there must exist a range $[0,\epsilon_0], \epsilon_0\leq\hat{\epsilon}$ of $\epsilon$ such that the solution $\Phi''$ (around $\Phi'$) is both feasible and produces at least the amount of profit $\Phi'$ results in under the ``pessimistic" assumption. In other words, the lemma holds.
\end{proof}

\end{document}